\def\bbR{\mathbb{R}}
\newtheorem{observation}{Observation}
\begin{document}

\title{A Linear-Time Algorithm for Radius-Optimally Augmenting Paths in a Metric Space\thanks{A preliminary version of this paper will appear in the Proceedings of the 16th Algorithms and Data Structures Symposium (WADS 2019).}}
\author{
Christopher Johnson
\and
Haitao Wang
}
\institute{
Department of Computer Science\\
Utah State University, Logan, UT 84322, USA\\
\email{christopherajohnson42@gmail.com, haitao.wang@usu.edu}\\
}

\maketitle

\pagestyle{plain}
\pagenumbering{arabic}
\setcounter{page}{1}

\begin{abstract}
Let $P$ be a path graph of $n$ vertices embedded in a metric space. We consider the problem of adding a new edge to $P$ to minimize the radius of the resulting graph. Previously, a similar problem for minimizing the diameter of the graph was solved in $O(n\log n)$ time. To the best of our knowledge, the problem of minimizing the radius has not been studied before. In this paper, we present an $O(n)$ time algorithm for the problem, which is optimal.
\end{abstract}

\section{Introduction}
\label{sec:intro}

In this paper, we consider the problem of augmenting a path graph embedded in a metric space by adding a new edge so that the radius of the new graph is minimized.

Let $P$ be a path graph of $n$ vertices, $v_1,v_2,\ldots,v_n$, ordered from one end to the other. Let $e(v_i,v_{i+1})$ denote the edge connecting two vertices $v_{i}$ and $v_{i+1}$ for $i\in [1,n-1]$.
Let $V$ be the set of all vertices of $P$. We assume that $P$ is embedded in a metric space, i.e., $(V,|\cdot|)$ is a metric space and $|v_iv_j|$ is the distance of any two vertices $v_i$ and $v_j$ of $V$.
Specifically, the following properties hold: (1) the triangle inequality:
$|v_iv_k|+|v_kv_j|\geq |v_iv_j|$; (2) $|v_iv_j|=|v_jv_i|\geq 0$; (3) $|v_iv_j|=0$ iff $i=j$.
For each edge $e(v_{i},v_{i+1})$ of $P$, its {\em length} is equal to
$|v_{i}v_{i+1}|$.

Suppose we add a new edge $e$ connecting two vertices $v_i$ and $v_j$ of $P$, and let $P\cup \{e\}$ denote the resulting graph. Note that a {\em point} of $P$ can be either a vertex of $P$ or in the interior of an edge.
A point $c$ on $P\cup \{e\}$ is called a {\em center} if it minimizes
the largest shortest path length from $c$ to all vertices of $P$, and
the largest shortest path length from the center to all vertices is called the {\em radius}.
Our problem is to add a new edge $e$ to connect two vertices of $P$ such that the radius of $P\cup \{e\}$ is minimized. We refer to the problem as {\em the radius-optimally augmenting path
problem}, or ROAP for short.

To the best of our knowledge, the problem has not been studied before.
In this paper, we present an $O(n)$ time algorithm. We assume that the distance $|v_iv_j|$ can be obtained in $O(1)$ time for any two vertices $v_i$ and $v_j$ of $P$.

As a by-product of our techniques, we
present an algorithm that can compute the radius and the center of
$P\cup\{e\}$  in $O(\log n)$ time for any given new edge $e$, after $O(n)$ time
preprocessing.

%\subsection{Related Work}
%\vspace{-0.1in}

\subsection{Related Work}
A similar problem for minimizing the diameter of the augmenting graph was studied before. Gro{\ss}e et
al.~\cite{ref:GrobeFa15} first gave an $O(n\log^3 n)$ time algorithm, and later Wang~\cite{ref:WangAn18} solved the problem in $O(n\log n)$ time.

Some variations of the diameter problem have also been considered in the literature.
If the path $P$ is in the Euclidean space $\bbR^d$ for a constant $d$, then Gro{\ss}e et
al.~\cite{ref:GrobeFa15} gave an $O(n+1/\epsilon^3)$ time
algorithm that can find a $(1+\epsilon)$-approximate solution for
the diameter problem, for any $\epsilon>0$. If $P$ is in the Euclidean plane $\bbR^2$,
De Carufel et al.~\cite{ref:DeCarufelMi16} gave a linear time algorithm for
adding a new edge to $P$ to minimize the {\em continuous diameter}
(i.e., the diameter is defined with respect to all points of $P$, not only vertices).
For a geometric tree $T$ of $n$ vertices embedded in the Euclidean plane, De Carufel et al.~\cite{ref:DeCarufelMi17} gave an $O(n\log n)$ time algorithm for
adding a new edge to $T$ to minimize the {\em continuous} diameter. For the discrete diameter problem where $T$ is embedded in a metric space, Gro{\ss}e et al.~\cite{ref:GrobeFa16} first proposed an $O(n^2\log n)$ time algorithm and later Bil{\`o}~\cite{ref:BiloAl18} solved the problem in $O(n\log n)$ time.
%Unfortunately, both algorithms~\cite{ref:DeCarufelMi16,ref:DeCarufelMi16Tree}, which are particularly for the continuous diameter in the Euclidean plane, cannot be generalized to solve our problem for the ``discrete'' diameter in the more general metric space.
Oh and Ahn~\cite{ref:OhA16} studied the problem on a general tree (i.e., the tree is not embedded in a metric space) and gave $O(n^2\log^3 n)$ time algorithms for both the discrete and continuous versions of the diameter problem, and later Bil{\`o}~\cite{ref:BiloAl18} gave an improved algorithm of $O(n^2)$ time for the discrete case, which is optimal.

The more general problem of adding $k$ edges to a graph $G$ so that the diameter of the resulting graph is minimized has also been considered before. The problem is NP-hard \cite{ref:SchooneDi97} and some other variants are even W[2]-hard \cite{ref:FratiAu15,ref:GaoTh13}. Approximation algorithms have been proposed~\cite{ref:BiloIm12,ref:FratiAu15,ref:LiOn92}.
The upper and lower bounds on the diameters
of the augmented graphs were also studied, e.g., \cite{ref:AlonDe00,ref:IshiiAu13}.
Bae {\em et al.}~\cite{ref:BaeSh17} considered the problem of adding $k$ shortcuts to a circle in the plane to minimize the diameter of the resulting graph.

%Some more general problems were also studied before, e.g., see
%\cite{ref:AlonDe00,ref:BiloIm12,ref:DemaineMi10,ref:FratiAu15,ref:GaoTh13,ref:IshiiAu13,ref:LiOn92,ref:SchooneDi97} and the references therein.

Like the diameter, the radius is a critical metric of network performance, which
measures the worst-case cost between a ``center'' and all other nodes. Therefore, our problem of
augmenting graphs to minimize the radius potentially has many applications.
As an example, suppose there is a highway that connects several cities and we want to build a facility along the highway to provide certain service for all these cities. In order to reduce the transportation time, we plan to build a new highway connecting two cities such that the radius (i.e., the maximum distance from the cities to the facility located at the center) is as small as possible.

\subsection{Our Approach}
%\vspace{-0.1in}
%\paragraph{Our approach.}
Note that in general the radius of $P\cup \{e\}$
is not equal to the diameter divided by two. For example, suppose
$e$ connects $v_1$ and $v_n$ (i.e., $P\cup \{e\}$ is a cycle).
Assume that the edges of the cycle have the same length and $n$ is
even. Suppose the total length of the cycle is $1$. Then, the diameter of the
cycle is $1/2$ while the radius is $(1-1/n)/2$, very close to the diameter.

\begin{figure}[t]
\begin{minipage}[t]{\textwidth}
\begin{center}
\includegraphics[height=1.3in]{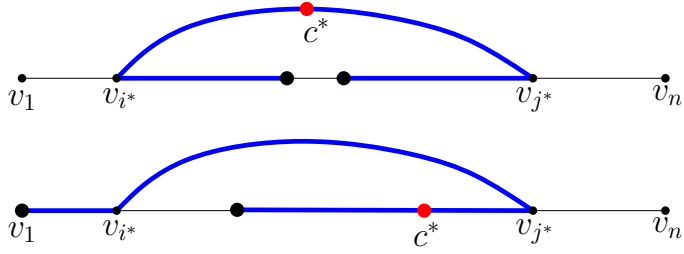}
\caption{\footnotesize Illustrating two configurations for the optimal solution, where $c^*$ is the center and the thick (blue) paths are shortest paths from $c^*$ to its two farthest vertices, depicted by larger points. In the top configuration,
$c^*$ is on the new edge $e$ and both farthest vertices are on the sub-path of $P$ between $v_{i^*}$ and $v_{j^*}$. In the bottom configuration, $c^*$ is on the sub-path of $P$ between $v_{i^*}$ and $v_{j^*}$; $v_1$ is a farthest vertex of $c^*$ and the other one is on the sub-path of $P$ between $v_{i^*}$ and $v_{j^*}$.
There are also other configurations, e.g.,
$c^*$ is on the sub-path of $P$ between $v_1$ and $v_{i^*}$.}
\label{fig:config}
\end{center}
\end{minipage}
\vspace{-0.15in}
\end{figure}

One straightforward way to solve the problem ROAP is to try all edges
$e$ connecting $v_i$ and $v_j$ for all $i,j\in [1,n]$, which would
take $\Omega(n^2)$ time. We instead use the following approach. Suppose an optimal edge $e$ connecting two vertices $v_{i^*}$ and $v_{j^*}$. Depending on the locations of the center $c^*$ and its two farthest vertices in $P\cup \{e\}$, there are several possible {\em configurations} for the optimal solution (e.g., see Fig.~\ref{fig:config}).
For each such configuration, we compute the best solution for it in
linear time, such that if there is an optimal solution conforming with the configuration, our solution is also optimal. The efficiency of our approach relies on many observations and certain monotonicity properties, which help us avoid the brute-force method. In fact, our algorithm, which involves several kinds of linear scans,  is relatively simple. The challenge, however, is on discovering and proving these observations and properties. To this end, our main tool is the triangle inequality of the metric space.

\paragraph{Outline.} The remaining paper is organized as follows.
In Section~\ref{sec:pre}, we introduce some notation.
In Section~\ref{sec:algo}, we present our linear time algorithm for ROAP.
In Section~\ref{sec:query}, we discuss our $O(\log n)$ time query algorithm for computing the radius and the center of $P\cup \{e\}$.
%Due to the space limit, some lemma proofs are omitted but can be found in the appendix.
%Section~\ref{sec:conclude} concludes.

%\vspace{-0.1in}
\section{Preliminaries}
\label{sec:pre}
%\vspace{-0.07in}

%In this section, we introduce some notation and observations.
%some of which are from Gro{\ss}e et al.~\cite{ref:GrobeFa15}.

Denote by $e(v_i,v_j)$  the edge connecting two vertices
$v_i$ and $v_j$ for any $i,j\in [1,n]$. The length of
$e(v_i,v_j)$ is $|v_iv_j|$.
%Since the vertices of $P$ are distinct in the metric space, $|v_iv_j|>0$ unless $i=j$.
This implies that for any two points $p$ and $q$ on $e(v_i,v_j)$, the length of the portion of $e(v_i,v_j)$ between $p$ and $q$ is $|pq|$.
Later we will use this property directly without further explanations.

%\begin{observation}
%For any two points $p$ and $q$ on any edge $e(v_i,v_j)$, the length of the portion
%of $e(v_i,v_j)$ between $p$ and $q$ is $|pq|$.
%\end{observation}
%\begin{proof}
%Let $l$ be the length of the portion of $e$ between $p$ and $q$.
%Clearly, $l\geq |pq|$. Let the two incident vertices of $e$ be $v_i$ and $v_j$,
%	respectively.
%
%Assume to the contrary the observation statement is not true. Then,
%$l>|pq|$. Let $\pi$ be a path connecting $v_i$ to $v_j$ by replacing the
%	portion of $e$ between $p$ and $q$ with a direct edge connecting
%	$p$ to $q$ such that the length of the direct edge is $|pq|$.
%	Since $l>|pq|$, we obtain that the length of $\pi$ is larger than
%	the length of $e$, which is $|v_iv_j|$. But this is not possible.
%	\qed
%\end{proof}
%In the following paper we will use the above observation directly without further explanations.

For any two points $p$ and $q$ on $P$, we use $P(p,q)$ to denote the
subpath of $P$ between $p$ and $q$.
Unless otherwise stated, we assume $i\leq j$ for each index pair
$(i,j)$ discussed in the paper.  For any pair $(i,j)$, we use $G(i,j)$ to
denote the new graph $P\cup \{e(v_i,v_j)\}$ and use $C(i,j)$ to denote the cycle
$P(v_i,v_j)\cup e(v_i,v_j)$.
%Let $|C(i,j)|$ denote the total length of $C(i,j)$.
Note that if $j\leq i+1$, then $G(i,j)=P$ and $C(i,j)=P(v_i,v_j)$.

For any graph $G$, we use $d_{G}(p,q)$ to denote the length of the shortest
path between two points $p$ and $q$ in $G$, and we also call
$d_{G}(p,q)$ the {\em distance} between $p$ and $q$ in $G$. In our
paper, $G$ is usually a subgraph of $G(i,j)$, e.g., $P$ or $C(i,j)$.
For example, $d_P(p,q)$ denotes the length of $P(p,q)$.
We perform a linear time preprocessing so that $d_P(v_i,v_j)$ can be computed in $O(1)$ time for any pair of $(i,j)$. Recall that $|v_iv_j|>0$ unless $i=j$, and thus, $d_P(v_i,v_j)>0$ unless $i=j$.

A {\em center} of $G(i,j)$ is defined as a point (which can be either a vertex or in the interior of an edge) that minimizes the maximum distance from it to all vertices in $G(i,j)$\footnote{The concept of center is defined with respect to the graph instead of to the metric space.}, and the maximum distance is called the {\em radius} of $G(i,j)$. Hence, the problem ROAP is to find a pair of
indices $(i,j)$ such that the radius of $G(i,j)$ is
minimized.

For convenience, we assume that $P$ from $v_1$ to $v_n$ is oriented
from left to right, so that we can talk about the relative positions of
the points of $P$ (i.e., a point $p$ is to the left of another point
$q$ on $P$ if $p$ is closer to $v_1$ than $q$ is).
Similarly, each edge $e(v_i,v_j)$ with $i<j$ from $v_i$ to $v_j$
is oriented from left to right.

\section{Our Algorithm for ROAP}
\label{sec:algo}

In this section, we present our algorithm for solving the problem ROAP. Let $(i^*,j^*)$ be an
optimal solution with $i^*\leq j^*$ and $c^*$ be a center of $G(i^*,j^*)$. Let $r^*$
denote the radius of $G(i^*,j^*)$. We begin with the following observation.

\begin{observation}\label{obser:10}
In $G(i^*,j^*)$, there are two vertices $v_{a^*}$ and $v_{b^*}$ such that the following are true.
\begin{enumerate}
\item
$d_{G(i^*,j^*)}(c^*,v_{a^*})=d_{G(i^*,j^*)}(c^*,v_{b^*})=r^*$.
\item
There is a shortest path from $c^*$ to $v_{a^*}$, denoted by $\pi_{a^*}$, and a shortest path from $c^*$ to $v_{b^*}$, denoted by $\pi_{b^*}$, such that $c^*$ is at the middle of $\pi_{a^*}\cup \pi_{b^*}$  (i.e., the concatenation of the two paths).
\end{enumerate}
\end{observation}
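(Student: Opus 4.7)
The plan is to use a local perturbation argument: if $c^*$ cannot be locally improved, then two farthest vertices must pull in different directions. Let $F = \{v : d_{G(i^*,j^*)}(c^*,v) = r^*\}$; by the definition of the radius, $F$ is nonempty, so for any $v_{a^*}, v_{b^*} \in F$ condition~1 holds automatically. The task is then to choose $v_{a^*}, v_{b^*} \in F$ together with shortest paths $\pi_{a^*}, \pi_{b^*}$ that leave $c^*$ in two different local directions; the equality $|\pi_{a^*}| = |\pi_{b^*}| = r^*$ then places $c^*$ at the midpoint of the concatenation, delivering condition~2.

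The technical ingredient is a local distance-variation formula. At $c^*$ there are finitely many outgoing directions: one per incident edge if $c^*$ is a vertex of $G(i^*,j^*)$, or two (along the edge) if $c^*$ lies in the interior of an edge. For each vertex $v$ let $D(v)$ be the set of directions at $c^*$ in which at least one shortest $c^*$-to-$v$ path begins. Given such a direction $\sigma$, write $c_\varepsilon$ for the point obtained by moving $c^*$ a distance $\varepsilon>0$ along $\sigma$. By splitting any path starting at $c_\varepsilon$ according to whether it first returns through $c^*$ or first exits the current edge on the far side, one checks by case analysis that, for all sufficiently small $\varepsilon > 0$,
\[
d_{G(i^*,j^*)}(c_\varepsilon, v) \;=\; \begin{cases} d_{G(i^*,j^*)}(c^*, v) - \varepsilon & \text{if } \sigma \in D(v), \\ d_{G(i^*,j^*)}(c^*, v) + \varepsilon & \text{if } \sigma \notin D(v). \end{cases}
\]

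With this formula, suppose for contradiction that every shortest path from $c^*$ to every vertex of $F$ starts in a single common direction $\sigma_1$. Moving $c^*$ by $\varepsilon$ along $\sigma_1$, the formula yields $d_{G(i^*,j^*)}(c_\varepsilon, v) = r^* - \varepsilon$ for every $v \in F$, while for every $v \notin F$ we have $d_{G(i^*,j^*)}(c_\varepsilon, v) \leq d_{G(i^*,j^*)}(c^*, v) + \varepsilon$. Choosing $\varepsilon$ smaller than half of $r^* - \max_{v \notin F} d_{G(i^*,j^*)}(c^*, v)$ (a positive quantity since there are only finitely many vertices, or a vacuous constraint if $F$ is all of $V$) forces $\max_v d_{G(i^*,j^*)}(c_\varepsilon, v) < r^*$, contradicting the fact that $c^*$ is a center of $G(i^*,j^*)$. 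Hence some two shortest paths from $c^*$ to vertices of $F$ leave in distinct directions, yielding the required $v_{a^*}, v_{b^*}, \pi_{a^*}, \pi_{b^*}$. I expect the main obstacle to be the case analysis establishing the local distance-variation formula at the degree-$3$ vertices $v_{i^*}$ and $v_{j^*}$, where three outgoing directions interact; the perturbation/contradiction step itself is then essentially immediate.
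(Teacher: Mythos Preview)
Your proposal is correct and takes essentially the same approach as the paper: a local perturbation of $c^*$ showing that if all shortest paths to farthest vertices left in a single direction, the eccentricity could be decreased. The paper compresses this into a one-sentence sketch (``If this were not true, then we could slightly move $c^*$ \ldots''), while you have spelled out the distance-variation formula and the contradiction in detail; the underlying argument is identical.
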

\begin{proof}
If this were not true, then we could slightly move $c^*$ so that the maximum distance from the new position of $c^*$ to all vertices in $G(i^*,j^*)$ becomes smaller than $r^*$, which contradicts with the definition of $r^*$. \qed
\end{proof}

Let $a^*$ and $b^*$ be the indices of the two vertices $v_{a^*}$ and $v_{b^*}$,  and $\pi^*$ be the union of the two paths $\pi_{a^*}$ and $\pi_{b^*}$ stated in Observation~\ref{obser:10}. Without loss of generality, we assume  $a^*< b^*$.

Depending on the locations of $c^*$, $a^*$, $b^*$, as well as whether
$e(v_{i^*},v_{j^*})\in \pi^*$, there are several possible
configurations. For each configuration, we will give a linear time
algorithm to compute a candidate solution, i.e., a pair $(i,j)$ (along with a radius $r$ and a center $c$), so that if there is an optimal solution conforming with the configuration then $(i,j)$ is also an optimal solution with $c$ as the center and $r=r^*$. On the other hand, each such solution is {\em feasible} in the sense that the distances from $c$ to all vertices in $G(i,j)$ is at most $r$. There are a constant number of configurations. Since we do not know which configuration has an optimal solution, we will compute a candidate solution for each configuration, and finally, among all candidate solutions we return the one with the smallest radius. The running time of the algorithm is $O(n)$.

For example, one
configuration is that $a^* = 1$ and $c^*$ is on $P(v_1,v_{i^*})$. In
this case, $r^*$ is equal to $d_P(v_1,c^*)$ and also equal to
$d_P(c^*,v_{i^*})$ plus the distance from $v_{i^*}$ to its
farthest vertex $v_k$ for all $k\in [i^*,n]$, i.e., $\max_{k\in [i^*,n]}d_{G(i^*,j^*)}(v_{i^*},v_k)$.
In other words, $r^*$ is equal to half of
$d_P(v_1,v_{i^*})+\max_{k\in [i^*,n]}d_{G(i^*,j^*)}(v_{i^*},v_k)$.
Further, it can be verified that $j^*$ must be the index $j$ that
minimizes the value $\max_{k\in [i^*,n]}d_{G(i^*,j)}(v_{i^*},v_k)$ among all $j\in
[i^*,n]$. Therefore, $r^*$ is equal to half of
$d_P(v_1,v_{i^*})+\min_{j\in [i^*,n]}\max_{k\in
[i^*,n]}d_{G(i^*,j)}(v_{i^*},v_k)$. Also, since $c^*\in P(v_1,v_{i^*})$,
$d_P(v_1,v_{i^*})\geq \max_{k\in [i^*,n]}d_{G(i^*,j^*)}(v_{i^*},v_k)$.
Correspondingly, we can compute a candidate solution as follows.

For any $i\in [1,n]$, define $\lambda_i=\min_{j\in [i,n]}\max_{k\in
[i,n]}d_{G(i,j)}(v_{i},v_k)$, and let $j(i)$ denote the index $j\in [i,n]$ that
achieves $\lambda_i$. Suppose $\lambda_i$ and $j(i)$ for all $i\in
[1,n]$ are known. Then, in $O(n)$ time we can find the index $i$ that
minimizes the value $d_P(v_1,v_i)+\lambda_i$ among all $i\in [1,n]$
with $d_P(v_1,v_i)\geq \lambda_i$. We return the pair $(i,j(i))$ (with radius $r=(d_P(v_1,v_i)+\lambda_i)/2$ and center $c$ as the point on $P(v_1,v_i)$ such that $d_P(v_1,c)=r$) as the
candidate solution for the configuration.
It is not difficult to see that if the configuration has an optimal
solution, then $(i,j(i))$ is an optimal solution with the center at $c$ and $r^*=r$.
Further, by our definition of $\lambda_i$, the distance from $c$ to every vertex  in $G(i,j(i))$ is at most $r$, and thus our candidate solution is feasible.

According to the above discussion, we need to compute  $\lambda_i$ and
$j(i)$ for all $i\in [1,n]$, which is done in the following lemma, with proof in Section~\ref{sec:lem10}.

\begin{lemma}\label{lem:10}
There is an algorithm that can compute $\lambda_i$ and $j(i)$ in $O(n)$ time for all $i\in [1,n]$.
\end{lemma}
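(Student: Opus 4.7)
The plan is to decompose the objective into two pieces, establish monotonicity of the optimum as $i$ advances, and then run a two-pointer scan.

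Let $f(i,j)=\max_{k\in[i,n]}d_{G(i,j)}(v_i,v_k)$, so that $\lambda_i=\min_{j\in[i,n]}f(i,j)$. For $k\ge j$, the triangle inequality gives $d_P(v_i,v_j)\ge|v_iv_j|$, so the shortest path uses the new edge and $d_{G(i,j)}(v_i,v_k)=|v_iv_j|+d_P(v_j,v_k)$; hence the max over $k\in[j,n]$ is $\nu(i,j)=|v_iv_j|+d_P(v_j,v_n)$. For $k\in[i,j]$, letting $L=|v_iv_j|+d_P(v_i,v_j)$ denote the length of the cycle $C(i,j)$, the distance is $\min(d_P(v_i,v_k),\,L-d_P(v_i,v_k))$, a tent-shaped function of $d_P(v_i,v_k)$ whose maximum $\mu(i,j)$ is attained at a ``pivot'' index $k^{*}(i,j)\in[i,j]$ whose prefix-distance from $v_i$ is closest to $L/2$. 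Thus $f(i,j)=\max(\mu(i,j),\nu(i,j))$.

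Two monotonicity properties will drive the $O(n)$ algorithm. First, $L=|v_iv_j|+d_P(v_i,v_j)$ is non-decreasing in $j$ for fixed $i$: the triangle inequality on $v_i,v_j,v_{j+1}$ gives $|v_iv_{j+1}|\ge|v_iv_j|-|v_jv_{j+1}|$, while $d_P(v_i,v_{j+1})=d_P(v_i,v_j)+|v_jv_{j+1}|$. It follows that $k^{*}(i,j)$, which tracks the largest $k\in[i,j]$ with $d_P(v_i,v_k)\le L/2$, is also non-decreasing in $j$ and can be maintained by rightward motion only. Second, and more importantly, I would show that the optimal endpoint $j(i)$ is non-decreasing in $i$. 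I plan to prove this by an exchange argument: assuming $i_1<i_2$ but $j(i_1)>j(i_2)$, I would compare $f(i_1,j(i_2))$ against $\lambda_{i_1}=f(i_1,j(i_1))$ and $f(i_2,j(i_1))$ against $\lambda_{i_2}=f(i_2,j(i_2))$, then invoke the triangle inequality on the four shortcut endpoints $v_{i_1},v_{i_2},v_{j(i_2)},v_{j(i_1)}$ to derive a contradiction with minimality. Given both monotonicities, the algorithm is a double scan: iterate $i=1,2,\ldots,n$, advancing the pointers $j$ and $k^{*}$ only rightward at each step. With an $O(n)$ preprocessing of prefix sums of edge lengths, every $d_P(v_a,v_b)$ is available in $O(1)$ and each pointer advance updates $\mu$ and $\nu$ in $O(1)$, giving $O(n)$ total time.

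The principal obstacle is the second monotonicity $j(i_1)\le j(i_2)$ for $i_1<i_2$. Unlike the Euclidean setting, in a general metric the distances $|v_iv_j|$ need not be monotone in either index, so the exchange argument must proceed purely from the triangle inequality on the four shortcut endpoints and the path distances between them. A secondary subtlety is tie handling, both when two choices of $j$ yield the same $f$-value and when the pivot $k^{*}$ is ambiguous; a consistent rule such as always taking the smallest qualifying index is sufficient to preserve the monotonicity used by the scan. Once these structural claims are in place, the bookkeeping for the double scan is routine and the $O(n)$ bound follows.
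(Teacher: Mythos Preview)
Your decomposition $f(i,j)=\max(\mu(i,j),\nu(i,j))$ matches the paper's $\alpha(i,j)=\max(\beta(i,j),\gamma(i,j))$ exactly, and your pivot $k^{*}(i,j)$ is the paper's $I'(i,j)$ (up to $\pm 1$). But the two monotonicities you rely on are where the proposal falls short.

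First, you argue $k^{*}(i,j)$ is non-decreasing in $j$ for fixed $i$, but your scan also increments $i$, and at that moment $j$ does not move while $k^{*}$ must not retreat. You never address $k^{*}(i,j)\le k^{*}(i+1,j)$. The paper proves both directions (its Lemma on $I'(i,j)\le I'(i,j+1)$ and $I'(i,j)\le I'(i+1,j)$), and the second part requires a separate triangle-inequality argument.

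Second, and more serious, your plan to prove $j(i)$ itself is non-decreasing via an exchange argument is not carried out, and it is not clear it goes through. Exchange arguments of the form ``compare $f(i_1,j_2)$ and $f(i_2,j_1)$'' establish minimizer monotonicity when $f$ satisfies a quadrangle (Monge) inequality; here $f$ is a max of two terms, each of which is itself a min or involves a pivot, and a Monge condition is far from immediate from the triangle inequality on four points. The paper sidesteps this entirely: it first shows that $\beta(i,\cdot)$ is non-decreasing and $\gamma(i,\cdot)$ is non-increasing, so $\alpha(i,\cdot)$ is unimodal; then it defines the crossing index $j'(i)=\min\{j:\gamma(i,j)\le\beta(i,j)\}$, observes that the true minimizer $j(i)\in\{j'(i)-1,\,j'(i)\}$, and proves $j'(i)\le j'(i+1)$ by a direct inequality computation (reducing to $|v_{i+1}v_{h-1}|+\beta(i,h-1)\ge |v_iv_{h-1}|+\beta(i+1,h-1)$ and unwinding via the triangle inequality). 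Tracking $j'(i)$ rather than $j(i)$ is what makes the two-pointer scan correct; note that $j(i)$ can flip between $j'(i)-1$ and $j'(i)$, so its literal monotonicity is not even asserted in the paper. Your ``consistent tie-breaking'' remark does not address this, because the two candidates need not be tied.

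In short: the architecture is right, but the missing ingredients are (i) pivot monotonicity when $i$ advances, and (ii) replacing the unproved exchange claim on $j(i)$ by the unimodality/crossing-index argument, which is what actually yields a clean and provable monotone quantity to scan on.
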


\subsection{The Algorithm for Lemma~\ref{lem:10}}
\label{sec:lem10}

The success of our approach hinges on several monotonicity properties
that we shall prove.
%For example, one such property is that $j(i)\leq j(i+1)$ for all $i\in [1,n-1]$.

Consider any $i\in [1,n]$.
For any $j\in [i,n]$, define $\alpha(i,j)=\max_{k\in
[i,n]}d_{G(i,j)}(v_{i},v_k)$, $\beta(i,j)=\max_{k\in
[i,j]}d_{G(i,j)}(v_{i},v_k)$, and $\gamma(i,j)=\max_{k\in
[j+1,n]}d_{G(i,j)}(v_{i},v_k)$ if $j<n$ and $\gamma(i,j)=0$ otherwise. Clearly, $\alpha(i,j)=\max\{\beta(i,j),\gamma(i,j)\}$ and $\lambda_i=\min_{j\in
[i,n]}\alpha(i,j)$.

Note that for any $k\in [i,j]$, the shortest path from $v_i$ to $v_k$ in $G(i,j)$ must be in the cycle $C(i,j)$. Hence, $\beta(i,j)=\max_{k\in [i,j]}d_{C(i,j)}(v_{i},v_k)$. Also, it is not difficult to see that $\gamma(i,j)= d_{G(i,j)}(v_{i},v_n)$. For $d_{G(i,j)}(v_{i},v_n)$, there are two paths from $v_i$ to $v_n$ in $G(i,j)$: $P(v_i,v_n)$ and $e(v_i,v_j)\cup P(v_j,v_n)$. The length of the latter path is $|v_iv_j|+d_P(v_j,v_n)$. Due to the triangle inequality in the metric space, it holds that $|v_iv_j|\leq d_P(v_i,v_j)$. Hence,
$\gamma(i,j)=|v_iv_j|+d_P(v_j,v_n)$.
Our first monotonicity property is given in the following
lemma, which is due to the triangle inequality.

\begin{lemma}\label{lem:20}
	$\gamma(i,j)\geq \gamma(i,j+1)$ for all $j\in [i,n-1]$.
\end{lemma}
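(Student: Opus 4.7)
The plan is to unfold both sides using the closed form $\gamma(i,j)=|v_iv_j|+d_P(v_j,v_n)$ established just before the lemma, and then read off the inequality directly from the triangle inequality.

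First I would dispose of the boundary case $j=n-1$ separately: there $\gamma(i,j+1)=\gamma(i,n)=0$ by definition, while $\gamma(i,n-1)=|v_iv_{n-1}|+d_P(v_{n-1},v_n)\ge 0$, so the inequality is immediate. For the main case $j\le n-2$, both $\gamma(i,j)$ and $\gamma(i,j+1)$ are given by the same formula, and the desired inequality becomes
\[
|v_iv_j|+d_P(v_j,v_n)\ \ge\ |v_iv_{j+1}|+d_P(v_{j+1},v_n).
\]

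The key step is to split the subpath $P(v_j,v_n)$ at $v_{j+1}$: since $v_{j+1}$ lies between $v_j$ and $v_n$ along $P$, we have $d_P(v_j,v_n)=|v_jv_{j+1}|+d_P(v_{j+1},v_n)$. Substituting this into the left-hand side and canceling the common term $d_P(v_{j+1},v_n)$, the inequality reduces to
\[
|v_iv_j|+|v_jv_{j+1}|\ \ge\ |v_iv_{j+1}|,
\]
which is exactly the triangle inequality in the metric space $(V,|\cdot|)$ applied to the triple $v_i,v_j,v_{j+1}$.

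I do not expect any real obstacle here: the only non-trivial ingredient is recognizing that the closed-form expression for $\gamma$ converts the monotonicity claim into a single use of the triangle inequality, after one telescoping of the path distance on $P$. The corner case $j=n-1$ is handled by the definition $\gamma(i,n)=0$.
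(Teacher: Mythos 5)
Your proof is correct and follows essentially the same route as the paper: unfold $\gamma$ via the closed form $\gamma(i,j)=|v_iv_j|+d_P(v_j,v_n)$, split $d_P(v_j,v_n)=|v_jv_{j+1}|+d_P(v_{j+1},v_n)$, and reduce to the triangle inequality $|v_iv_j|+|v_jv_{j+1}|\geq |v_iv_{j+1}|$. Your separate treatment of the boundary case $j=n-1$ (where $\gamma(i,n)=0$ by definition rather than by the closed form) is a small point of extra care that the paper glosses over, but it changes nothing substantive.
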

\begin{proof}
Since $\gamma(i,j)=|v_iv_j|+d_P(v_j,v_n)$ and
	$\gamma(i,j+1)=|v_iv_{j+1}|+d_P(v_{j+1},v_n)$, we have
	$\gamma(i,j)-\gamma(i,j+1)=|v_iv_{j}|+d_P(v_j,v_{j+1})-|v_iv_{j+1}|
	=|v_iv_{j}|+|v_jv_{j+1}|-|v_iv_{j+1}|\geq 0$. The last inequality
	is due to the triangle inequality. \qed
\end{proof}

Let $I(i,j)$ be the index $k$ in $[i,j]$ such that
$\beta(i,j)=d_{C(i,j)}(v_i,v_{k})$ (if there more than
one such $k$, then we let $I(i,j)$ refer to the smallest one). In our algorithm given later, we will need to compute $\beta(i,j)$ for some pairs $(i,j)$.
For each $k\in [i,j]$, observe that
	$d_{C(i,j)}(v_i,v_k)=\min\{d_P(v_i,v_k),|v_iv_j|+d_P(v_k,v_j)\}$.
Hence, if $I(i,j)$ is known, then $\beta(i,j)$ can be computed in constant time due to our preprocessing in Section~\ref{sec:pre}.
%Hence, to compute $\beta(i,j)$, it is sufficient to know $I(i,j)$.
In order to determine $I(i,j)$, we introduce a new notation.
Define $I'(i,j)$ to be the smallest index $k\in [i,j]$ such that $d_P(v_i,v_k)\geq |v_iv_j|+d_P(v_k,v_j)$. Note that such $k$ must exist since $d_P(v_i,v_j)\geq |v_iv_j|$.
%The following observation will be used to compute $\beta(i,j)$ and $I(i,j)$.

\begin{observation}\label{obser:20}
%$d_{C(i,j)}(v_i,v_k)$ for $k\in [i,j]$ is minimized at either $k=I'(i,j)$ or $k=I'(i,j)-1$, i.e.,
$I(i,j)$ is either $I'(i,j)$ or $I'(i,j)-1$.
\end{observation}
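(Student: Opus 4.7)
The plan is to exploit the monotonicity of the two candidate distances appearing in the formula
$d_{C(i,j)}(v_i,v_k) = \min\{d_P(v_i,v_k),\ |v_iv_j| + d_P(v_k,v_j)\}$.
Set $f(k) := d_P(v_i,v_k)$ and $g(k) := |v_iv_j| + d_P(v_k,v_j)$ for $k\in[i,j]$, so that $d_{C(i,j)}(v_i,v_k)=\min\{f(k),g(k)\}$ and $I'(i,j)$ is the smallest $k\in[i,j]$ with $f(k)\ge g(k)$.

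First I would record that on $[i,j]$ the function $f$ is non-decreasing and $g$ is non-increasing: these are just the additivity relations $d_P(v_i,v_{k+1})=d_P(v_i,v_k)+|v_kv_{k+1}|$ and $d_P(v_k,v_j)=|v_kv_{k+1}|+d_P(v_{k+1},v_j)$ for consecutive vertices on the path. I would also note the boundary behavior $f(i)=0\le g(i)$ and, by the triangle inequality in the metric space, $f(j)=d_P(v_i,v_j)\ge |v_iv_j|=g(j)$, which re-confirms the existence of $I'(i,j)$ already pointed out in the text.

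Next, write $k^{*}:=I'(i,j)$. Since $f-g$ is non-decreasing, we have $f(k)<g(k)$ for all $k\in[i,k^{*}-1]$ and $f(k)\ge g(k)$ for all $k\in[k^{*},j]$. Hence $d_{C(i,j)}(v_i,v_k)$ equals $f(k)$ on $[i,k^{*}-1]$, where it is non-decreasing in $k$, and equals $g(k)$ on $[k^{*},j]$, where it is non-increasing in $k$. Therefore the maximum of $d_{C(i,j)}(v_i,v_k)$ over $k\in[i,j]$ is attained at one of the two transition indices, namely $k^{*}-1$ with value $f(k^{*}-1)$ or $k^{*}$ with value $g(k^{*})$. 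Because $I(i,j)$ is by definition the smallest such maximizer, it must coincide with $k^{*}-1$ or $k^{*}$, i.e.\ with $I'(i,j)-1$ or $I'(i,j)$.

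The one wrinkle is the boundary case $k^{*}=i$, where $I'(i,j)-1$ falls outside $[i,j]$; this forces $f(i)\ge g(i)$, and since $|v_iv_j|>0$ whenever $i\ne j$, it can only happen when $i=j$, in which case the interval contains the single index $i$ and $I(i,j)=i=I'(i,j)$ trivially. I expect the only mildly delicate step to be the tie-breaking bookkeeping when $f(k^{*}-1)=g(k^{*})$, where the convention that $I(i,j)$ is the smallest maximizer must be invoked to pin down which of the two candidates is realized; the argument above accommodates either outcome, which is exactly why the observation allows both possibilities.
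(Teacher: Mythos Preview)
Your argument is correct and mirrors the paper's proof: both identify the transition index $I'(i,j)$ where the comparison between $d_P(v_i,v_k)$ and $|v_iv_j|+d_P(v_k,v_j)$ flips, use monotonicity on each side to localize the maximizer to $\{I'(i,j)-1,I'(i,j)\}$, and dispose of the boundary case $I'(i,j)=i$ by observing it forces $i=j$. One small tightening: to ensure the \emph{smallest} maximizer is not some $k<k^*-1$, you need $f$ to be strictly increasing (not merely non-decreasing), which follows immediately from your additivity relation together with $|v_kv_{k+1}|>0$---the paper uses exactly this strictness in its comparison for $k<h-1$.
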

\begin{proof}
Let $h=I'(i,j)$. We first assume that $h>i$. By the definition of $I'(i,j)$, $d_P(v_i,v_h)\geq |v_iv_j|+d_P(v_j,v_h)$ and $d_P(v_i,v_{h-1})< |v_iv_j|+d_P(v_j,v_{h-1})$. Thus, $d_{C(i,j)}(v_i,v_h)=|v_iv_j|+d_P(v_j,v_h)$ and $d_{C(i,j)}(v_i,v_{h-1})=d_P(v_i,v_{h-1})$.

Consider any $k\in [i,j]$.
%Recall that $d_{C(i,j)}(v_i,v_k)=\min\{d_P(v_i,v_k),|v_iv_j|+d_P(v_k,v_j)\}$.
If $k> h$, then $d_{C(i,j)}(v_i,v_h)=|v_iv_j|+d_P(v_j,v_h)\geq |v_iv_j|+d_P(v_j,v_k)\geq d_{C(i,j)}(v_i,v_k)$.
If $k<h-1$, then $d_{C(i,j)}(v_i,v_{h-1})=d_P(v_i,v_{h-1})>d_P(v_i,v_k)\geq d_{C(i,j)}(v_i,v_k)$. Note that $d_P(v_i,v_{h-1})>d_P(v_i,v_k)$ holds because $d_P(v_i,v_{h-1})=d_P(v_i,v_k)+d_P(v_k,v_{h-1})$ and $d_P(v_k,v_{h-1})>0$. Therefore, one of $h$ and $h-1$ must be $I(i,j)$.

If $h=i$, since $d_P(v_i,v_h)=0$, by the definition of $h$, it must be the case that $i=j$. Thus, $I(i,j)=i=h$.
%then by the same analysis as above, for any $k> i$, $d_{C(i,j)}(v_i,v_h)\geq d_{C(i,j)}(v_i,v_k)$. Hence, $I(i,j)=h$.
\qed
\end{proof}

Observation~\ref{obser:20} tells that if we know $I'(i,j)$, then $\beta(i,j)$ is equal to the minimum of $d_{C(i,j)}(v_i,v_{I'(i,j)})$ and $d_{C(i,j)}(v_i,v_{I'(i,j)-1})$, which can be computed in constant time. Hence, to compute $\beta(i,j)$, it is sufficient to determine $I'(i,j)$. To efficiently compute $I'(i,j)$ during our algorithm, the following monotonicity properties on $I'(i,j)$ will be quite helpful.

\begin{lemma}\label{lem:30}
	\begin{enumerate}
		\item
			$I'(i,j)\leq I'(i,j+1)$ for all $j\in [i,n-1]$.
		\item
			$I'(i,j)\leq I'(i+1,j)$ for all $i\in [1,j-1]$.
	\end{enumerate}
\end{lemma}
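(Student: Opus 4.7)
The plan is to prove both monotonicity statements by a direct appeal to the definition of $I'(i,j)$ combined with the triangle inequality in the underlying metric space, in the same spirit as Lemma~\ref{lem:20}. Nothing deeper than the triangle inequality seems to be required; the only real work is choosing the right quantities to chain together.

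For the first statement, I would set $h = I'(i,j)$ and argue that no index $k \in [i, h-1]$ can witness the defining inequality at the pair $(i, j+1)$; this implies $I'(i, j+1) \geq h$. By the minimality of $h$, every such $k$ satisfies the strict inequality $d_P(v_i, v_k) < |v_i v_j| + d_P(v_k, v_j)$. Since $k \leq j$, the vertex $v_j$ lies on $P(v_k, v_{j+1})$, so $d_P(v_k, v_{j+1}) = d_P(v_k, v_j) + |v_j v_{j+1}|$, and the triangle inequality applied to the triple $(v_i, v_j, v_{j+1})$ gives $|v_i v_{j+1}| + |v_j v_{j+1}| \geq |v_i v_j|$. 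Adding $d_P(v_k, v_j)$ to both sides and combining yields $|v_i v_{j+1}| + d_P(v_k, v_{j+1}) \geq |v_i v_j| + d_P(v_k, v_j) > d_P(v_i, v_k)$, so $k$ fails the defining condition at $(i, j+1)$ as well.

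For the second statement, I would take $h' = I'(i+1, j)$ and show that $h'$ already satisfies the defining condition for the pair $(i, j)$, which forces $I'(i, j) \leq h'$. By hypothesis, $d_P(v_{i+1}, v_{h'}) \geq |v_{i+1} v_j| + d_P(v_{h'}, v_j)$. Since $h' \geq i+1$, the vertex $v_{i+1}$ lies on $P(v_i, v_{h'})$, so $d_P(v_i, v_{h'}) = |v_i v_{i+1}| + d_P(v_{i+1}, v_{h'})$, and the triangle inequality on $(v_i, v_{i+1}, v_j)$ gives $|v_i v_{i+1}| + |v_{i+1} v_j| \geq |v_i v_j|$. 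Chaining these two facts yields $d_P(v_i, v_{h'}) \geq |v_i v_j| + d_P(v_{h'}, v_j)$, and since $h' \in [i+1, j] \subseteq [i, j]$, we conclude $I'(i, j) \leq h'$.

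The main (mild) obstacle is bookkeeping: one has to be sure that $v_j$ genuinely lies on $P(v_k, v_{j+1})$ and that $v_{i+1}$ lies on $P(v_i, v_{h'})$ so that the additive identity for $d_P$ is valid. Both hold simply because the relevant indices are ordered along the oriented path $P$. After that, the triangle inequality applied to the triples $(v_i, v_j, v_{j+1})$ and $(v_i, v_{i+1}, v_j)$ does all the work.
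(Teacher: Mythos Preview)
Your proof is correct and follows essentially the same approach as the paper: both rely only on the triangle inequality applied to the triples $(v_i,v_j,v_{j+1})$ and $(v_i,v_{i+1},v_j)$, and your argument for part~2 is identical to the paper's. The only cosmetic difference is in part~1, where you start from $h=I'(i,j)$ and show that every $k<h$ fails the defining condition at $(i,j+1)$, whereas the paper starts from $h=I'(i,j+1)$ and shows that $h$ already satisfies the defining condition at $(i,j)$; these are the same argument run in opposite directions using the same inequality $|v_iv_{j+1}|+d_P(v_k,v_{j+1})\geq |v_iv_j|+d_P(v_k,v_j)$.
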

\begin{proof}
	%We give some observations that will be helpful for proving the lemma.
	Recall that $d_{C(i,j)}(v_i,v_k)=\min\{d_P(v_i,v_k),|v_iv_j|+d_P(v_j,v_k)\}$ for each $k\in [i,j]$. If we
	consider $k$ as an integral variable changing from $i$ to $j$,
	then $d_P(v_i,v_k)$ is monotonically increasing and
	$|v_iv_j|+d_P(v_j,v_k)$ is monotonically decreasing (e.g., see Fig.~\ref{fig:fundC}). Hence, in
	general $d_{C(i,j)}(v_i,v_k)$ is a unimodal function of $k\in [i,n]$, i.e., first
	increases and then decreases.
%and $I'(i,j)$ corresponds to the smallest integer $k$ where $\beta(i,j)$ is .
    Therefore, we have an easy {\em observation}:
	If $d_P(v_i,v_h)\geq |v_iv_j|+d_P(v_h,v_j)$ for some $h\in [i,j]$, then $I'(i,j)\leq h$.

\begin{figure}[t]
\begin{minipage}[t]{\textwidth}
\begin{center}
\includegraphics[height=1.3in]{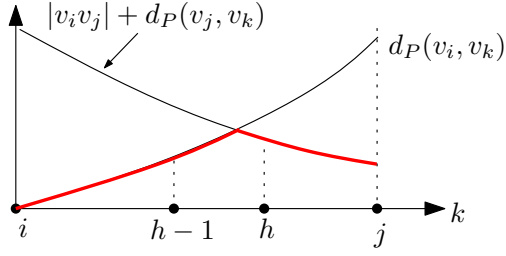}
\caption{\footnotesize Illustrating the two functions $d_P(v_i,v_k)$ and $|v_iv_j|+d_P(v_j,v_k)$ for $k\in [i,j]$, where $d_{C(i,j)}(v_i,v_k)$ is the pointwise minimum of them, depicted by thick (red) curve.
Roughly speaking, $I'(i,j)$ is the first index (i.e., $h$ in the figure) after the intersection of the two functions.}
\label{fig:fundC}
\end{center}
\end{minipage}
\vspace{-0.15in}
\end{figure}

For the first part of the lemma, let $h=I'(i,j+1)$. If $h=j+1$, then it is obviously true that $I'(i,j)\leq h$. Otherwise, $h\in [i,j]$. By definition, $d_P(v_{i},v_h)\geq |v_{i}v_{j+1}|+d_P(v_{j+1},v_h)$.
 Observe that $|v_iv_{j+1}|+d_P(v_{j+1},v_h)\geq |v_iv_j|+d_P(v_{j},v_h)$ due to
	the triangle inequality. Hence, $d_P(v_{i},v_h)\geq |v_iv_j|+d_P(v_j,v_h)$. By the above observation, $I'(i,j)\leq h$.

For the second part of the lemma, let $h=I'(i+1,j)$, and thus $d_P(v_{i+1},v_h)\geq |v_{i+1}v_{j}|+d_P(v_j,v_h)$. Note that $d_P(v_i,v_h)-d_P(v_{i+1},v_h)=d_P(v_i,v_{i+1})=|v_iv_{i+1}|$. By the triangle inequality, we have  $|v_iv_j|+d_P(v_j,v_h)-(|v_{i+1}v_j|+d_P(v_j,v_h))=|v_iv_j| - |v_{i+1}v_j|\leq |v_iv_{i+1}|$. Hence, we can derive
$d_P(v_i,v_h) = d_P(v_{i+1},v_h)+|v_iv_{i+1}|\geq (|v_{i+1}v_{j}|+d_P(v_j,v_h)) +|v_iv_{i+1}| \geq |v_iv_j|+d_P(v_j,v_h)$. By the above observation, $I'(i,j)\leq h$.
\qed
\end{proof}

The following lemma characterizes a monotonicity property of the $\beta$ values.

\begin{lemma}\label{lem:40}
	$\beta(i,j)\leq \beta(i,j+1)$ for all $j\in [i,n-1]$.
\end{lemma}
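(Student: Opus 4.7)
My plan is to prove the inequality pointwise: I will show that for every index $k \in [i,j]$, the distance $d_{C(i,j)}(v_i,v_k)$ is at most $d_{C(i,j+1)}(v_i,v_k)$, and then take the maximum over $k \in [i,j]$. Since $[i,j] \subseteq [i,j+1]$, the maximum on the right is bounded above by $\beta(i,j+1)$, which yields the claim.

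To prove the pointwise inequality, I will use the formula $d_{C(i,j)}(v_i,v_k) = \min\{d_P(v_i,v_k),\, |v_iv_j| + d_P(v_k,v_j)\}$ already noted in the paper. The first term in this minimum is the same for both $C(i,j)$ and $C(i,j+1)$, so it suffices to compare the second terms: I need to show
\[
|v_iv_j| + d_P(v_k,v_j) \;\le\; |v_iv_{j+1}| + d_P(v_k,v_{j+1}).
\]
Expanding $d_P(v_k,v_{j+1}) = d_P(v_k,v_j) + |v_jv_{j+1}|$ and applying the triangle inequality in the form $|v_iv_j| \le |v_iv_{j+1}| + |v_{j+1}v_j|$, the desired inequality falls out immediately. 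Taking the minimum with $d_P(v_i,v_k)$ on both sides preserves the inequality, giving $d_{C(i,j)}(v_i,v_k) \le d_{C(i,j+1)}(v_i,v_k)$.

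Having established this, the rest is formal:
\[
\beta(i,j) \;=\; \max_{k\in[i,j]} d_{C(i,j)}(v_i,v_k) \;\le\; \max_{k\in[i,j]} d_{C(i,j+1)}(v_i,v_k) \;\le\; \max_{k\in[i,j+1]} d_{C(i,j+1)}(v_i,v_k) \;=\; \beta(i,j+1).
\]

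I do not foresee a serious obstacle here; the key insight is recognizing that enlarging the cycle by one vertex can only make the ``shortcut'' path $e(v_i,v_{j+1}) \cup P(v_{j+1},v_k)$ no shorter than the old shortcut $e(v_i,v_j) \cup P(v_j,v_k)$, and this is exactly what the triangle inequality on $v_i, v_j, v_{j+1}$ guarantees. The only mild subtlety is to remember that we must compare the cycle distances at a \emph{fixed} $k \in [i,j]$ rather than at the $k$ achieving $\beta(i,j)$ versus the $k$ achieving $\beta(i,j+1)$; the pointwise argument sidesteps this cleanly.
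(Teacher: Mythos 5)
Your proof is correct and uses essentially the same argument as the paper: the key step in both is the triangle-inequality bound $|v_iv_j|+d_P(v_j,v_k)\leq |v_iv_{j+1}|+d_P(v_{j+1},v_k)$, followed by comparing the resulting minima and using that the maximum over $[i,j+1]$ dominates. The only cosmetic difference is that you establish the inequality pointwise for every $k\in[i,j]$ before maximizing, whereas the paper applies it only at the single maximizing index $h=I(i,j)$; both are equally valid.
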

\begin{proof}
Let $h=I(i,j)$. Hence, $\beta(i,j)=d_{C(i,j)}(v_i,v_h)=\min\{d_P(v_i,v_h),|v_iv_j|+d_P(v_j,v_h)\}$.
By the triangle inequality, we have $|v_iv_j|+d_P(v_j,v_h)\leq |v_iv_{j+1}|+d_P(v_{j+1},v_h)$.
Since $\beta(i,j+1)=\max_{k\in [i,j+1]}\min\{d_P(v_i,v_k),|v_iv_{j+1}|+d_P(v_{j+1},v_k)\}$, we obtain
\begin{equation*}
\begin{split}
\beta(i,j+1) & \geq \min\{d_P(v_i,v_h),|v_iv_{j+1}|+d_P(v_{j+1},v_h)\}\\
& \geq \min\{d_P(v_i,v_h),|v_iv_j|+d_P(v_j,v_h)\} =\beta(i,j).
\end{split}
\end{equation*}
%and $\beta(i,j+1)=d_{C(i,j+1)}(v_i,v_a)=\min\{d_P(v_i,v_b),|v_iv_{j+1}|+d_P(v_b,v_{j+1})\}$.
%
%
%
%
%By Lemma~\ref{lem:30}, $a\leq b$. Hence, $d_P(v_i,v_a)\leq d_P(v_i,v_b)$.
%
%If $d_P(v_i,v_a)\leq |v_iv_j|+d_P(v_a,v_j)$, then $\beta(i,j)=d_P(v_i,v_a)$. In this case, we immediately have $\beta(i,j+1)\geq d_P(v_i,v_b) \geq d_P(v_i,v_a) = \beta(i,j)$, and the lemma thus follows.
%
%Otherwise, $\beta(i,j)=|v_iv_j|+d_P(v_a,v_j)$.
%
%
%Also, $a<j$ holds since $|v_iv_j|\leq d_P(v_i,v_j)$ by triangle inequality. By the definition of $a$, $\beta(i,j)\leq d_{C(i,j)}(v_i,v_{a+1})$. Since $|v_iv_j|+d_P(v_a,v_j)\geq |v_iv_j|+d_P(v_{a+1},v_j)$, we have $d_{C(i,j)}(v_i,v_{a+1})=d_P(v_i,v_{a+1})$.
%If $b\geq a+1$, then we obtain $\beta(i,j+1)\leq d_P(v_i,v_b) \geq d_P(v_i,v_{a+1}) \geq \beta(i,j)$.
%If $b=a$, then we have $\beta(i,j+1)\geq |v_iv_{j+1}|+d_P(v_a,v_{j+1})\geq |v_iv_j|+d_P(v_a,v_j)=\beta(i,j)$.
\qed
\end{proof}

Consider any $i\in [1,n]$.
Recall that $j(i)$ is the index $j$ that minimizes the value $\alpha(i,j)$
for all $j\in [i,n]$, and $\alpha(i,j)=\max\{\beta(i,j),\gamma(i,j)\}$. If we consider
$\alpha(i,j)$, $\beta(i,j)$, and $\gamma(i,j)$ as functions of
$j\in [i,n]$, then by Lemmas~\ref{lem:20} and \ref{lem:40},
$\alpha(i,j)$ is a unimodal function (first decreases and then increases; e.g., see Fig.~\ref{fig:alpha}).
In order to compute $j(i)$ and thus $\lambda_i$ during our algorithm,
we define $j'(i)$ to be the smallest index $j\in [i,n]$
such that $\gamma(i,j)\leq \beta(i,j)$. Note that such $j$ must exist because $\gamma(i,n)\leq \beta(i,n)$.
We have the following observation.

\begin{figure}[t]
\begin{minipage}[t]{\textwidth}
\begin{center}
\includegraphics[height=1.3in]{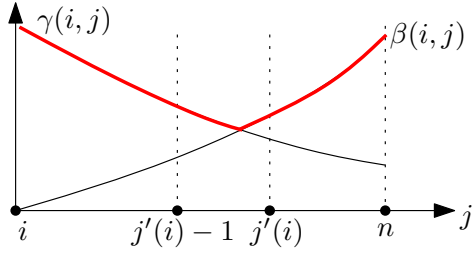}
\caption{\footnotesize Illustrating the two functions $\beta(i,j)$ and $\gamma(i,j)$ for $j\in [i,n]$. The function $\alpha(i,j)$, depicted by the thick (red) curve, is the pointwise maximum of them. The index $j'(i)$ is also shown.}
\label{fig:alpha}
\end{center}
\end{minipage}
\vspace{-0.15in}
\end{figure}

\begin{observation}\label{obser:30}
$j(i)$ is either $j'(i)-1$ or $j'(i)$.
\end{observation}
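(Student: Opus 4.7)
The plan is to exploit the unimodality of $\alpha(i,j)$ that follows from combining Lemmas~\ref{lem:20} and \ref{lem:40}, together with the defining property of $j'(i)$. First I would split the index range $[i,n]$ into the two pieces $[i,j'(i)-1]$ and $[j'(i),n]$ and identify which of $\beta$ and $\gamma$ dominates on each piece.

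On $[j'(i),n]$, the defining property of $j'(i)$ gives $\gamma(i,j'(i))\le\beta(i,j'(i))$. For $j\ge j'(i)$, Lemma~\ref{lem:20} (monotone non-increase of $\gamma$) and Lemma~\ref{lem:40} (monotone non-decrease of $\beta$) together yield $\gamma(i,j)\le\gamma(i,j'(i))\le\beta(i,j'(i))\le\beta(i,j)$, so $\alpha(i,j)=\beta(i,j)$ throughout this range. Since $\beta(i,j)$ is non-decreasing in $j$ (Lemma~\ref{lem:40}), the minimum of $\alpha$ on $[j'(i),n]$ is attained at $j=j'(i)$. Symmetrically, on $[i,j'(i)-1]$ the minimality of $j'(i)$ implies $\gamma(i,j)>\beta(i,j)$, so $\alpha(i,j)=\gamma(i,j)$; Lemma~\ref{lem:20} then makes $\alpha$ non-increasing on this range, so its minimum there is attained at $j=j'(i)-1$ (provided this range is nonempty).

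Combining the two pieces, the global minimizer $j(i)$ of $\alpha(i,\cdot)$ on $[i,n]$ must lie in $\{j'(i)-1,j'(i)\}$. The only boundary case is $j'(i)=i$, in which case the left piece $[i,j'(i)-1]$ is empty and the whole range reduces to $[j'(i),n]$, giving $j(i)=j'(i)=i$; this is consistent with the statement by interpreting $j'(i)-1$ as inadmissible.

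I do not anticipate any serious obstacle: the two monotonicity lemmas already do the heavy lifting, and the remaining argument is a short case split. The only point requiring care is the tie-breaking/boundary situation when $j'(i)=i$, which must be handled separately so that the claim ``$j(i)$ is either $j'(i)-1$ or $j'(i)$'' is read in the natural way (i.e., $j(i)=j'(i)$ when $j'(i)-1<i$).
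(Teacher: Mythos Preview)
Your proposal is correct and follows essentially the same approach as the paper's proof: both split the range at $j'(i)$, use the defining property of $j'(i)$ together with the monotonicity of $\gamma$ (Lemma~\ref{lem:20}) and $\beta$ (Lemma~\ref{lem:40}) to compare $\alpha(i,\cdot)$ on each side against $\alpha(i,j'(i))$ and $\alpha(i,j'(i)-1)$, respectively. The paper's write-up is marginally slicker in that it does not bother to establish $\alpha=\beta$ on the whole right piece but directly uses $\beta(i,k)\le\alpha(i,k)$; otherwise the arguments coincide.
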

\begin{proof}
	By the definition of $j'(i)$, $\alpha(i,j'(i))=\beta(i,j'(i))$ and
	$\alpha(i,j'(i)-1)=\gamma(i,j'(i)-1)$.
	Consider any $k\in [i,n]$.
	If $k>j'(i)$, then $\alpha(i,j'(i))=\beta(i,j'(i))\leq
	\beta(i,k)\leq \alpha(i,k)$. If $k<j'(i)-1$, then
	$\alpha(i,j'(i)-1)=\gamma(i,j'(i)-1)\leq
	\gamma(i,k)\leq \alpha(i,k)$.
	Therefore, $\alpha(i,j)$ is minimized at either $j=j'(i)-1$ or $j=j'(i)$.
	\qed
\end{proof}

Our last monotonicity is given in the following lemma, which will help us to determine $j'(i)$.

\begin{lemma}\label{lem:50}
$j'(i)\leq j'(i+1)$ for all $i\in [n-1]$.
\end{lemma}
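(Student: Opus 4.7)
The plan is to show that with $h := j'(i+1)$, we have $\gamma(i,h)\le \beta(i,h)$, which by the definition of $j'(i)$ (the smallest index in $[i,n]$ where $\gamma\le\beta$) immediately yields $j'(i)\le h$. The case $h=n$ is trivial, so I would assume $h<n$. The main idea is to reuse the index that already witnesses $\beta(i+1,h)$ as a witness for $\beta(i,h)$: a single structural observation, the triangle inequality will do the rest.

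Let $k^\ast := I(i+1,h)$ so that $d_{C(i+1,h)}(v_{i+1},v_{k^\ast}) = \beta(i+1,h)$. Because $h=j'(i+1)$,
$$\min\bigl\{d_P(v_{i+1},v_{k^\ast}),\ |v_{i+1}v_h|+d_P(v_{k^\ast},v_h)\bigr\} \;=\; \beta(i+1,h) \;\ge\; \gamma(i+1,h) \;=\; |v_{i+1}v_h|+d_P(v_h,v_n).$$
Since the minimum dominates $|v_{i+1}v_h|+d_P(v_h,v_n)$, each summand does, giving two inequalities that I will exploit:
(a) $d_P(v_{i+1},v_{k^\ast})\ge |v_{i+1}v_h|+d_P(v_h,v_n)$, and
(b) $d_P(v_{k^\ast},v_h)\ge d_P(v_h,v_n)$.

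Now, since $k^\ast\in[i+1,h]\subseteq[i,h]$, I use $v_{k^\ast}$ as a witness for $\beta(i,h)$:
$$\beta(i,h)\;\ge\; d_{C(i,h)}(v_i,v_{k^\ast})\;=\;\min\bigl\{d_P(v_i,v_{k^\ast}),\ |v_iv_h|+d_P(v_{k^\ast},v_h)\bigr\}.$$
For the first term, using $d_P(v_i,v_{k^\ast})=|v_iv_{i+1}|+d_P(v_{i+1},v_{k^\ast})$, then (a), and finally the triangle inequality $|v_iv_{i+1}|+|v_{i+1}v_h|\ge |v_iv_h|$, I get $d_P(v_i,v_{k^\ast})\ge |v_iv_h|+d_P(v_h,v_n)=\gamma(i,h)$. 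For the second term, (b) gives $|v_iv_h|+d_P(v_{k^\ast},v_h)\ge |v_iv_h|+d_P(v_h,v_n)=\gamma(i,h)$. Therefore $\beta(i,h)\ge \gamma(i,h)$, and by the definition of $j'(i)$ we conclude $j'(i)\le h = j'(i+1)$.

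The only real obstacle is picking the right witness; once $k^\ast=I(i+1,h)$ is chosen, both branches of the minimum defining $d_{C(i,h)}(v_i,v_{k^\ast})$ fall into line because shifting the left endpoint from $v_{i+1}$ to $v_i$ adds exactly $|v_iv_{i+1}|$ on the path side while only weakening the cycle-cord side by at most $|v_iv_{i+1}|$ via the triangle inequality. This is consistent with the style of Lemmas~\ref{lem:20}--\ref{lem:40}, where all monotonicity claims reduce to a single triangle-inequality comparison applied to the explicit expressions for $\beta$ and $\gamma$.
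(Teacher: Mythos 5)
Your proof is correct and rests on the same core idea as the paper's: transfer the witness $k^\ast=I(i+1,\cdot)$ to the pair with left endpoint $i$ and apply the triangle inequality $|v_iv_{i+1}|+|v_{i+1}v_h|\ge |v_iv_h|$. The only difference is orientation: you argue directly at $h=j'(i+1)$ that $\gamma(i,h)\le\beta(i,h)$, so that only the ``smallest index'' definition of $j'(i)$ is needed to conclude, whereas the paper argues contrapositively at $j'(i)-1$ that $\gamma(i+1,j'(i)-1)>\beta(i+1,j'(i)-1)$ and then relies on the monotonicity of $\gamma$ and $\beta$; both amount to the same underlying inequality $\gamma(i+1,j)-\beta(i+1,j)\ge\gamma(i,j)-\beta(i,j)$.
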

\begin{proof}
	If $j'(i)\leq i+1$, then the lemma is trivially true as $j'(i+1)\geq i+1$. Otherwise, let
	$h=j'(i)\geq i+2$. According to the definition of $j'(i)$,
	$\gamma(i,h-1)> \beta(i,h-1)$. In the following, we prove that
	$\gamma(i+1,h-1)> \beta(i+1,h-1)$, which would imply that
	$j'(i+1)\geq h=j'(i)$. To this end, since $\gamma(i,h-1)> \beta(i,h-1)$, it is sufficient to prove the following
	\begin{equation*}
	\gamma(i+1,h-1)- \beta(i+1,h-1)\geq \gamma(i,h-1)- \beta(i,h-1).
	\end{equation*}

	Since $\gamma(i,h-1)=|v_{i}v_{h-1}|+d_P(v_{h-1},v_n)$ and
	$\gamma(i+1,h-1)=|v_{i+1}v_{h-1}|+d_P(v_{h-1},v_n)$, the above
	inequality is equivalent to the following
	\begin{equation}\label{equ:10}
		|v_{i+1}v_{h-1}| + \beta(i,h-1)\geq |v_{i}v_{h-1}|+ \beta(i+1,h-1).
	\end{equation}

	Let $k=I(i+1,h-1)$. By definition, we have
	$\beta(i+1,h-1)=d_{C(i+1,h-1)}(v_{i+1},v_k)=\min\{d_P(v_{i+1},v_k),|v_{i+1}v_{h-1}|+d_P(v_{h-1},v_k)\}$.

	Since $k\in [i+1,h-1]$, observe that $\beta(i,h-1)\geq \min\{d_P(v_{i},v_k),|v_{i}v_{h-1}|+d_P(v_{h-1},v_k)\}$.
Further, by the triangle inequality, we have 
$|v_{i+1}v_{h-1}|+ d_P(v_i,v_{k})  = |v_{i+1}v_{h-1}|+ d_P(v_i,v_{i+1}) +
			d_P(v_{i+1},v_k) \geq |v_{i}v_{h-1}|+ d_P(v_{i+1},v_k)$. Hence, we can derive
\begin{equation*}
		\begin{split}
|v_{i+1}v_{h-1}| + \beta(i,h-1) & \geq |v_{i+1}v_{h-1}| + \min\{d_P(v_i,v_k),|v_iv_{h-1}|+d_P(v_{h-1},v_k)\}\\
& = \min\{|v_{i+1}v_{h-1}|+d_P(v_i,v_k),|v_{i+1}v_{h-1}|+ |v_iv_{h-1}|+d_P(v_{h-1},v_k)\}\\
& \geq\min\{|v_{i}v_{h-1}|+ d_P(v_{i+1},v_k), |v_{i+1}v_{h-1}|+ |v_iv_{h-1}|+d_P(v_{h-1},v_k) \}\\
&=|v_{i}v_{h-1}|+\min\{d_P(v_{i+1},v_k), |v_{i+1}v_{h-1}|+ d_P(v_{h-1},v_k)\}\\
&= |v_{i}v_{h-1}|+\beta(i+1,h-1),
	\end{split}
	\end{equation*}
which proves Equation~\eqref{equ:10}.\qed
%	To simplify the notation, and $k=I(i,h-1)$. Then, we have
%	$\beta(i,h-1)=d_{C(i,h-1)}(v_i,v_k)=\min\{d_P(v_i,v_k),|v_iv_{h-1}|+d_P(v_k,v_{h-1})\}$.
%
%We claim that $k\geq i+1$. Indeed, because $h\geq i+2$, we have $h-1\geq i+1$ and $\beta(i,h-1)>0$. Assume to the contrary that $k<i+1$. Then, $k=i$, and thus $\beta(i,h-1)=d_{C(i,h-1)}(v_i,v_i)=0$, incurring contradiction. The claim thus follows.
%
%
%	Since $k\geq i+1$, observe that $\beta(i+1,h-1)\leq \max\{d_P(v_{i+1},v_k),|v_{i+1}v_{h-1}|+d_P(v_{k},v_{h-1})\}$.
%Further, by triangle inequality, it holds that
%$|v_{i+1}v_{h-1}|+ d_P(v_i,v_{k})  = |v_{i+1}v_{h-1}|+ d_P(v_i,v_{i+1}) +
%			d_P(v_{i+1},v_k) \geq |v_{i}v_{h-1}|+ d_P(v_{i+1},v_k)$. Hence, we have
%\begin{equation*}
%		\begin{split}
%|v_{i+1}v_{h-1}| + \beta(i,h-1) & =|v_{i+1}v_{h-1}| + \min\{d_P(v_i,v_k),|v_iv_{h-1}|+d_P(v_k,v_{h-1})\}\\
%& = \min\{|v_{i+1}v_{h-1}|+d_P(v_i,v_k),|v_{i+1}v_{h-1}|+ |v_iv_{h-1}|+d_P(v_k,v_{h-1})\}\\
%& \geq\min\{|v_{i}v_{h-1}|+ d_P(v_{i+1},v_k), |v_{i+1}v_{h-1}|+ |v_iv_{h-1}|+d_P(v_k,v_{h-1}) \}\\
%&=|v_{i}v_{h-1}|+\min\{d_P(v_{i+1},v_k), |v_{i+1}v_{h-1}|+ d_P(v_k,v_{h-1})\}\\
%&\geq |v_{i}v_{h-1}|+\beta(i+1,h-1),
%	\end{split}
%	\end{equation*}
%which proves Equation~\eqref{equ:10}.\qed
\end{proof}

Based on the above several monotonicity properties, we present our linear time algorithm for computing $\lambda_i$ and $j(i)$ for all $i\in [1,n]$, as follows.
Recall that we have done preprocessing so that $d_P(v_i,v_j)$ can be computed in $O(1)$ time for any pair $(i,j)$.

Starting with $i=1$ and $j=1$, we increment $j$ from $1$ to $n$. For each $j$, we maintain the four values $\gamma(i,j-1)$, $\gamma(i,j)$, $\beta(i,j-1)$, and $\beta(i,j)$. So $\alpha(i,j-1)$ and $\alpha(i,j)$ can be obtained in $O(1)$ time. Since $\gamma(i,j)=|v_iv_j|+d_P(v_j,v_n)$, $\gamma(i,j)$ can be computed in $O(1)$ time, and the same applies to $\gamma(i,j-1)$. We will explain how to compute the $\beta$ values later. During the increasing of $j$, if the first time we find $\gamma(i,j)\leq \beta(i,j)$, then $j'(i)=j$. By Observation~\ref{obser:30}, $\lambda_i=\min\{\alpha(i,j-1),\alpha(i,j)\}$.

Then, we increase $i$ by one (for differentiation, we use $i+1$ to denote the increased $i$). By Lemma~\ref{lem:50}, to determine $j'(i+1)$, we only need to start $j$ from $j=j'(i)$. Following the similar procedure as above, we increase $j$ and maintain $\gamma(i+1,j-1)$, $\gamma(i+1,j)$, $\beta(i+1,j-1)$, and $\beta(i+1,j)$. Initially when $j=j'(i)$, $\gamma(i+1,j-1)$ and $\gamma(i+1,j)$ can be computed in $O(1)$ time as discussed before; for $\beta(i+1,j-1)$ and $\beta(i+1,j)$, we will show later that they can be computed in $O(1)$ amortized time.
%based on the values $\beta(i,j-1)$ and $\beta(i,j)$, which have been computed when computing $j'(i)$.
In this way, the total time for computing $\lambda_i$ and $j(i)$ for all $i\in [1,n]$ is $O(n)$.

It remains to describe how to compute the values $\beta(i,j)$.
As discussed before, by Observation~\ref{obser:20}, it is sufficient to determine $I'(i,j)$, after which $\beta(i,j)$ can be computed in $O(1)$ time.

Our algorithm relies on the monotonicity properties of Lemma~\ref{lem:30}. Initially, when $i=j=1$, we let $k=1$. As $j$ increases, we also increase $k$. We can compute both $d_P(v_i,v_k)$ and $|v_iv_j|+d_P(v_k,v_j)$ in constant time for each triple $(i,k,j)$. During the increasing of $k$, if we find $d_P(v_i,v_k)\geq |v_iv_j|+d_P(v_k,v_j)$ for the first time, then $I'(i,j)$ is $k$.
%Otherwise (i.e., when $k=j$ we still have $d_P(v_i,v_k)< |v_iv_j|+d_P(v_k,v_j)$), $I'(i,j)=k+1$ (in this case $I(i,j)=k$ by Observation~\ref{obser:20}).
After $j$ is increased, we need to compute $\beta(i,j+1)$, i.e., determine $I'(i,j+1)$ (for differentiation, we use $j+1$ to refer to the increased $j$). To this end, by Lemma~\ref{lem:30}, we have $k$ start from $I'(i,j)$, which is the exactly current value of $k$. Similarly, when $i$ is increased and we need to determine $I'(i+1,j)$, we also have $k$ start from $I'(i,j)$, i.e., the current value of $k$. Therefore, in the entire algorithm, the index $k$ continuously increases from $1$ to $n$.

In summary, in the overall algorithm $i$ and $j$ simultaneously increase from $1$ to $n$ with $i\leq j$. Hence, the number of $\beta(i,j)$ values computed in the entire algorithm is at most $2n$. Further, the procedure for computing all $\beta$ values increases $k$ from $1$ to $n$. Thus, the total time for computing all $\beta$ values in the algorithm is $O(n)$, and the amortized time for computing each $\beta$ value is $O(1)$.

This proves Lemma~\ref{lem:10}.

\subsection{The Configurations and Our Algorithm}

In this section, we present our algorithm for computing an optimal solution. As discussed before, we will consider all possible configurations for the optimal solution and compute a candidate solution for each such configuration.

Recall the definitions of $c^*$, $a^*$, $b^*$, $r^*$, and $\pi^*$ in the beginning of Section~\ref{sec:algo}.
We already discussed one configuration above, i.e., $c^*$ is on $P(v_1,v_{i^*})$. With the help of Lemma~\ref{lem:10}, we gave a linear time algorithm for it.
Another configuration, which is symmetric, is that $c^*$ is on $P(v_{j^*},v_n)$. Correspondingly, we can use an analogous algorithm (e.g., reverse the indices of $P$ and then apply the same algorithm) to compute a candidate solution in linear time. We omit the details. For the reference purpose, we consider the above two configurations as Case 0.

It remains to consider the configuration $c^*\in C(i^*,j^*)\setminus\{v_{i^*},v_{j^*}\}$. Here, $c^*$ is in the interior of either $e(v_{i^*},v_{j^*})$ or $P(v_{i^*},v_{j^*})$.
It is not difficult to see that $v_{a^*}$ is either $v_1$ or a vertex in $P(v_{i^*},v_{j^*})$, i.e., $a^*=1$ or $a^*\in [i^*,j^*]$.
Similarly, $b^*=n$ or $b^*\in [i^*,j^*]$. Depending on whether $a^*=1$, $b^*=n$, or both $a^*$ and $b^*$ are in $[{i^*},{j^*}]$, there are three main cases.

\subsubsection{Case 1. $a^*=1$.}

In this case, depending on whether $b^*=n$ or $b^*\in [i^*,j^*]$, there are two cases.

\subsubsection{Case 1.1. $b^*=n$.}

In this case, if $\pi^*$ does not contain $e(i^*,j^*)$, then $\pi^*$ is the entire path $P$. Correspondingly, we keep a candidate solution with $d_P(v_1,v_n)/2$ as the radius.

In the following, we focus on the case where $\pi^*$ contains $e(i^*,j^*)$. As $c^*\in C(v_{i^*},v_{j^*})\setminus\{v_{i^*},v_{j^*}\}$ and $c^*\in \pi^*$, $c^*$ must be in the interior of $e(v_{i^*},v_{j^*})$ (e.g., see Fig.~\ref{fig:conf11}).

\begin{figure}[t]
\begin{minipage}[t]{\textwidth}
\begin{center}
\includegraphics[height=0.7in]{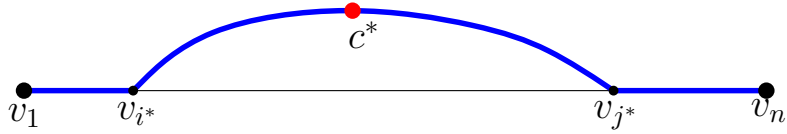}
\caption{\footnotesize Illustrating the configuration for Case 1.1, where $c^*\in e(v_{i^*},v_{j^*})$, $a^*=1$, and $b^*=n$. The thick (blue) path is $\pi^*$.}
\label{fig:conf11}
\end{center}
\end{minipage}
\vspace{-0.15in}
\end{figure}

We make an assumption on $j^*$ that no index $j>j^*$ exists such that $(i^*,j)$ is also an optimal solution with the same  configuration as $(i^*,j^*)$, since otherwise we could instead consider $(i^*,j)$ as $(i^*,j^*)$.
%More specifically, we assume that there is no index $j>j^*$ such that (1) $e(i^*,j)$ is also an optimal edge containing a center $c$; (2) $v_1$ and $v_n$ are two farthest vertices of $c$ in $P\cup\{e(v_{i^*},v_j)\}$; (3) a longest path of $P\cup\{e(v_{i^*},v_j)\}$ from $v_1$ to $v_n$ contains $e(i^*,j)$.
We also assume that none of the previously discussed configurations has an optimal solution since otherwise our previously obtained candidate solutions already have an optimal one.
%We make this assumption for all other cases to be discussed later and we will not explicitly mention this again.
With these assumptions, we have the following lemma, which is a key observation for our algorithm.

\begin{lemma}\label{lem:60}
Let $k^*$ be the smallest index such that $d_P(v_{i^*},v_{k^*})>d_P(v_1,v_{i^*})$. Such an index $k^*$ must exist in $[i^*,j^*]$. Further, $j^*$ is the largest index $j\in [k^*,n]$ such that $d_P(v_{k^*},v_j)\leq d_P(v_j,v_n)$.
\end{lemma}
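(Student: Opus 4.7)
The plan is to express $r^*$ in closed form using the assumption that $\pi^*$ contains the edge, then for both parts compare $(i^*,j^*)$ to nearby candidates via Lemma~\ref{lem:20}. The boundary sub-case, in which the balanced center slides off the new edge, will be dispatched using the standing assumption that no earlier configuration (in particular Case~0) is optimal. Let $x=|c^*v_{i^*}|$ and $y=|c^*v_{j^*}|$, so $x,y>0$ and $x+y=|v_{i^*}v_{j^*}|$. Because $\pi^*$ contains $e(v_{i^*},v_{j^*})$ with $a^*=1$ and $b^*=n$, the two halves of the edge are traversed in opposite directions in $\pi^*$, giving $r^*=x+d_P(v_1,v_{i^*})=y+d_P(v_{j^*},v_n)$ and hence $r^*=\tfrac{1}{2}(\gamma(i^*,j^*)+d_P(v_1,v_{i^*}))$, with $r^*>d_P(v_1,v_{i^*})$ since $x>0$.

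\textbf{Part~(a): $k^*\in[i^*,j^*]$.} First, the triangle inequality gives $d_P(v_{i^*},v_n)\geq|v_{i^*}v_{j^*}|+d_P(v_{j^*},v_n)=2r^*-d_P(v_1,v_{i^*})>d_P(v_1,v_{i^*})$, so $k^*$ exists and $k^*\leq n$; also $k^*>i^*$ is trivial. To see $k^*\leq j^*$, I will argue by contradiction: suppose $k^*>j^*$, so $d_P(v_{i^*},v_k)\leq d_P(v_1,v_{i^*})$ for all $k\in[i^*,j^*]$. Set $r'=\tfrac{1}{2}(\gamma(i^*,j^*+1)+d_P(v_1,v_{i^*}))\leq r^*$ by Lemma~\ref{lem:20}. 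If $\gamma(i^*,j^*+1)\geq d_P(v_1,v_{i^*})$, I place the balanced center $c'$ on $e(v_{i^*},v_{j^*+1})$; a short triangle-inequality check gives $r^*-r'\leq d_P(v_{j^*},v_{j^*+1})$, hence the $v_{j^*+1}$-side length $y'\geq y>0$, so $c'$ is strictly interior. Routine feasibility of all vertices within $r'$ then makes $(i^*,j^*+1)$ realize the same Case~1.1 configuration with radius $\leq r'\leq r^*$, contradicting either the optimality of $r^*$ (if strict) or the maximality of $j^*$ (if equal). If instead $\gamma(i^*,j^*+1)<d_P(v_1,v_{i^*})$, then for every $k\in[i^*,j^*+1]$, $d_{C(i^*,j^*+1)}(v_{i^*},v_k)\leq d_P(v_1,v_{i^*})$ (the $P$-route suffices when $k\leq j^*$, and for $k=j^*+1$ the edge gives $d_C\leq|v_{i^*}v_{j^*+1}|\leq\gamma(i^*,j^*+1)$), so $\lambda_{i^*}\leq\alpha(i^*,j^*+1)\leq d_P(v_1,v_{i^*})$; Case~0's first subcase at $i=i^*$ then returns candidate radius $\leq d_P(v_1,v_{i^*})<r^*$, contradicting the standing assumption.

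\textbf{Part~(b): maximality of $j^*$.} The inequality $d_P(v_{k^*},v_{j^*})\leq d_P(v_{j^*},v_n)$ is forced because $x+d_P(v_{i^*},v_{k^*})>r^*$ by the definition of $k^*$, so $c^*$'s shortest path to $v_{k^*}$ must route through $v_{j^*}$, giving $y+d_P(v_{j^*},v_{k^*})\leq r^*$. For maximality, I assume some $j'>j^*$ in $[k^*,n]$ satisfies the stated condition and apply the same dichotomy to $(i^*,j')$; the new feasibility check, for $k\in[k^*,j']$, uses $d_P(v_k,v_{j'})\leq d_P(v_{k^*},v_{j'})\leq d_P(v_{j'},v_n)$ to route each such vertex via $v_{j'}$ within $r'$. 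The main obstacle throughout is the boundary sub-case in which the balanced center $c'$ fails to lie strictly inside the new edge: handling it cleanly requires tight triangle-inequality bounds on $\gamma$ and on the cycle distances $d_C$, so that $\lambda_{i^*}$ drops below $d_P(v_1,v_{i^*})$ and the already-ruled-out Case~0 yields the contradiction.
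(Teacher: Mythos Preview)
Your proposal is correct and the overall strategy---move from $(i^*,j^*)$ to a larger $j$, split into a ``center stays on the new edge'' case and a ``degenerate'' case that falls back to Case~0---matches the paper's. The execution, however, is organised differently in a couple of instructive ways.

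First, the dichotomy is cut along a different line. The paper places a candidate center at distance $|c^*v_{i^*}|$ from $v_{i^*}$ along the concatenation $e(v_{i^*},v_{j^*+1})\cup e(v_{j^*+1},v_{j^*})$ and splits on whether that point lands on $e(v_{i^*},v_{j^*+1})$ or on $e(v_{j^*},v_{j^*+1})$. You instead compute the balanced radius $r'=\tfrac12(\gamma(i^*,j^*+1)+d_P(v_1,v_{i^*}))$ and split on the sign of $\gamma(i^*,j^*+1)-d_P(v_1,v_{i^*})$. These are not the same condition, but each cleanly separates the two outcomes.

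Second, in the degenerate branch the paper and you invoke \emph{opposite} halves of Case~0. The paper sets $c'=v_{j^*+1}$ and shows it is a feasible center, landing in the configuration ``center on $P(v_{j^*},v_n)$''. You instead bound $\beta(i^*,j^*+1)$ and $\gamma(i^*,j^*+1)$ to conclude $\lambda_{i^*}\le d_P(v_1,v_{i^*})$, so the Case~0 candidate with center on $P(v_1,v_{i^*})$ already achieves radius $\le d_P(v_1,v_{i^*})<r^*$. Your route is a bit more algebraic and reuses Lemma~\ref{lem:20} and the $\alpha,\beta,\gamma,\lambda$ machinery; the paper's is more direct but needs a separate $j^*=n$ sub-case for Part~(a), which your existence argument $k^*\le n$ handles implicitly. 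One small wording fix: in your degenerate branch the actual contradiction is with the optimality of $r^*$ (you exhibit a feasible solution of strictly smaller radius), not with the ``standing assumption'' that Case~0 is non-optimal.
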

\begin{proof}
Assume to the contrary that such an index $k^*$ as stated in the lemma doest not exist in $[i^*,j^*]$. Then, $d_P(v_{i^*},v_j)\leq d_P(v_1,v_{i^*})$ for all $j\in [i^*,j^*]$. Depending on whether $j^*=n$, there are two cases.

If $j^*=n$, then let $c$ be a point on $e(v_{i^*},v_{j^*})$ arbitrarily close to $c^*$ and to the left of $c^*$. Since $c^*$ is in the interior of $e(v_{i^*},v_{j^*})$, such a point $c$ must exist on $e(v_{i^*},v_{j^*})$.
Note that $|c^*v_{i^*}|$ is the length of the sub-edge of $e(v_{i^*},v_{j^*})$ from $c^*$ to $v_{i^*}$. Similarly, $|cv_{i^*}|$ the length of the sub-edge of $e(v_{i^*},v_{j^*})$ from $c$ to $v_{i^*}$.
Since $|cv_{i^*}|<|c^*v_{i^*}|$ and $|c^*v_{i^*}|+d_P(v_{i^*},v_1)=r^*$, $|cv_{i^*}|+d_P(v_{i^*},v_1)<r^*$. On the other hand, for each $j\in [i^*,j^*]$, since $d_P(v_{i^*},v_j)\leq d_P(v_1,v_{i^*})$, we also have $|cv_{i^*}|+d_P(v_{i^*},v_j)<r^*$. Because $j^*=n$, the above implies that the distance from $c$ to all vertices in $G(i^*,j^*)$ is strictly smaller than $r^*$, which contradicts with the definition of $r^*$.
%Hence, $j^*$ cannot be equal to $n$.

If $j^*<n$, then consider the graph $G(i^*,j^*+1)$ (which is $P\cup\{e(v_{i^*},v_{j^*+1})\}$). Let $c$ be a point on $e(v_{i^*},v_{j^*+1})\cup e(v_{j^*+1},v_{j^*})$ whose distance along $e(v_{i^*},v_{j^*+1})\cup e(v_{j^*+1},v_{j^*})$ from $v_{i^*}$ is $|c^*v_{i^*}|$. Note that such a point $c$ must exist since $|v_{i^*}v_{j^*+1}| + |v_{j^*+1}v_{j^*}|\geq |v_{i^*}v_{j^*}|$ due to the triangle inequality. The point $c$ is either on $e(v_{i^*},v_{j^*+1})\setminus\{v_{j^*+1}\}$ or on $e(v_{j^*},v_{j^*+1})$. We analyze the two cases below.

\begin{figure}[t]
\begin{minipage}[t]{0.49\textwidth}
\begin{center}
\includegraphics[height=0.6in]{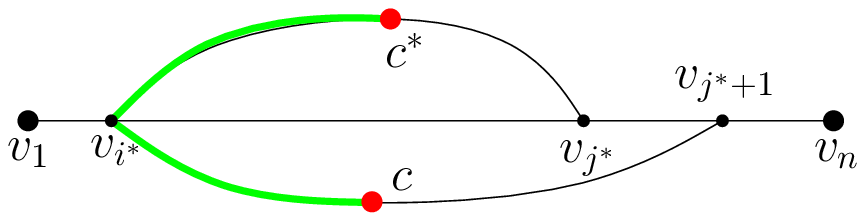}
\caption{\footnotesize Illustrating the proof of Lemma~\ref{lem:60} where $c\in e(v_{i^*},v_{j^*+1})\setminus\{v_{j^*+1}\}$: The two thick (green) curves have the same length.}
\label{fig:newgraph}
\end{center}
\end{minipage}
\hspace{0.05in}
\begin{minipage}[t]{0.49\textwidth}
\begin{center}
\includegraphics[height=0.6in]{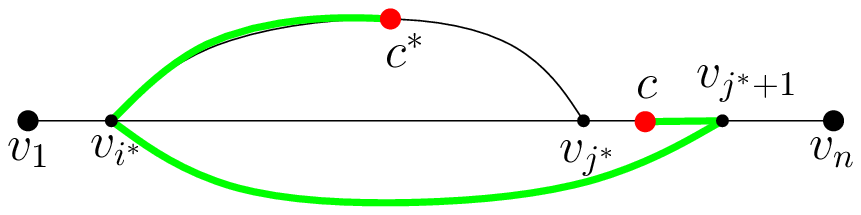}
\caption{\footnotesize Illustrating the proof of Lemma~\ref{lem:60} where $c\in e(v_{j^*},v_{j^*+1})$: The two thick (green) curves have the same length.}
\label{fig:newgraph10}
\end{center}
\end{minipage}
\vspace{-0.15in}
\end{figure}

If $c$ is on $e(v_{i^*},v_{j^*+1})\setminus\{v_{j^*+1}\}$ (e.g., see Fig.~\ref{fig:newgraph}), then we claim that the distances from $c$ to all vertices in $G(i^*,j^*+1)$ are no more than $r^*$. Indeed, for each $j\in [1,i^*]$, by the definition of $c$, $d_{G(i^*,j^*+1)}(c,v_j)\leq |cv_{i^*}|+d_P(v_{i^*},v_j)\leq |cv_{i^*}|+d_P(v_{i^*},v_1)=|c^*v_{i^*}|+d_P(v_{i^*},v_1)=r^*$. For each $j\in [i^*,j^*]$, since $d_P(v_{i^*},v_j)\leq d_P(v_{i^*},v_1)$, we have $d_{G(i^*,j^*+1)}(c,v_j)\leq |cv_{i^*}|+d_P(v_{i^*},v_j)\leq |cv_{i^*}|+d_P(v_{i^*},v_1)= r^*$.
For each $j\in [j^*+1,n]$, by the triangle inequality, we can derive
\begin{equation}\label{equ:30}
\begin{split}
d_{G(i^*,j^*+1)}(c,v_j)& \leq |cv_{j^*+1}|+d_P(v_{j^*+1},v_j)
 =|v_{i^*}v_{j^*+1}|-|v_{i^*}c|+d_P(v_{j^*+1},v_j)\\
& \leq |v_{i^*}v_{j^*}|+|v_{j^*}v_{j^*+1}|-|v_{i^*}c|+d_P(v_{j^*+1},v_j)\\
& =|v_{i^*}v_{j^*}|+|v_{j^*}v_{j^*+1}|-|v_{i^*}c^*|+d_P(v_{j^*+1},v_j)\\
& =|c^*v_{j^*}|+|v_{j^*}v_{j^*+1}|+d_P(v_{j^*+1},v_j) \\
& =|c^*v_{j^*}|+d_P(v_{j^*},v_j) \leq |c^*v_{j^*}|+d_P(v_{j^*},v_n) = r^*.
\end{split}
\end{equation}
The above proves the claim, which implies that $(i^*,j^*+1)$ is also an optimal solution with the same configuration as $(i^*,j^*)$. But this contradicts with our previous assumption on $j^*$: no index $j>j^*$ exists such that $(i^*,j)$ is also an optimal solution with the same  configuration as $(i^*,j^*)$.

If $c$ is on $e(v_{j^*},v_{j^*+1})$ (e.g., see Fig.~\ref{fig:newgraph10}), then let $c'=v_{j^*+1}$. We claim that the distances from $c'$ to all vertices in $G(i^*,j^*+1)$ are at most $r^*$. We prove the claim below.

On the one hand, for any $j\in [1,j^*]$, by the definition of $c'$, we have
\begin{equation*}
\begin{split}
d_{G(i^*,j^*+1)}(c',v_j) & \leq |c'v_{i^*}|+d_P(v_{i^*},v_j) \leq |c'v_{i^*}|+d_P(v_{i^*},v_1) \\
& \leq |cv_{j^*+1}|+|v_{j^*+1}v_{i^*}|+d_P(v_{i^*},v_1) \\
&= |c^*v_{i^*}|+d_P(v_{i^*},v_1) = r^*.\\
\end{split}
\end{equation*}
Note that because $c\in e(v_{j^*},v_{j^*+1})$, $|cv_{j^*+1}|$ is the length of the sub-edge of $e(v_{j^*},v_{j^*+1})$ from $c$ to $v_{j^*+1}$.

On the other hand, for any $j\in [j^*+1,n]$, by the definition of $c'$, we have
\begin{equation*}
\begin{split}
d_{G(i^*,j^*+1)}(c',v_j)\leq d_P(c',v_{j}) & =d_P(v_{j^*+1},v_{j})\leq d_P(v_{j^*+1},v_{n}) \leq |c^*v_{j^*}|+d_P(v_{j^*},v_n) = r^*.
\end{split}
\end{equation*}

The above proves the claim. The claim implies that the graph $G(i^*,j^*+1)$ has a center in $P(v_{j^*+1},v_n)$, which is a configuration of Case 0. But this contradicts with our assumption that none of the previously discussed configurations has an optimal solution.

%\medskip
The above proves the index $k^*$ must exit in $[i^*,j^*]$.
In the following, we prove that $j^*$ is the largest index $j\in [k^*,n]$ with $d_P(v_{k^*},v_j)\leq d_P(v_j,v_n)$. The proof techniques are somewhat similar. Note that there must exist an index $j\in[k^*,n]$ such that $d_P(v_{k^*},v_j)\leq d_P(v_j,v_n)$ (e.g., $j=k^*$).
If $j^*=n$, then the statement is obviously true. In the following we assume that $j^*<n$.

Assume to the contrary that $j^*$ is not the largest such index. Then, it must be true that $d_P(v_{k^*},v_{j^*+1})\leq d_P(v_{j^*+1},v_n)$. Consider the graph $G(i^*,j^*+1)$. As above, let $c$ be a point on $e(v_{i^*},v_{j^*+1})\cup e(v_{j^*+1},v_{j^*})$ whose distance along $e(v_{i^*},v_{j^*+1})\cup e(v_{j^*+1},v_{j^*})$ from $v_{i^*}$ is $|c^*v_{i^*}|$.
%Depending on whether $c$ is on $e(v_{i^*},v_{j^*+1})$, there are two cases.

If $c$ is on $e(v_{i^*},v_{j^*+1})\setminus\{v_{j^*+1}\}$, then we claim that the distances from $c$ to all vertices in $G(i^*,j^*+1)$ are no more than $r^*$. For each $j\in [1,k^*-1]$, by the same analysis as above, $d_{G(i^*,j^*+1)}(c,v_j)\leq |cv_{i^*}|+d_P(v_{i^*},v_j)\leq |cv_{i^*}|+d_P(v_{i^*},v_1)=r^*$.
For each $j\in [k^*,n]$, $d_{G(i^*,j^*+1)}(c,v_j)\leq |cv_{j^*+1}|+d_P(v_{j^*+1},v_j)\leq |cv_{j^*+1}|+d_P(v_{j^*+1},v_n)$.
By Equation~\eqref{equ:30}, we can show that $|cv_{j^*+1}|+d_P(v_{j^*+1},v_n)\leq r^*$. The claim is thus proved. The claim implies that $(i^*,j^*+1)$ is also an optimal solution with the same configuration as $(i^*,j^*)$. But this contradicts with our assumption on $j^*$.

If $c$ is on $e(v_{j^*},v_{j^*+1})$, then let $c'=v_{j^*+1}$. By the same analysis as above, we can show that the distances from $c'$ to all vertices in $G(i^*,j^*+1)$ are no more than $r^*$. Since $c'\in P(v_{j^*+1},v_n)$, the claim implies that there is an optimal solution conforming with the configuration of Case 0. This again incurs contradiction.\qed
\end{proof}

Based on Lemma~\ref{lem:60}, our algorithm for this case works as follows.

For each index $i\in [1,n]$, define $k(i)$ as the smallest index $k\in [i,n]$ such that $d_P(v_i,v_k)>d_P(v_1,v_i)$ (let $k(i)=n+1$ if no such index exists), and if $k(i)\leq n$, define $j(i)$\footnote{This notation was used before for a different meaning. Because we have many cases to consider, to save notation, we will repeatedly use the same notation as long as the context is clear.} as the largest index $j\in [k(i),n]$ such that $d_P(v_{k(i)},v_j)\leq d_P(v_j,v_n)$ (let $j(i)=n+1$ if $k(i)=n+1$).
The following observation is self-evident.

\begin{observation}
For each $i\in [1,n-1]$, $k(i)\leq k(i+1)$ and $j(i)\leq j(i+1)$.
\end{observation}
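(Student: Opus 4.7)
The plan is to verify both monotonicity claims directly from the definitions, leveraging the additivity of distances along the oriented path $P$: for $a\leq b\leq c$ one has $d_P(v_a,v_c)=d_P(v_a,v_b)+d_P(v_b,v_c)$. The whole observation is essentially a definition-chase, with the only mild subtlety being the book-keeping around the sentinel value $n+1$.

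For $k(i)\leq k(i+1)$, I would rewrite the defining inequality of $k(i+1)$, namely $d_P(v_{i+1},v_k)>d_P(v_1,v_{i+1})$ for $k\in[i+1,n]$, in terms of $v_i$ by substituting $d_P(v_1,v_{i+1})=d_P(v_1,v_i)+d_P(v_i,v_{i+1})$ and $d_P(v_{i+1},v_k)=d_P(v_i,v_k)-d_P(v_i,v_{i+1})$. This recasts the condition as the strictly stronger inequality $d_P(v_i,v_k)>d_P(v_1,v_i)+2\,d_P(v_i,v_{i+1})$. Therefore any $k\in[i+1,k(i)-1]$ (if such $k$ exists) already fails the weaker condition $d_P(v_i,v_k)>d_P(v_1,v_i)$ by minimality of $k(i)$, and so it fails the stronger one too; this forces $k(i+1)\geq k(i)$. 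The case $k(i)\leq i+1$, and the case where no valid index exists for $i+1$ (so $k(i+1)=n+1$), are both immediate.

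For $j(i)\leq j(i+1)$, I would split into cases based on how $k(i+1)$ compares with $j(i)$. If $k(i+1)=n+1$, then $j(i+1)=n+1$ and the bound is trivial. Otherwise $k(i+1)\leq n$, so $j(i+1)\geq k(i+1)$ by definition. If $j(i)<k(i+1)$, then $j(i)<k(i+1)\leq j(i+1)$, and we are done. If $j(i)\geq k(i+1)$, then $k(i)\leq k(i+1)\leq j(i)$, and additivity along $P$ gives
\[
 d_P(v_{k(i+1)},v_{j(i)})=d_P(v_{k(i)},v_{j(i)})-d_P(v_{k(i)},v_{k(i+1)})\leq d_P(v_{k(i)},v_{j(i)})\leq d_P(v_{j(i)},v_n),
\]
where the last inequality is precisely the defining condition of $j(i)$. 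Hence $j(i)$ itself is a valid candidate in the definition of $j(i+1)$, and maximality yields $j(i+1)\geq j(i)$.

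I expect no real obstacle: the argument is a straightforward manipulation of the definitions together with the additivity of $d_P$. The only place to be careful is handling the sentinel $n+1$ and the subcase $j(i)<k(i+1)$, both of which are disposed of in one line each.
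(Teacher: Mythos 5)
Your proof is correct. The paper itself offers no argument here (it declares the observation ``self-evident''), and your definition-chase via additivity of $d_P$ along the path --- showing the defining condition for $k(i+1)$ is strictly stronger than that for $k(i)$, and that $j(i)$ remains a valid candidate in the definition of $j(i+1)$ when $j(i)\geq k(i+1)$ --- is exactly the routine verification the authors had in mind, with the sentinel cases handled properly.
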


By the above observation, we can easily compute $k(i)$ and $j(i)$ for
all $i\in [1,n]$ in $O(n)$ time by a linear scan on $P$. We omit the
details.

For each $i$, if $j(i)\leq n$, let
$r(i)=(d_P(v_1,v_i)+|v_iv_{j(i)}|+d_P(v_{j(i)},v_n))/2$, and if
$d_P(v_1,v_i)<r(i)$ and $d_P(v_{j(i)},v_n)<r(i)$ (implies that the
center is in the interior of $e(v_i,v_{j(i)})$), then we keep
$(i,j(i))$ as a candidate solution with $r(i)$ as the radius (and the center is a point $c$ in $e(v_i,v_{j(i)})$ with $d_P(v_1,v_i)+|v_ic|=r(i)$). Note
that due to our definitions of $k(i)$ and $j(i)$, the
solution is feasible, i.e., the distances
from $c$ to all vertices in the graph $G(i,j(i))$ are no more than $r(i)$.
The above computes at most $n$ candidate solutions, and among them, we keep the one with the smallest $r(i)$ value as our candidate solution for this configuration. Based on our discussions, if this
configuration has an optimal solution, then our solution is also optimal.
The running time of the algorithm is $O(n)$.

\subsubsection{Case 1.2: $b^*\in [i^*,j^*]$.}

Note that $\pi^*$ either contains $e(i^*,j^*)$ or does not
contain any interior point of the edge.
Depending on whether $\pi^*$ contains $e(i^*,j^*)$, there are two cases.

\subsubsection{Case 1.2.1: $\pi^*$ contains $e(i^*,j^*)$.}

Recall that $c^*$ is in the interior of either $e(v_{i^*},v_{j^*})$ or $P(v_{i^*},v_{j^*})$. We discuss the two cases below.

\subsubsection{Case 1.2.1.1: $c^*\in e(i^*,j^*)$, e.g., see Fig.~\ref{fig:config20}.}

\begin{figure}[t]
\begin{minipage}[t]{\textwidth}
\begin{center}
\includegraphics[height=0.7in]{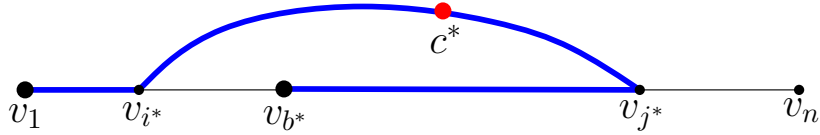}
\caption{\footnotesize  Illustrating the configuration for Case 1.2.1.1, where $c^*\in e(v_{i^*},v_{j^*})$, $a^*=1$, and $b^*\in [i^*,j^*]$. The thick (blue) path is $\pi^*$..}
\label{fig:config20}
\end{center}
\end{minipage}
\vspace{-0.15in}
\end{figure}

Our algorithm for this case is somewhat similar to that for Case 1.1.
We make an assumption on $j^*$ that no index $j<j^*$ exists such that $(i^*,j)$ is also an optimal solution with the same  configuration as $(i^*,j^*)$ since otherwise we could instead consider $(i^*,j)$ as $(i^*,j^*)$. We also assume that none of the previously discussed configurations has an optimal solution. We have the following lemma.
%which is similar to Lemma~\ref{lem:60}.

\begin{lemma}\label{lem:70}
\begin{enumerate}
\item
Let $k^*$ be the smallest index such that $d_P(v_{i^*},v_{k^*})>d_P(v_1,v_{i^*})$. Such an index $k^*$ must exist in $[i^*,j^*]$.
\item
$b^*=k^*$.
\item $j^*$ is the smallest index $j\in [k^*,n]$ such that $d_P(v_{k^*},v_j)> d_P(v_j,v_n)$.
\end{enumerate}
\end{lemma}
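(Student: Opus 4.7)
The plan is to prove the three claims in order, each by a contradiction argument that mirrors the structure of Lemma~\ref{lem:60}. In each case I perturb $(i^*, j^*)$ or the position of $c^*$, invoke the triangle inequality to bound the new distances, and show that the perturbed solution either strictly beats $r^*$, lands in a configuration we have already ruled out (contradicting the standing assumption that no earlier configuration admits an optimal solution), or yields an equally optimal solution at a smaller index $j < j^*$ (contradicting the minimality assumption on $j^*$ for this subcase).

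For Part~1, I assume for contradiction that $d_P(v_{i^*}, v_k)\leq d_P(v_1,v_{i^*})$ for every $k\in [i^*,j^*]$. Taking $k=b^*$, the distance from $c^*$ to $v_{b^*}$ via $v_{i^*}$ is at most $|c^*v_{i^*}|+d_P(v_1,v_{i^*})=r^*$, which when combined with $d_{G(i^*,j^*)}(c^*,v_{b^*})=r^*$ forces $d_P(v_{i^*},v_{b^*})=d_P(v_1,v_{i^*})$. I then consider the graph $G(i^*,j^*-1)$ and place a point $c$ on $e(v_{i^*},v_{j^*-1})\cup e(v_{j^*-1},v_{j^*})$ at path-distance $|c^*v_{i^*}|$ from $v_{i^*}$; such a point exists by the triangle inequality $|v_{i^*}v_{j^*-1}|+|v_{j^*-1}v_{j^*}|\geq |v_{i^*}v_{j^*}|$. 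Splitting on whether $c$ lies on $e(v_{i^*},v_{j^*-1})$ or on the path segment $e(v_{j^*-1},v_{j^*})\subset P$, I bound the distance from $c$ (or from $v_{j^*-1}$ in the second sub-case) to every vertex by $r^*$, using the hypothesized inequality on the left and repeated triangle-inequality estimates on the right essentially identical to those used to derive Equation~\eqref{equ:30}. The first sub-case contradicts the minimality of $j^*$ (via the pair $(i^*,j^*-1)$), and the second produces an optimal solution whose center lies on $P(v_{j^*-1},v_n)$, a Case~0 configuration we have assumed to be non-optimal.

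For Part~2, I use $r^*=|c^*v_{i^*}|+d_P(v_1,v_{i^*})$ directly. For $k\in [i^*,k^*-1]$, the route via $v_{i^*}$ gives distance at most $r^*$. For $k\in [k^*,j^*]$, that route strictly exceeds $r^*$, so the shortest path from $c^*$ to $v_k$ must pass through $v_{j^*}$, with length $|c^*v_{j^*}|+d_P(v_{j^*},v_k)$; this is monotonically decreasing in $k$ over $[k^*,j^*]$, hence maximized at $k=k^*$. Since $b^*\in [i^*,j^*]$ attains $r^*$, this forces $b^*=k^*$. For Part~3, this yields $r^*=|c^*v_{j^*}|+d_P(v_{j^*},v_{k^*})$. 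For every $j\geq j^*$ the bound $d_{G(i^*,j^*)}(c^*,v_j)\leq r^*$ applied via $v_{j^*}$ gives $d_P(v_{j^*},v_j)\leq d_P(v_{k^*},v_{j^*})$; equality at $j=n$ would place $v_n$ at distance $r^*$ from $c^*$, shoving the instance into Case~1.1, which we have assumed to be non-optimal. Hence $d_P(v_{k^*},v_{j^*})>d_P(v_{j^*},v_n)$, so $j^*$ satisfies the defining inequality. For minimality, suppose some $j'\in [k^*,j^*-1]$ satisfies $d_P(v_{k^*},v_{j'})>d_P(v_{j'},v_n)$; then in $G(i^*,j')$ I place $c$ on $e(v_{i^*},v_{j'})\cup P(v_{j'},v_{j^*})$ at path-distance $|c^*v_{i^*}|$ from $v_{i^*}$ and verify, by a case split on where $c$ lands and by triangle-inequality bounds on the three vertex groups $[1,k^*-1]$, $[k^*,j']$, and $[j'+1,n]$, that every distance from $c$ is at most $r^*$, yielding an optimal pair $(i^*,j')$ with $j'<j^*$, a contradiction.

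The main obstacle lies in Part~3: the perturbation toward a smaller $j'$ must accommodate vertices in $[j'+1,j^*]$ that previously routed via the old edge, and the upper bounds on their distance under the new center $c$ must be obtained from $d_P(v_{k^*},v_{j'})>d_P(v_{j'},v_n)$ together with the triangle inequality. This parallels the bound chain of Equation~\eqref{equ:30} but with the roles of $j^*$ and $j^*+1$ (in Lemma~\ref{lem:60}) replaced by $j^*$ and $j'$, and with the new hypothesis $d_P(v_{k^*},v_{j'})>d_P(v_{j'},v_n)$ supplying the inequality that used to come from $d_P(v_{k^*},v_{j^*+1})\leq d_P(v_{j^*+1},v_n)$ there. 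Once this estimate is established, the rest of the argument closely follows the template of Lemma~\ref{lem:60}.
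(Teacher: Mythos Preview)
Your overall plan is sound, and Parts~2 and~3 are close to the paper's argument, but your Part~1 strategy has a genuine gap.

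In Part~1 you propose to pass from $G(i^*,j^*)$ to $G(i^*,j^*-1)$ and bound all distances from a new center $c$ by ``Equation~\eqref{equ:30}-style'' estimates. That chain of inequalities, however, is direction-sensitive: in Lemma~\ref{lem:60} it handles the move $j^*\to j^*+1$, where for $j\geq j^*+1$ one has $|v_{j^*}v_{j^*+1}|+d_P(v_{j^*+1},v_j)=d_P(v_{j^*},v_j)$ and the extra edge length cancels. Going from $j^*$ to $j^*-1$, the analogous computation for a vertex $v_j$ with $j\geq j^*$ gives
\[
|cv_{j^*-1}|+d_P(v_{j^*-1},v_j)\;\leq\;|c^*v_{j^*}|+|v_{j^*-1}v_{j^*}|+d_P(v_{j^*-1},v_j)
\;=\;|c^*v_{j^*}|+2|v_{j^*-1}v_{j^*}|+d_P(v_{j^*},v_j),
\]
and the surplus $2|v_{j^*-1}v_{j^*}|$ does not vanish. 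Your contradiction hypothesis (that $d_P(v_{i^*},v_k)\leq d_P(v_1,v_{i^*})$ for all $k\in[i^*,j^*]$) says nothing about indices $j>j^*$, so neither the route via $v_{i^*}$ nor the route via $v_{j^*-1}$ yields the needed bound for those vertices. (In Part~3 the analogous move to $j^*-1$ \emph{does} work, but only because the extra hypothesis $d_P(v_{k^*},v_{j^*-1})>d_P(v_{j^*-1},v_n)$ lets you replace $d_P(v_{j^*-1},v_j)$ by $d_P(v_{j^*-1},v_{k^*})$, after which the edge length cancels as you want.) The paper avoids this entirely: for Part~1 it stays in $G(i^*,j^*)$ and simply slides $c^*$ slightly toward $v_{i^*}$ along $e(v_{i^*},v_{j^*})$, using the fact that $v_n$ is \emph{not} a farthest vertex (else Case~1.1 would apply) to get $|c^*v_{j^*}|+d_P(v_{j^*},v_n)<r^*$, which keeps all right-side distances strictly below $r^*$ after the perturbation.

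A smaller issue in your Part~2: showing that the distance to $v_k$ for $k\in[k^*,j^*]$ is maximized at $k=k^*$ does not by itself force $b^*=k^*$; you still need to rule out $b^*\in[i^*,k^*-1]$ with the two cycle routes to $v_{b^*}$ both of length exactly $r^*$. The paper dispatches this degenerate tie with a one-line perturbation (moving $c^*$ toward $v_{i^*}$ strictly decreases all distances). Your Part~3 is essentially the paper's argument; note that by monotonicity of the condition $d_P(v_{k^*},v_j)>d_P(v_j,v_n)$ it suffices to take $j'=j^*-1$, which simplifies the case split.
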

\begin{proof}
We first prove that $k^*$ exists. Assume to the contrary that this is not true. Then, $d_P(v_{i^*},v_{j})\leq d_P(v_1,v_{i^*})$ for all $j\in [i^*,j^*]$.
Let $c$ be a point of $e(i^*,j^*)$ to the left of $c^*$ and arbitrarily close to $c^*$. We show that the distances from $c$ to all vertices in $G(i^*,j^*)$ are all smaller than $r^*$, incurring contradiction.

Because none of the previously discussed configurations has an optimal solution, $v_n$ is not a farthest point of $c^*$, i.e., $|c^*v_{j^*}|+d_P(v_{j^*},v_n)<r^*$ (since otherwise Case 1.1 would happen). By the definition of $c$, $|cv_{j^*}|+d_P(v_{j^*},v_n)<r^*$.
Hence, for each $j\in [j^*,n]$, $d_{G(i^*,j^*)}(c,v_j)\leq |cv_{j^*}|+d_P(v_{j^*},v_j)\leq |cv_{j^*}|+d_P(v_{j^*},v_n)<r^*$. On the other hand, for each $j\in[1,j^*]$, $d_{G(i^*,j^*)}(c,v_j)\leq |cv_{i^*}|+d_P(v_{i^*},v_j)\leq |cv_{i^*}|+d_P(v_{i^*},v_1)<|c^*v_{i^*}|+d_P(v_{i^*},v_1)=r^*$.

This proves that $k^*$ must exist.

For the second part of the lemma,
%since $c^*$ is in the interior of $e(v_{i^*},v_{j^*})$, $e(v_{i^*},v_{j^*})$ is in $\pi^*$, and $|c^*v_{i^*}|+d_P(v_{i^*},v_{k^*})>r^*$, the shortest path from $c^*$ to $v_{k^*}$ in $G(i^*,j^*)$ is the union of the sub-edge of $e(i^*,j^*)$ from $c^*$ to $j^*$ and $P(v_{k^*},v_{j^*})$.
notice that the configuration tells that the shortest path from $c^*$ to $v_{b^*}$ in $G(i^*,j^*)$ is the union of the sub-edge of $e(i^*,j^*)$ from $c^*$ to $j^*$ and $P(v_{b^*},v_{j^*})$. Hence, $d_{G(i^*,j^*)}(c^*,v_{b^*})=|c^*v_{j^*}|+d_P(v_{j^*},v_{b^*})\leq |c^*v_{i^*}|+d_P(v_{i^*},v_{b^*})$.
Note that $|c^*v_{j^*}|+d_P(v_{j^*},v_{b^*})= |c^*v_{i^*}|+d_P(v_{i^*},v_{b^*})$ is not possible. Indeed, if this were true, then since $|c^*v_{j^*}|+d_P(v_{j^*},v_n)<r^*$, if we move $c^*$ slightly on $e(v_{i^*},v_{j^*})$ towards $v_{i^*}$, then the distances from $c^*$ to all vertices in $G(i^*,j^*)$ would be all smaller than $r^*$, incurring contradiction. Therefore, $|c^*v_{j^*}|+d_P(v_{j^*},v_{b^*})< |c^*v_{i^*}|+d_P(v_{i^*},v_{b^*})$.
Since $|c^*v_{j^*}|+d_P(v_{j^*},v_{b^*})=|c^*v_{i^*}|+d_P(v_{i^*},v_{1})=r^*$, we obtain $d_P(v_{i^*},v_{1})< d_P(v_{i^*},v_{b^*})$.
The definition of $k^*$ implies that $k^*\leq b^*$.

On the other hand, assume to the contrary that $k^*<b^*$.
Then, $|c^*v_{j^*}|+d_P(v_{j^*},v_{k^*})>|c^*v_{j^*}|+d_P(v_{j^*},v_{b^*})=d_{G(i^*,j^*)}(c^*,v_{b^*})=r^*$.
Further, since $d_P(v_{i^*},v_{k^*})>d_P(v_1,v_{i^*})$, $|c^*v_{i^*}|+d_P(v_{i^*},v_{k^*})>|c^*v_{i^*}|+d_P(v_{i^*},v_{1})=r^*$.
Consequently, $d_{G(i^*,j^*)}(c^*,v_{k^*})=\min\{|c^*v_{j^*}|+d_P(v_{j^*},v_{k^*}),|c^*v_{i^*}|+d_P(v_{i^*},v_{k^*})\}>r^*$, a contradiction. This proves that $k^*= b^*$.

We proceed to prove the third part of the lemma. Since $|c^*v_{j^*}|+d_P(v_{j^*},v_n)<r^*$  and $|c^*v_{j^*}|+d_P(v_{j^*},v_{k^*})=r^*$, we have $d_P(v_{k^*},v_{j^*})>d_P(v_{j^*},v_n)$.
Assume to the contrary the statement is not true. Then, it must hold that $d_P(v_{k^*},v_{j^*-1})> d_P(v_{j^*-1},v_n)$.

Let $c$ be a point on $e(v_{i^*},v_{j^*-1})\cup e(v_{j^*-1},v_{j^*})$ whose distance along $e(v_{i^*},v_{j^*-1})\cup e(v_{j^*-1},v_{j^*})$ from $v_{i^*}$ is $|c^*v_{i^*}|$. Such a point $c$ must exist since $|v_{i^*}v_{j^*-1}| + |v_{j^*-1}v_{j^*}|\geq |v_{i^*}v_{j^*}|$. Depending on whether $c$ is on $e(v_{i^*},v_{j^*-1})\setminus\{v_{j^*-1}\}$ or on $e(v_{j^*-1},v_{j^*})$, there are two cases. The proof is similar to that for Lemma~\ref{lem:60}. So we briefly discuss it.
%Consider the graph $G(i^*,j^*-1)=P\cup e(v_{i^*},v_{j^*-1})$.

\begin{figure}[t]
\begin{minipage}[t]{0.49\textwidth}
\begin{center}
\includegraphics[height=0.6in]{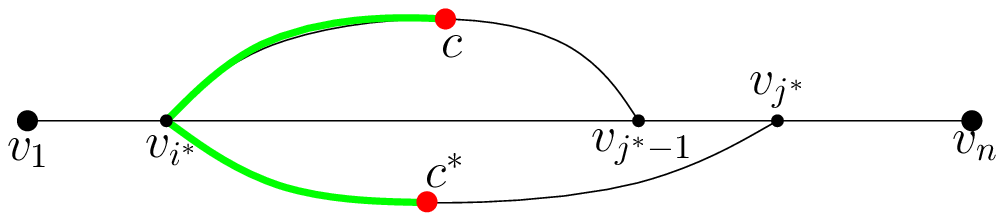}
\caption{\footnotesize Illustrating the proof of Lemma~\ref{lem:70} where $c\in e(v_{i^*},v_{j^*-1})\setminus\{v_{j^*-1}\}$: The two thick (green) curves have the same length.}
\label{fig:newgraph20}
\end{center}
\end{minipage}
\hspace{0.05in}
\begin{minipage}[t]{0.49\textwidth}
\begin{center}
\includegraphics[height=0.6in]{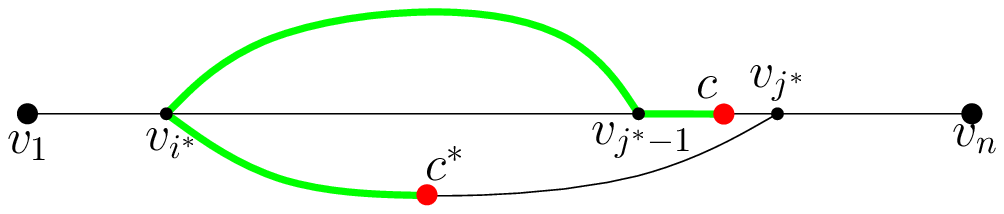}
\caption{\footnotesize Illustrating the proof of Lemma~\ref{lem:70} where $c\in e(v_{j^*-1},v_{j^*})$: The two thick (green) curves have the same length.}
\label{fig:newgraph30}
\end{center}
\end{minipage}
\vspace{-0.15in}
\end{figure}

If $c$ is on $e(v_{i^*},v_{j^*-1})\setminus\{v_{j^*-1}\}$, e.g., see Fig.~\ref{fig:newgraph20}, then we claim that the distances from $c$ to all vertices in $G(i^*,j^*-1)$ are at most $r^*$. Indeed, for each $j\in [1,k^*-1]$, $d_{G(i^*,j^*-1)}(c,v_j)\leq |cv_{i^*}|+d_P(v_{i^*},v_{j})=|c^*v_{i^*}|+d_P(v_{i^*},v_{j})\leq |c^*v_{i^*}|+d_P(v_{i^*},v_{1})=r^*$.
For each $j\in [k^*,n]$, since $d_P(v_{k^*},v_{j^*-1})> d_P(v_{j^*-1},v_n)$, $d_{G(i^*,j^*-1)}(c,v_j)\leq |cv_{j^*-1}|+d_P(v_{j^*-1},v_{j})\leq |cv_{j^*-1}|+d_P(v_{j^*-1},v_{k^*})$. In a similar way as Equation~\eqref{equ:30}, we can show that $|cv_{j^*-1}|+d_P(v_{j^*-1},v_{k^*})\leq |c^*v_{j^*}|+d_P(v_{j^*},v_{k^*})=r^*$. The claim thus follows.
Hence, $(i^*,j^*-1)$ is also an optimal solution with the same configuration as $(i^*,j^*)$. But this contradicts with our assumption on $j^*$: no index $j<j^*$ exists such that $(i^*,j)$ is also an optimal solution with the same  configuration as $(i^*,j^*)$.

If $c$ is on $e(v_{j^*},v_{j^*-1})$, e.g., see Fig.~\ref{fig:newgraph30}, then let $c'=v_{j^*-1}$. We claim that the distances from $c'$ to all vertices in $G(i^*,j^*-1)$ are at most $r^*$.
On the one hand, for any $j\in [1,k^*-1]$, $d_{G(i^*,j^*-1)}(c',v_j)\leq |c'v_{i^*}|+d_P(v_{i^*},v_j)\leq |c'v_{i^*}|+d_P(v_{i^*},v_1)\leq  |cc'|+|c'v_{i^*}|+d_P(v_{i^*},v_1)=r^*$. On the other hand, for any $j\in [k^*,n]$, since $d_P(v_{k^*},v_{j^*-1})> d_P(v_{j^*-1},v_n)$, $d_{G(i^*,j^*-1)}(c,v_j)\leq d_P(c',v_j)\leq d_P(c',v_{k^*})\leq d_P(v_{j^*},v_{k^*})\leq |c^*v_{j^*}|+d_P(v_{j^*},v_{k^*})=r^*$. This proves the claim.
The claim implies that $G(i^*,j^*-1)$ is also an optimal solution with a center at $v_{j^*-1}\in P(v_{j^*-1},v_n)$. But this is a configuration in Case~0, which contracts with our assumption that none of the previously discussed configurations has an optimal solution.  \qed
\end{proof}

Based on Lemma~\ref{lem:70}, our algorithm for this case works as follows.

For each index $i\in [1,n]$, define $k(i)$ as the smallest $k\in [i,n]$ such
that $d_P(v_i,v_k)>d_P(v_1,v_i)$ (let $k(i)=n+1$ if no such index exists),
and if $k(i)\leq n$, define $j(i)$ as the smallest
index $j\in [k(i),n]$ such that $d_P(v_{k(i)},v_j)> d_P(v_j,v_n)$ (let
$j(i)=n+1$ if no such index exists or if $k(i)=n+1$). It is not
difficult to see that for each $i\in [1,n-1]$, $k(i)\leq k(i+1)$ and
$j(i)\leq j(i+1)$.
The indices $k(i)$ and $j(i)$ can be computed in $O(n)$ time by a linear scan on $P$.
We omit the details.

For each $i$, if $j(i)\leq n$, then let
$r(i)=(d_P(v_1,v_i)+|v_iv_{j(i)}|+d_P(v_{j(i)},v_{k(i)}))/2$, and if
$d_P(v_1,v_i)<r(i)$ and $d_P(v_{j(i)},v_{k(i)})<r(i)$ (this implies that
the center is in the interior of $e(v_i,v_{j(i)})$), then we have a
candidate solution $(i,j(i))$ with $r(i)$ as the radius. By
our definitions of $k(i)$ and $j(i)$, the solution is feasible.
Finally, among the at most $n$ candidate solutions, we
keep the one with the smallest $r(i)$ as our solution for this
configuration. The running time of the algorithm is $O(n)$.

\subsubsection{Case 1.2.1.2: $c^*\in P(i^*,j^*)$.}

Since $\pi^*$ contains $e(v_{i^*},v_{j^*})$, $c^*$ must be to the right of $v_{b^*}$
(e.g., see the bottom example in Fig.~\ref{fig:config}).
Further, $d_P(v_{1},v_{c^*})=d_P(v_1,v_{i^*})+|v_{i^*}v_{j^*}|+d_P(c^*,v_{j^*})=r^*$.
We make an assumption on $j^*$ that no index $j<j^*$ exists such that $(i^*,j)$ is also an optimal solution with the same configuration as $(i^*,j^*)$. We also assume that none of the previously discussed configurations has an optimal solution. The following lemma is literally the same as Lemma~\ref{lem:70}, although the proof is different.

\begin{lemma}\label{lem:75}
\begin{enumerate}
\item
Let $k^*$ be the smallest index such that $d_P(v_{i^*},v_{k^*})>d_P(v_1,v_{i^*})$. Such an index $k^*$ must exist in $[i^*,j^*]$.
\item
$b^*=k^*$.
\item $j^*$ is the smallest index $j\in [k^*,n]$ such that $d_P(v_{k^*},v_j)> d_P(v_j,v_n)$.
\end{enumerate}
\end{lemma}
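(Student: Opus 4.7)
The plan is to follow the three-part structure of Lemma~\ref{lem:70}, adapting every step to the fact that $c^*$ now lies on the sub-path $P(v_{i^*},v_{j^*})$ rather than on the shortcut edge. Throughout I will freely use the two expressions for $r^*$ dictated by the current configuration: the shortcut identity $r^*=d_P(v_1,v_{i^*})+|v_{i^*}v_{j^*}|+d_P(c^*,v_{j^*})$ and the direct identity $r^*=d_P(c^*,v_{b^*})$ with $v_{b^*}$ to the left of $c^*$ on $P(v_{i^*},v_{j^*})$.

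For Part~1, I assume for contradiction that no such $k^*$ exists in $[i^*,j^*]$, so $d_P(v_{i^*},v_j)\leq d_P(v_1,v_{i^*})$ for all $j\in[i^*,j^*]$. The shortcut identity gives $r^*\geq d_P(v_1,v_{i^*})+|v_{i^*}v_{j^*}|$, while the direct identity combined with $d_P(c^*,v_{b^*})\leq d_P(v_{i^*},v_{j^*})\leq d_P(v_1,v_{i^*})$ gives $r^*\leq d_P(v_1,v_{i^*})$; these force $|v_{i^*}v_{j^*}|\leq 0$, contradicting $i^*<j^*$. For Part~2, the inequality $k^*\leq b^*$ follows from comparing the direct path $d_P(c^*,v_{b^*})=r^*$ with the shortcut route $c^*\to v_{j^*}\to v_{i^*}\to v_{b^*}$, which yields $d_P(v_{i^*},v_{b^*})\geq d_P(v_1,v_{i^*})$; strict inequality comes from a perturbation argument — if equality held, shifting $c^*$ slightly rightward on $P$ would drop both shortcut distances (to $v_1$ and to $v_{b^*}$) strictly below $r^*$ while all other distances remain below $r^*$ by continuity, contradicting optimality. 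Conversely, if $k^*<b^*$ then $d_P(c^*,v_{k^*})=r^*+d_P(v_{b^*},v_{k^*})>r^*$ and the shortcut route $d_P(c^*,v_{j^*})+|v_{i^*}v_{j^*}|+d_P(v_{i^*},v_{k^*})>r^*$ by the defining inequality of $k^*$, so $d(c^*,v_{k^*})>r^*$, contradicting that $r^*$ is the radius.

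For Part~3, I first verify that $j^*$ satisfies $d_P(v_{k^*},v_{j^*})>d_P(v_{j^*},v_n)$ by writing $d_P(v_{k^*},v_{j^*})=r^*+d_P(c^*,v_{j^*})$ (since $v_{k^*}\leq c^*\leq v_{j^*}$ on $P$) and using $d_P(v_{j^*},v_n)<r^*-d_P(c^*,v_{j^*})$, the latter because $v_n$ cannot be a farthest vertex of $c^*$ under the assumption that Case~1.1 admits no optimal solution. To show $j^*$ is smallest, I assume $d_P(v_{k^*},v_{j^*-1})>d_P(v_{j^*-1},v_n)$ and construct a point $c'$ in $G(i^*,j^*-1)$ with all vertex distances $\leq r^*$, which forces that graph to be optimal and yields a contradiction. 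The construction splits on $c^*$'s position: in Case~A ($c^*\in P(v_{i^*},v_{j^*-1})$) I take $c'=c^*$ and verify the bound by routing vertices in $\{v_1,\dots,v_{k^*-1}\}$ through the new shortcut (via $|v_{i^*}v_{j^*-1}|-|v_{j^*-1}v_{j^*}|\leq|v_{i^*}v_{j^*}|$), vertices in $\{v_{k^*},\dots,v_{j^*-1}\}$ by direct $P$-distance (bounded by $r^*$ at both endpoints), and vertices in $\{v_{j^*},\dots,v_n\}$ via the telescoping identity $d_P(c^*,v_{j^*-1})+d_P(v_{j^*-1},v_n)=d_P(c^*,v_n)\leq r^*$; this yields an optimal $(i^*,j^*-1)$ with a Case~1.2.1.2 center, contradicting the minimality of $j^*$. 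In Case~B ($c^*\in e(v_{j^*-1},v_{j^*})$) I place $c'$ on the new shortcut edge $e(v_{i^*},v_{j^*-1})$ at distance $s$ from $v_{i^*}$, chosen from the interval $[\max(|v_{i^*}v_{j^*-1}|-|v_{j^*-1}c^*|,\,|v_{i^*}v_{j^*-1}|+d_P(v_{j^*-1},v_n)-r^*),\,|v_{i^*}v_{j^*}|+|c^*v_{j^*}|]$; non-emptiness of this interval follows from $|v_{i^*}v_{j^*-1}|\leq|v_{i^*}v_{j^*}|+|v_{j^*-1}v_{j^*}|$ together with the hypothesis $r^*-|v_{j^*-1}c^*|>d_P(v_{j^*-1},v_n)$, and any admissible $c'$ certifies radius $\leq r^*$ for $G(i^*,j^*-1)$ with a Case~1.2.1.1 center on the new shortcut edge, contradicting the assumption that no previously discussed configuration is optimal.

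The main obstacle will be the careful interval analysis in Case~B of Part~3: non-emptiness of the admissible range for $s$ requires combining the triangle inequality with the strict contradiction hypothesis in exactly the right way, and one must also verify that the constructed $c'$ is a genuine center of $G(i^*,j^*-1)$ (with two vertices at distance $r^*$) — otherwise a further perturbation produces radius strictly below $r^*$, which itself contradicts overall optimality of $r^*$.
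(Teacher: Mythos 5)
Your proposal is correct and follows the same overall skeleton as the paper's proof (the same three-part split, the same route-comparison plus perturbation argument for $b^*=k^*$, and the same contradiction template of exhibiting a point in $G(i^*,j^*-1)$ with all distances at most $r^*$), but two of your local arguments are genuinely different and worth comparing. For Part~1 the paper introduces the auxiliary index $j'$ and runs a perturbation of $c^*$ to the right; you instead combine the two identities $r^*=d_P(v_1,v_{i^*})+|v_{i^*}v_{j^*}|+d_P(c^*,v_{j^*})$ and $r^*=d_P(c^*,v_{b^*})\leq d_P(v_{i^*},v_{j^*})\leq d_P(v_1,v_{i^*})$ to force $|v_{i^*}v_{j^*}|\leq 0$ --- a shorter, purely algebraic contradiction that avoids the perturbation entirely. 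For Part~3 your Case~A coincides with the paper's preliminary step showing $c^*$ must lie in the interior of $e(v_{j^*-1},v_{j^*})$, while in Case~B the paper branches on whether $|v_{i^*}v_{j^*-1}|<|c^*v_{j^*}|+|v_{i^*}v_{j^*}|$ and uses two distinct candidate centers ($v_{j^*-1}$ versus a specific point of $e(v_{i^*},v_{j^*-1})$); you unify these by parametrizing $c'$ by its distance $s$ from $v_{i^*}$ and showing the feasible range for $s$ is nonempty, with the three interval endpoints exactly encoding the constraints from the vertex groups $[1,k^*-1]$, $[k^*,j^*-1]$, and $[j^*,n]$ --- I checked that these bounds and the nonemptiness argument are right. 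Two small points to tidy: you should intersect your interval with $[0,|v_{i^*}v_{j^*-1}|]$ so that $c'$ actually lies on the edge (the intersection is nonempty since $\max(L_1,L_2)<|v_{i^*}v_{j^*-1}|$ because $|c^*v_{j^*-1}|>0$ and $d_P(v_{j^*-1},v_n)<r^*$), and in both Cases~A and~B the constructed center can degenerate to $v_{j^*-1}$, in which case the contradiction is with Case~0 rather than with the minimality of $j^*$ or with Case~1.2.1.1 --- the paper treats this endpoint case explicitly, and your write-up should too, though either way a contradiction is reached.
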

\begin{proof}
We first prove that $k^*$ exists in $[i^*,j^*]$. Define $j'$ to be the largest index in $[i^*,j^*]$ such that $v_{j'}$ is to the left of or at $c^*$. Since $j'\leq j^*$, it is sufficient to prove that such an index $k^*$ as stated in the lemma must exist in $[i^*,j']$. Assume to the contrary that this is
	not true. Then, $d_P(v_{i^*},v_{j})\leq d_P(v_1,v_{i^*})$ for all
	$j\in [i^*,j']$. Let $c$ be a point of $P(i^*,j^*)$ to the right
	of $c^*$ and arbitrarily close to $c^*$. As $c^*$ is in the interior of $P(i^*,j^*)$, such a point $c$ must exist. In the following, we
	claim that the distances from $c$ to all vertices in $G(i^*,j^*)$ are smaller than
	$r^*$, incurring contradiction.

%Since none of the previously discussed configurations has an optimal solution, $v_n$ is not a farthest point of $c^*$ in $G(i^*,j^*)$.
Note that $P(c^*,v_n)$ is a shortest path from $c^*$ to $v_n$ in $G(i^*,j^*)$. Hence, $d_P(c^*,v_{n})\leq r^*$. By the definition of $c$, for each $j\in [j'+1,n]$, $d_{G(i^*,j^*)}(c,v_j)\leq d_P(c,v_j)\leq d_P(c,v_n)<d_P(c^*,v_n)\leq r^*$. On
	the other hand, for each $j\in[1,j']$, $d_{G(i^*,j^*)}(c,v_j)\leq d_P(c,v_{j^*})+|v_{j^*}v_{i^*}|+d_P(v_{i^*},v_j)\leq
	d_P(c,v_{j^*})+|v_{j^*}v_{i^*}|+d_P(v_{i^*},v_1)<d_P(c^*,v_{j^*})+|v_{j^*}v_{i^*}|+d_P(v_{i^*},v_1)=r^*$.
	The above claim is thus proved.

For the second part of the lemma, the proof is similar to that in Lemma~\ref{lem:70}.
Recall that
$d_{G(i^*,j^*)}(c^*,v_{b^*})=d_P(c^*,v_{b^*})=r^*\leq d_P(c^*,v_{j^*})+|v_{i^*}v_{j^*}|+d_P(v_{i^*},v_{b^*})$.
Note that $r^*=d_P(c^*,v_{j^*})+|v_{i^*}v_{j^*}|+d_P(v_{i^*},v_{b^*})$ is not possible. Indeed, if this were true, then if we move $c^*$ slightly on $P(v_{i^*},v_{j^*})$ towards $v_{j^*}$, the distances from $c^*$ to all vertices in $G(i^*,j^*)$ would be all smaller than $r^*$, incurring contradiction. Therefore, $d_P(c^*,v_{b^*})< d_P(c^*,v_{j^*})+|v_{j^*}v_{i^*}|+d_P(v_{i^*},v_{b^*})$.
Since $d_P(v_{b^*},{c^*})=d_P(v_1,v_{i^*})+|v_{i^*}v_{j^*}|+d_P(v_{j^*},c^*)=r^*$, we obtain $d_P(v_1,v_{i^*})+|v_{i^*}v_{j^*}|+d_P(v_{j^*},c^*)< d_P(c^*,v_{j^*})+|v_{j^*}v_{i^*}|+d_P(v_{i^*},v_{b^*})$, and thus $d_P(v_1,v_{i^*})<d_P(v_{i^*},v_{b^*})$.
The definition of $k^*$ implies that $k^*\leq b^*$.

On the other hand, assume to the contrary that $k^*<b^*$.
Then, $d_P(c^*,v_{k^*})>d_P(c^*,v_{b^*})=d_{G(i^*,j^*)}(c^*,v_{b^*})=r^*$.
Further, since $d_P(v_{i^*},v_{k^*})>d_P(v_1,v_{i^*})$, we have $d_P(c^*,v_{j^*})+|v_{j^*}v_{i^*}|+d_P(v_{i^*},v_{k^*})>d_P(c^*,v_{j^*})+|v_{j^*}v_{i^*}|+d_P(v_{i^*},v_{1})=r^*$. Thus, we obtain $d_{G(i^*,j^*)}(c^*,v_{k^*})=\min\{d_P(c^*,v_{k^*}),d_P(c^*,v_{j^*})+|v_{j^*}v_{i^*}|+d_P(v_{i^*},v_{k^*})\}>r^*$, a contradiction. This proves $k^*= b^*$.

We proceed to prove the third part of the lemma.
% Since $v_n$ is not a	farthest point of $c^*$, it holds that
%	$d_P(v_{k^*},v_{j^*})>d_P(v_{k^*},c^*)=r^*>d_P(c^*,v_n)>d_P(j^*,n)$.

	We first prove that $c^*$ must be in the interior of $e(v_{j^*-1},v_{j^*})$.
	Assume to the contrary that this is not true. Then, since $c^*$ is not at $v_{j^*}$, $c^*$ is in
	$P(v_{k^*},v_{j^*-1})$.

We claim that the distances from $c^*$ to
	all vertices in $G(i^*,j^*-1)$ are at most $r^*$.
	Indeed, for each $j\in [k^*,n]$, $d_{G(i^*,j^*-1)}(c^*,v_j)\leq d_P(c^*,v_j)\leq d_P(c^*,v_{k^*})=r^*$. For each $j\in [1,k^*-1]$,
	$d_{G(i^*,j^*-1)}(c^*,v_j)\leq d_P(c^*,v_{j^*-1})+|v_{j^*-1}v_{i^*}|+d_P(v_{i^*},v_j)\leq
	d_P(c^*,v_{j^*-1})+|v_{j^*-1}v_{i^*}|+d_P(v_{i^*},v_{1})\leq
	d_P(c^*,v_{j^*-1})+d_P(v_{j^*-1},v_{j^*})+|v_{j^*}v_{i^*}|+d_P(v_{i^*},v_{1})=r^*$.
	The last inequality is due to the triangle inequality. This proves the claim.
    The claim implies that $(i^*,j^*-1)$ is also an optimal solution with center at $c^*$. If $c^*$ is at $v_{j^*-1}\in P(v_{j^*-1},v_n)$, then this is a configuration of Case 0, which contradicts with our assumption that none of the previously discussed configurations has an optimal solution. Otherwise, the optimal solution $(i^*,j^*-1)$ has the same configuration as $(i^*,j^*)$, which contradicts with our assumption on $j^*$: no index $j<j^*$ exists such that $(i^*,j)$ is also an optimal solution with the same configuration as $(i^*,j^*)$.

	The above proves that $c^*$ is in the interior of $e(v_{j^*-1},v_{j^*})$.

We claim that $d_P(v_{k^*},v_{j^*})> d_P(v_{j^*},v_n)$. Indeed, since $c^*$ is in the interior of $e(v_{j^*-1},v_{j^*})$, we have $d_P(v_{k^*},v_{j^*}) > d_P(v_{k^*},c^*) = r^* \geq d_P(c^*,v_n)> d_P(v_{j^*},v_n)$.
For the third part of the lemma, assume to the contrary the statement is not true.
	Then, due to the above claim, it must hold that $d_P(v_{k^*},v_{j^*-1})> d_P(v_{j^*-1},v_n)$.
	Depending on whether $|v_{i^*}v_{j^*-1}|<|c^*v_{j^*}|+|v_{i^*}v_{j^*}|$, there are two cases.

	If $|v_{i^*}v_{j^*-1}|<|c^*v_{j^*}|+|v_{i^*}v_{j^*}|$,
	then let $c=v_{j^*-1}$. In
	the following, we show that the distances from $c$ to all vertices
	in $G(i^*,j^*-1)$ are smaller than $r^*$, which would incur contradiction.
	Indeed, since $c^*$ is in the interior of $e(v_{j^*-1},v_{j^*})$, $c$ is strictly to the
	left of $c^*$. Thus, for each $j\in
	[k^*,j^*-1]$, $d_{G(i^*,j^*-1)}(c,v_j)\leq d_P(v_{j},c)\leq d_P(v_{k^*},c)<
	d_P(v_{k^*},c^*)=r^*$. For each $j\in [j^*,n]$, since $d_P(v_{k^*},v_{j^*-1})> d_P(v_{j^*-1},v_n)$, $d_{G(i^*,j^*-1)}(c,v_j)\leq d_P(c,v_j)\leq d_P(c,v_n)=d_P(v_{j^*-1},v_n)<d_P(v_{k^*},v_{j^*-1})=d_P(v_{k^*},c)<r^*$.
	For each $j\in [1,k^*-1]$, since
	$|v_{i^*}v_{j^*-1}|<|c^*v_{j^*}|+|v_{i^*}v_{j^*}|$, $d_{G(i^*,j^*-1)}(c,v_j)\leq |v_{j^*-1}v_{i^*}|+d_P(v_{i^*},v_j)\leq
	|v_{j^*-1}v_{i^*}|+d_P(v_{i^*},v_1)<|c^*v_{j^*}|+|v_{j^*}v_{i^*}|+d_P(v_{i^*},v_1)=r^*$.
	The claim thus follows.

\begin{figure}[t]
\begin{minipage}[t]{\textwidth}
\begin{center}
\includegraphics[height=0.6in]{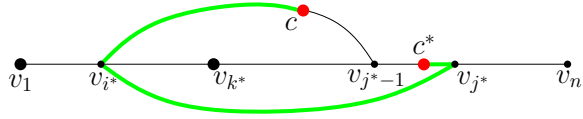}
\caption{\footnotesize Illustrating the proof of Lemma~\ref{lem:75} where $c\in e(v_{j^*-1},v_{j^*})$: The two thick (green) curves from $v_{i^*}$ have the same length.}
\label{fig:newgraph40}
\end{center}
\end{minipage}
\vspace{-0.15in}
\end{figure}

	If $|v_{i^*}v_{j^*-1}|\geq |c^*v_{j^*}|+|v_{i^*}v_{j^*}|$, then
	let $c$ be the point on $e(v_{i^*},v_{j^*-1})$ such that
	$|cv_{i^*}|=|c^*v_{j^*}|+|v_{j^*}v_{i^*}|$, e.g., see Fig.~\ref{fig:newgraph40}. We claim that the
	distances from $c$ to all vertices in $G(i^*,j^*-1)$ are at most $r^*$.
	Indeed, for each $j\in [1,k^*-1]$, $d_{G(i^*,j^*-1)}(c,v_j)\leq
	|cv_{i^*}|+d_P(v_{i^*},v_j)\leq |cv_{i^*}|+d_P(v_{i^*},v_1)=|c^*v_{j^*}|+|v_{j^*}v_{i^*}|+d_P(v_{i^*},v_1)=r^*$.
	For each $j\in [k^*,j^*-1]$,
\begin{equation*}
\begin{split}
d_{G(i^*,j^*-1)}(c,v_j) &\leq |cv_{j^*-1}|+d_P(v_{j^*-1},v_j) \leq 	|cv_{j^*-1}|+d_P(v_{j^*-1},v_{k^*})\\
&=|v_{i^*}v_{j^*-1}|-|cv_{i^*}|+d_P(v_{j^*-1},v_{k^*})\\
& \leq 	|v_{i^*}v_{j^*}|+|v_{j^*}v_{j^*-1}|-|cv_{i^*}|+d_P(v_{j^*-1},v_{k^*})\\
& =|v_{i^*}v_{j^*}|+|v_{j^*}v_{j^*-1}|-(|c^*v_{j^*}|+|v_{j^*}v_{i^*}|)+d_P(v_{j^*-1},v_{k^*})\\
& =|v_{j^*}v_{j^*-1}|-|c^*v_{j^*}|+d_P(v_{j^*-1},v_{k^*})\\
&=|c^*v_{j^*-1}|+d_P(v_{j^*-1},v_{k^*})=r^*.
\end{split}
\end{equation*}
For each $j\in [j^*,n]$, since $d_P(v_{k^*},v_{j^*-1})\geq d_P(v_{j^*-1},v_n)$,  $d_{G(i^*,j^*-1)}(c,v_j)\leq |cv_{j^*-1}|+d_P(v_{j^*-1},v_j)\leq |cv_{j^*-1}|+d_P(v_{j^*-1},v_n)\leq |cv_{j^*-1}|+d_P(v_{j^*-1},v_{k^*})\leq r^*$ (this last inequality was already proved above).
The claim thus follows.

The claim implies that $(i^*,j^*-1)$ is also an optimal solution with center $c$ at $e(v_{i^*},v_{j^*})$. Since $|cv_{i^*}|>0$, $c$ cannot be at $v_{i^*}$.
If $c$ is at $v_{j^*-1}\in P(v_{j^*-1},v_n)$, then this is a configuration of Case 0. Otherwise, the optimal solution $(i^*,j^*-1)$ has the same configuration as Case 1.2.1.1 (e.g., the one in Fig.~\ref{fig:config20}). In either case, this contradicts with our assumption that none of the previously discussed configurations has an optimal solution.

This proves the third part of the lemma and thus the entire lemma.
%The lemma thus follows.
\qed
\end{proof}

Based on Lemma~\ref{lem:75}, our algorithm for this case works as follows.

For each index $i\in [1,n]$, define $k(i)$ and $j(i)$ in the same way
as in the above Case 1.2.1.1. We also compute them in $O(n)$ time.
For each $i$, if $j(i)\leq n$, then let
$r(i)=(d_P(v_1,v_i)+|v_iv_{j(i)}|+d_P(v_{j(i)},v_{k(i)}))/2$, and if
$d_P(v_1,v_i)+|v_iv_{j(i)}|<r(i)$ (implies that the center is on
$P(v_{k(i)},v_{j(i)})$), then we have a candidate solution $(i,j(i))$ with
$r(i)$ as the radius. Finally, among the at most $n$ candidate
solutions, we keep the one with the smallest radius as our solution
for this case. The total running time of the algorithm is $O(n)$.

\subsubsection{Case 1.2.2: $\pi^*$ does not contain $e(i^*,j^*)$.}

In this case, the shortest path from $c^*$ to $v_1$ in $G(i^*,j^*)$ is
$P(v_1,c^*)$ and the shortest path from $c^*$ to $v_{b^*}$ is
$P(c^*,v_{b^*})$. Since $c^*$ is in the middle of $\pi^*$, $\pi^*$ is $P(v_1,v_{b^*})$ (e.g., see Fig.~\ref{fig:config122}). Further, it is not difficult to see
that for any $j\in [b^*+1,n]$, the shortest path from $c^*$ to $v_j$ in $G(i^*,j^*)$
is $P(c^*,v_{i^*})\cup e(v_{i^*},v_{j^*})\cup P(v_{j^*},v_j)$. Also
note that $b^*<j^*$, since otherwise (i.e., $b^*=j^*$, which is
smaller than $n$ as $b^*\neq n$)
$d_{G(i^*,j^*)}(c^*,v_n)=d_{G(i^*,j^*)}(c^*,v_{b^*})+d_P(v_{b^*},v_n)>d_{G(i^*,j^*)}(c^*,v_{b^*})=r^*$,
a contradiction.
We make an assumption on $j^*$ that no index $j<j^*$ exists such that
$(i^*,j)$ is also an optimal solution with the same configuration as
$(i^*,j^*)$. We also assume that none of the previously discussed
configurations has an optimal solution.
We begin with the following observation.

\begin{figure}[t]
\begin{minipage}[t]{\textwidth}
\begin{center}
\includegraphics[height=0.7in]{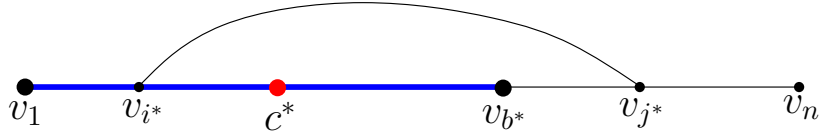}
\caption{\footnotesize  Illustrating the configuration for Case 1.2.2, where $a^*=1$, $b^*\in [i^*,j^*]$, and $c^*\in P(v_{i^*},v_{j^*})$. The thick (blue) path is $\pi^*$.}
\label{fig:config122}
\end{center}
\end{minipage}
\vspace{-0.15in}
\end{figure}

\begin{observation}\label{obser:50}
For any $i\in [1,n]$, the value $|v_iv_j|+d_P(v_j,v_n)$ is monotonically decreasing as $j$ increases from $i$ to $n$.
\end{observation}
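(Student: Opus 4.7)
The plan is to reduce the claim to a direct application of the triangle inequality, exactly in the spirit of Lemma~\ref{lem:20}. Fix $i\in[1,n]$ and any $j\in[i,n-1]$; I will show that
\[
|v_iv_j|+d_P(v_j,v_n)\;\geq\;|v_iv_{j+1}|+d_P(v_{j+1},v_n),
\]
which yields monotonicity by induction on $j$.

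First I would rewrite $d_P(v_j,v_n)$ using the fact that $v_{j+1}$ lies on $P$ between $v_j$ and $v_n$, so that $d_P(v_j,v_n)=d_P(v_j,v_{j+1})+d_P(v_{j+1},v_n)=|v_jv_{j+1}|+d_P(v_{j+1},v_n)$. Subtracting the right-hand side of the desired inequality from the left-hand side then produces a telescoping cancellation of the $d_P(v_{j+1},v_n)$ term, reducing the inequality to
\[
|v_iv_j|+|v_jv_{j+1}|\;\geq\;|v_iv_{j+1}|,
\]
which is precisely the triangle inequality in the metric space $(V,|\cdot|)$.

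Since each single-step comparison holds, iterating gives $|v_iv_j|+d_P(v_j,v_n)\geq |v_iv_{j+1}|+d_P(v_{j+1},v_n)\geq \cdots \geq |v_iv_n|+d_P(v_n,v_n)=|v_iv_n|$, establishing monotonic decrease across the whole range $[i,n]$. There is no real obstacle here: the argument is essentially identical to that of Lemma~\ref{lem:20} (which proved the same inequality for the quantity $\gamma(i,j)$), the only nominal difference being that Observation~\ref{obser:50} states the property for all $j\in[i,n]$ in a single breath rather than packaging it through the $\gamma$ notation. Hence the proof reduces to a one-line telescoping plus one invocation of the triangle inequality.
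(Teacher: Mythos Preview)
Your proof is correct and is essentially identical to the paper's own argument: both compute the one-step difference $|v_iv_j|+d_P(v_j,v_n)-\bigl(|v_iv_{j+1}|+d_P(v_{j+1},v_n)\bigr)$, telescope using $d_P(v_j,v_n)=|v_jv_{j+1}|+d_P(v_{j+1},v_n)$, and reduce to the triangle inequality $|v_iv_j|+|v_jv_{j+1}|\geq|v_iv_{j+1}|$. Your added remark that this is the same as Lemma~\ref{lem:20} for $\gamma(i,j)$ is also on point.
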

\begin{proof}
Indeed, $|v_iv_j|+d_P(v_j,v_n)-(|v_iv_{j+1}|+d_P(v_{j+1},v_n))=|v_iv_j|+|v_iv_{j+1}|-|v_iv_{j+1}|$,
which is nonnegative by the triangle inequality. \qed
\end{proof}

Our algorithm is based on the following lemma.
\begin{lemma}\label{lem:80}
\begin{enumerate}
\item
Let $k^*$ be the largest index in $[i^*,j^*]$
such that $d_P(v_1,v_{i^*})< |v_{i^*}v_{j^*}|+d_P(v_{j^*},v_{k^*})$.
Such an index $k^*$ must exist.
\item
$b^*={k^*}$.
\item
$j^*$ must be the smallest index $j\in [i^*,n]$ such that $d_P(v_1,v_{i^*})\geq |v_{i^*}v_{j}|+d_P(v_j,v_n)$.
\item
$d_P(v_1,v_{i^*})<d_P(v_{i^*},v_n)$.
\end{enumerate}
\end{lemma}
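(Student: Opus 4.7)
The plan is to first extract some foundational equalities from the Case~1.2.2 setup, then dispatch part~4 and the easy direction of part~3 directly, handle parts~1 and 2 by a short pincer argument on $k^* = b^*$, and finally address the minimality half of part~3, which I expect to be the main obstacle.

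From the configuration description, $c^*$ lies strictly in the interior of $P(v_{i^*}, v_{j^*})$, $\pi^* = P(v_1, v_{b^*})$ has $c^*$ as its midpoint and total length $2r^*$, and $b^* < j^*$. Hence the order on $P$ is $v_1, \ldots, v_{i^*}, \ldots, c^*, \ldots, v_{b^*}, \ldots, v_{j^*}, \ldots, v_n$ with $i^* < b^* < j^*$, and
\[
d_P(v_1, c^*) = d_P(c^*, v_{b^*}) = r^*, \qquad |v_{i^*}c^*| = r^* - d_P(v_1, v_{i^*}) > 0, \qquad d_P(v_{i^*}, v_{b^*}) = 2r^* - d_P(v_1, v_{i^*}).
\]
Part~4 is then immediate: $d_P(v_{i^*}, v_n) \geq d_P(v_{i^*}, v_{b^*}) = 2r^* - d_P(v_1, v_{i^*}) > d_P(v_1, v_{i^*})$. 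The inequality at $j = j^*$ in part~3 also drops out, since the shortest path from $c^*$ to $v_n$ uses $e$ (as stated for $j \in [b^*+1, n]$ at the start of Case~1.2.2), so $|c^*v_{i^*}| + |v_{i^*}v_{j^*}| + d_P(v_{j^*}, v_n) \leq r^*$ rearranges to $|v_{i^*}v_{j^*}| + d_P(v_{j^*}, v_n) \leq d_P(v_1, v_{i^*})$.

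For parts~1 and 2 I would prove $k^* = b^*$ by a two-sided argument. At $k = b^*+1 \leq j^*$, the shortest path from $c^*$ to $v_{b^*+1}$ goes through $e$, giving $|v_{i^*}v_{j^*}| + d_P(v_{j^*}, v_{b^*+1}) \leq d_P(v_1, v_{i^*})$; since $|v_{i^*}v_{j^*}| + d_P(v_{j^*}, v_k)$ decreases monotonically as $k$ grows toward $j^*$, the defining strict inequality of $k^*$ fails for every $k \geq b^*+1$, so $k^* \leq b^*$. At $k = b^*$, because $\pi_{b^*} = P(c^*, v_{b^*})$ already achieves length $r^*$, the via-$e$ alternative satisfies $|v_{i^*}v_{j^*}| + d_P(v_{j^*}, v_{b^*}) \geq d_P(v_1, v_{i^*})$; if equality held, the via-$e$ path would be a second shortest path from $c^*$ to $v_{b^*}$ of length $r^*$, and choosing it as $\pi_{b^*}$ would place $(i^*, j^*)$ in Case~1.2.1, contradicting the assumption that no previously discussed configuration has an optimal solution. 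Hence strict inequality holds at $k = b^*$, giving $k^* \geq b^*$, and combining yields parts~1 and 2.

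The main obstacle is the minimality in part~3. Suppose some $j' \in [i^*, j^*-1]$ satisfies $d_P(v_1, v_{i^*}) \geq |v_{i^*}v_{j'}| + d_P(v_{j'}, v_n)$; I need to derive a contradiction. Mimicking the template of Lemmas~\ref{lem:60} and \ref{lem:70}, my plan is to exhibit a point $c$ in $G(i^*, j')$ such that $d_{G(i^*, j')}(c, v_j) \leq r^*$ for every vertex $v_j$, i.e., a feasible solution of radius at most $r^*$. The candidate $c$ should lie either on the new edge $e(v_{i^*}, v_{j'})$ or on an adjacent portion of the cycle $C(i^*, j')$, and the exact choice will likely split into sub-cases according to the relative positions of $c^*, v_{j'}, v_{b^*}$; in each sub-case the feasibility check invokes the triangle inequality together with the foundational identities above and the assumed inequality on $j'$. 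The delicate final step will be identifying which previously discussed configuration the resulting feasible solution conforms with, so that the contradiction lands against either the assumption on $j^*$ (same Case~1.2.2 configuration) or against the assumption that no earlier case (Case~0, 1.1, 1.2.1.1, or 1.2.1.2) has an optimal solution.
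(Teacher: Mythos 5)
Your treatment of parts 1, 2, and 4 and of the ``$j=j^*$ satisfies the inequality'' half of part 3 is correct, and your pincer for $k^*=b^*$ is in fact a bit more streamlined than the paper's: the paper proves existence of $k^*$ separately by perturbing $c^*$ and checking every vertex, and rules out equality at $k=b^*$ by a perturbation argument rather than your reassignment of $\pi_{b^*}$ to the via-edge path (which would move the instance into Case~1.2.1 and contradict the standing assumption). Both of those substitutions are sound.

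The genuine gap is the minimality half of part 3, which you correctly identify as the main obstacle but then only sketch, and the sketch points in a harder direction than necessary. You do not need to ``exhibit a point $c$'' on $e(v_{i^*},v_{j'})$ or on $C(i^*,j')$ with sub-cases on its location: the key observation is that the \emph{same} center $c^*$ already works in $G(i^*,j^*-1)$, because $c^*$ lies strictly between $v_{i^*}$ and $v_{b^*}$ and $b^*\le j^*-1$, so $c^*$ survives on the cycle $C(i^*,j^*-1)$ unchanged. Concretely: first reduce from an arbitrary violating $j'<j^*$ to $j'=j^*-1$ via the monotonicity of $|v_{i^*}v_j|+d_P(v_j,v_n)$ in $j$ (Observation~\ref{obser:50}) --- working with a general $j'$ is genuinely problematic, since $j'$ could be smaller than $b^*$, in which case $c^*$ is not even on $P(v_{i^*},v_{j'})$ and your transplanting plan breaks down. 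Then check $d_{G(i^*,j^*-1)}(c^*,v_j)\le r^*$ in three ranges: $j\in[1,k^*]$ by the direct path, $j\in[k^*+1,j^*-1]$ by the via-edge path plus the triangle inequality $|v_{i^*}v_{j^*-1}|+d_P(v_{j^*-1},v_j)\le |v_{i^*}v_{j^*}|+d_P(v_{j^*},v_j)$, and $j\in[j^*,n]$ by the via-edge path plus the assumed inequality $d_P(v_1,v_{i^*})\ge |v_{i^*}v_{j^*-1}|+d_P(v_{j^*-1},v_n)$. With $c=c^*$ the ``delicate final step'' you worry about evaporates: $(i^*,j^*-1)$ is an optimal solution with the \emph{same} Case~1.2.2 configuration, contradicting the minimality assumption on $j^*$ directly. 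As written, your proposal leaves this step unexecuted, so the proof of part 3 is incomplete.
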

\begin{proof}
For the first part of the lemma, it is sufficient to prove the following:
$k^*$ is the largest index in $[j',j^*]$
such that $d_P(v_1,v_{i^*})< |v_{i^*}v_{j^*}|+d_P(v_{j^*},v_{k^*})$, where $j'$ is the index of the first vertex to the right of or at $c^*$. Note that as $c^*$ is in the interior of $P(v_{i^*},v_{j^*})$, $j'\leq j^*$.
Assume to the contrary that such an index $k^*$ does not exist in $[j',j^*]$. Then, $d_P(v_1,v_{i^*})\geq  |v_{i^*}v_{j^*}|+d_P(v_{j^*},v_j)$ for all $j\in [j',j^*]$.
%Since we have assumed that none of the previous cases happens, $v_n$ is a not a farthest point of $c^*$, i.e., $d_{G(i^*,j^*)}(c^*,v_n)<r^*$.

Let $c$ be a point on $P(v_{i^*},v_{j^*})$ arbitrarily close to $c^*$ and to the left of $c^*$. Such a point $c$ must exist as $c^*$ is in the interior of $P(v_{i^*},v_{j^*})$.
We claim that the distances from $c$ to all vertices in $G(i^*,j^*)$ are all smaller than $r^*$, which would incur contradiction. Indeed, for any $j\in [1,j'-1]$, $d_{G(i^*,j^*)}(c,v_j)\leq d_P(c,v_j)\leq d_P(c,v_1)<d_P(c^*,v_1)=r^*$. For any $j\in [j',j^*]$, $d_{G(i^*,j^*)}(c,v_j)\leq d_P(c,v_{i^*})+|v_{i^*}v_{j^*}|+d_P(v_{j^*},v_j)\leq d_P(c,v_{i^*})+d_P(v_1,v_{i^*})=d_P(c,v_1)<d_P(c^*,v_1)=r^*$.
For any $j\in [j^*+1,n]$, $d_{G(i^*,j^*)}(c,v_j)\leq d_P(c,v_{i^*})+|v_{i^*}v_{j^*}|+d_P(v_{j^*},v_n)<d_P(c^*,v_{i^*})+|v_{i^*}v_{j^*}|+d_P(v_{j^*},v_n)$.
Note that since $v_{b^*}$ is a farthest point of $c^*$, the shortest path from $c^*$ to $v_n$ must contain the new edge $e(v_{i^*},v_{j^*})$ and thus its length is $d_P(c^*,v_{i^*})+|v_{i^*}v_{j^*}|+d_P(v_{j^*},v_n)$, which is at most $r^*$. Therefore, we obtain that $d_{G(i^*,j^*)}(c,v_j)< r^*$ for each $j\in [j^*+1,n]$.
 The above claim is thus proved.

%The above proves that such an index $k^*$ must exist in $[j',j^*]$.

For the second part of the lemma, since the shortest path from $c^*$ to $v_{b^*}$ is $P(c^*,v_{b^*})$, it holds that $r^*=d_P(c^*,v_{b^*})\leq d_P(c^*,v_{i^*})+|v_{i^*}v_{j^*}|+d_P(v_{j^*},v_{b^*})$. Note that $r^*=d_P(c^*,v_{i^*})+|v_{i^*}v_{j^*}|+d_P(v_{j^*},v_{b^*})$ is not possible. Indeed, it this were true, then if we move $c^*$ slightly towards $v_{i^*}$, the distances from $c^*$ to all vertices in $G(i^*,j^*)$ would be all smaller than $r^*$, incurring contradiction. Hence, $r^*<d_P(c^*,v_{i^*})+|v_{i^*}v_{j^*}|+d_P(v_{j^*},v_{b^*})$. Further, since $r^*=d_P(c^*,v_1)=d_P(c^*,v_{i^*})+d_P(v_{i^*},v_1)$, we obtain that $d_P(v_{i^*},v_1)<|v_{i^*}v_{j^*}|+d_P(v_{j^*},v_{b^*})$. The definition of $k^*$ implies that $k^*\geq b^*$.

Assume to the contrary that $k^*>b^*$. Then, $d_{G(i^*,j^*)}(c^*,v_{k^*})=\min\{d_P(c^*,v_{k^*}),d_P(c^*,v_{i^*})+|v_{i^*}v_{j^*}|+d_P(v_{j^*},v_{k^*})\}$.
On the one hand, since $k^*>b^*$, $d_P(c^*,v_{k^*})>d_P(c^*,v_{b^*})=r^*$. On the other hand, since $d_P(v_{i^*},v_1)<|v_{i^*}v_{j^*}|+d_P(v_{j^*},v_{k^*})$, we obtain that $d_P(c^*,v_{i^*})+|v_{i^*}v_{j^*}|+d_P(v_{j^*},v_{k^*})>d_P(c^*,v_{i^*})+d_P(v_{i^*},v_1)=d_P(c^*,v_1)=r^*$. Hence, we derive $d_{G(i^*,j^*)}(c^*,k^*)>r^*$, a contradiction.

We proceed to prove the third statement of the lemma.
%Recall that the shortest path from $c^*$ to $v_j$ in $G(i^*,j^*)$
%is $P(c^*,v_{i^*})\cup e(v_{i^*},v_{j^*})\cup P(v_{j^*},v_n)$. Hence,
Since $d_{G(i^*,j^*)}(c^*,v_n)=d_P(c^*,v_{i^*})+|v_{i^*}v_{j^*}|+d_P(v_{j^*},v_n)\leq r^* = d_P(c^*,v_1)=d_P(c^*,v_{i^*})+d_P(v_{i^*},v_1)$, $|v_{i^*}v_{j^*}|+d_P(v_{j^*},v_n)\leq d_P(v_{i^*},v_1)$ holds.
Assume to the contrary that the third statement of the lemma is not true. Then, by Observation~\ref{obser:50}, $d_P(v_1,v_{i^*})\geq |v_{i^*}v_{j^*-1}|+d_P(v_{j^*-1},v_n)$.
%Let $G=P\cup e(v_{i^*},v_{j^*-1})$.

Recall that $b^*<j^*$, and thus $k^*=b^*\leq j^*-1$.
We claim that the distances from $c^*$ to all vertices in $G(i^*,j^*-1)$ are no more than $r^*$.
Indeed, for each $j\in [1,k^*]$, $d_{G(i^*,j^*-1)}(c^*,v_j)= d_P(c^*,v_{j})\leq d_P(c^*,v_{1})= r^*$.
If $k^*<j^*-1$, then for each $j\in [k^*+1,j^*-1]$, $d_{G(i^*,j^*-1)}(c^*,v_j)\leq d_P(c^*,v_{i^*})+|v_{i^*}v_{j^*-1}|+d_P(v_{j^*-1},v_{j})\leq d_P(c^*,v_{i^*})+|v_{i^*}v_{j^*}|+|v_{j^*}v_{j^*-1}|+d_P(v_{j^*-1},v_{j})
=d_P(c^*,v_{i^*})+|v_{i^*}v_{j^*}|+d_P(v_{j^*},v_{j})$, which is the shortest path length from $c^*$ to $v_j$ in $G(i^*,j^*)$ and thus is at most $r^*$.
For each $j\in [j^*,n]$, because $d_P(v_1,v_{i^*})\geq |v_{i^*}v_{j^*-1}|+d_P(v_{j^*-1},v_n)$, $d_{G(i^*,j^*-1)}(c^*,v_j)\leq d_P(c^*,v_{i^*})+|v_{i^*}v_{j^*-1}|+d_P(v_{j^*-1},v_{j})\leq d_P(c^*,v_{i^*})+|v_{i^*}v_{j^*-1}|+d_P(v_{j^*-1},v_{n})\leq d_P(c^*,v_{i^*})+d_P(v_{i^*},v_{1})=r^*$.
The claim is thus proved.
The claim implies that $(i^*,j^*-1)$ is also an optimal solution with the same configuration as $(i^*,j^*)$. But this contradicts with our assumption on $j^*$: no index $j<j^*$ exists such that
$(i^*,j)$ is also an optimal solution with the same configuration as
$(i^*,j^*)$.
%The lemma thus follows.

For the fourth part of the lemma, since $c^*$ is strictly to the right of $v_{i^*}$, $d_P(v_{i^*},v_n)\geq d_P(v_{i^*},v_{b^*})>d_P(c^*,v_{b^*})=r^*=d_P(c^*,v_1)>d_P(v_{i^*},v_1)$.
\qed
\end{proof}

Based on Lemma~\ref{lem:80}, our algorithm works as follows.

Let $i_1$ be the largest index $i$ in $[1,n]$ such that $d_P(v_1,v_{i})<d_P(v_{i},v_n)$.
Let $i_2$ be the smallest index $i$ in $[1,n]$ such that $d_P(v_1,v_i)\geq |v_iv_n|$.
By Lemma~\ref{lem:80} and Observation~\ref{obser:50}, if $i_2\leq i_1$\footnote{Note that $i_2\leq i_1+1$ always holds because $d_P(v_1,v_{i_1+1})\geq d_P(v_{i_1+1},v_n)\geq |v_{i_1+1}v_n|$.}, we only need to consider the indices in $[i_2,i_1]$ as the candidates for $i^*$. For each $i\in [i_2,i_1]$, define $j(i)$ as the smallest index $j\in [i,n]$ such that $d_P(v_1,v_i)\geq |v_iv_j|+d_P(v_j,v_n)$\footnote{The index $j$ must exist because $d_P(v_1,v_i)\geq |v_iv_n|$ due to the definition of $i_2$.}, and
define $k(i)$ as the largest $k\in [i,j(i)]$ such that $d_P(v_1,v_i)<|v_iv_{j(i)}|+ d_P(v_{j(i)},v_k)$ (for convenience let $k(i)=0$ if no such index $k$ exists).

The monotonicity properties of $j(i)$ and $k(i)$ in the following lemma will lead to an efficient algorithm to compute them.
\begin{lemma}\label{obser:70}
For any $i\in [i_2,i_1-1]$, $j(i+1)\leq j(i)$ and $k(i+1)\leq k(i)$.
\end{lemma}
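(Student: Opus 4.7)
My plan is to prove the two monotonicity statements separately, with the first serving as a tool for the second, and in both cases the workhorse will be the triangle inequality in the metric space (together with the fact that $d_P(v_a,v_c) = d_P(v_a,v_b) + d_P(v_b,v_c)$ whenever $a \leq b \leq c$).

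For the first statement $j(i+1)\leq j(i)$, I will argue that $j=j(i)$ already satisfies the defining condition for $i+1$. Set $j = j(i)$. By the definition of $j(i)$, $d_P(v_1,v_i) \geq |v_iv_j| + d_P(v_j,v_n)$. Writing $d_P(v_1,v_{i+1}) = d_P(v_1,v_i) + |v_iv_{i+1}|$ and using the triangle inequality $|v_{i+1}v_j| \leq |v_iv_{i+1}| + |v_iv_j|$, I get
\begin{equation*}
d_P(v_1,v_{i+1}) \geq |v_iv_{i+1}| + |v_iv_j| + d_P(v_j,v_n) \geq |v_{i+1}v_j| + d_P(v_j,v_n).
\end{equation*}
Hence $j$ satisfies the condition for index $i+1$, so $j(i+1)\leq j = j(i)$.

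For the second statement $k(i+1)\leq k(i)$, I will argue by contradiction. Suppose $k := k(i+1) > k(i)$. Since $k \in [i+1, j(i+1)]$ and $j(i+1)\leq j(i)$, we have $k \in [i, j(i)]$, so the maximality of $k(i)$ forces $d_P(v_1,v_i) \geq |v_iv_{j(i)}| + d_P(v_{j(i)},v_k)$. Because $k \leq j(i+1) \leq j(i)$, we may split the path distance as $d_P(v_{j(i)},v_k) = d_P(v_{j(i)},v_{j(i+1)}) + d_P(v_{j(i+1)},v_k)$. Combining this with the two triangle-inequality bounds
\begin{equation*}
|v_{i+1}v_{j(i+1)}| \leq |v_iv_{i+1}| + |v_iv_{j(i)}| + |v_{j(i)}v_{j(i+1)}|, \qquad |v_{j(i)}v_{j(i+1)}| \leq d_P(v_{j(i)},v_{j(i+1)}),
\end{equation*}
and adding $|v_iv_{i+1}|$ to both sides of the inequality for $d_P(v_1,v_i)$, I get
\begin{equation*}
d_P(v_1,v_{i+1}) \geq |v_iv_{i+1}| + |v_iv_{j(i)}| + d_P(v_{j(i)},v_{j(i+1)}) + d_P(v_{j(i+1)},v_k) \geq |v_{i+1}v_{j(i+1)}| + d_P(v_{j(i+1)},v_k),
\end{equation*}
which contradicts the strict inequality in the definition of $k(i+1) = k$. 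The same estimate also handles the degenerate case $k(i)=0$: it shows that no index in $[i+1, j(i+1)]$ can satisfy the strict inequality defining $k(i+1)$, forcing $k(i+1) = 0$ as well.

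The only place requiring real thought is lining up the telescoping correctly in the second part: one has to use $j(i+1)\leq j(i)$ (from the first part) to split $d_P(v_{j(i)},v_k)$ through $v_{j(i+1)}$, and then use the triangle inequality twice — once on the three metric distances $|v_{i+1}v_{j(i+1)}|$, $|v_iv_{i+1}|$, $|v_iv_{j(i)}|$, $|v_{j(i)}v_{j(i+1)}|$, and once to replace the metric distance $|v_{j(i)}v_{j(i+1)}|$ with the path distance $d_P(v_{j(i)},v_{j(i+1)})$. Everything else is bookkeeping.
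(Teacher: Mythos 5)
Your proof is correct and follows essentially the same route as the paper's: both parts rest on the identity $d_P(v_1,v_{i+1})=d_P(v_1,v_i)+|v_iv_{i+1}|$, the splitting of $d_P(v_{j(i)},v_k)$ through $v_{j(i+1)}$, and the same two applications of the triangle inequality (including $|v_{j(i)}v_{j(i+1)}|\leq d_P(v_{j(i)},v_{j(i+1)})$). The only cosmetic difference is that you phrase the second part as a contradiction while the paper proves the contrapositive implication directly (showing every $k$ satisfying the strict inequality for $i+1$ also satisfies it for $i$); the content is identical.
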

\begin{proof}
To prove that $j(i+1)\leq j(i)$,
%if $j(i)\leq i+1$, then $j(i+1)\geq i+1\geq j(i)$. Below we assume that $j(i)>i+1$.
it is sufficient to show that for any $j\in [i+1,n]$, if $d_P(v_1,v_i)\geq |v_iv_j|+d_P(v_j,v_n)$, then $d_P(v_1,v_{i+1})\geq |v_{i+1}v_j|+d_P(v_j,v_n)$. Indeed, by the triangle inequality,
\begin{equation*}
\begin{split}
d_P(v_1,v_{i+1}) = d_P(v_1,v_{i}) + |v_iv_{i+1}| \geq |v_iv_j|+d_P(v_j,v_n) + |v_iv_{i+1}|\geq |v_{i+1}v_j|+d_P(v_j,v_n).
\end{split}
\end{equation*}

To prove $k(i+1)\leq k(i)$, since $[i+1,j(i+1)]\subseteq [i,j(i)]$, it is sufficient to show that for any $k\in [i+1,j(i+1)]$, if $d_P(v_1,v_{i+1})<|v_{i+1}v_{j(i+1)}|+ d_P(v_{j(i+1)},v_k)$, then $d_P(v_1,v_i)<|v_iv_{j(i)}|+ d_P(v_{j(i)},v_k)$. Indeed, due to the triangle inequality, we have
\begin{equation*}
\begin{split}
d_P(v_1,v_i) & = d_P(v_1,v_{i+1}) - |v_iv_{i+1}| < |v_{i+1}v_{j(i+1)}|+ d_P(v_{j(i+1)},v_k) - |v_iv_{i+1}|\\
& \leq |v_iv_{j(i+1)}| + d_P(v_{j(i+1)},v_k) = |v_iv_{j(i+1)}| + d_P(v_{j(i)},v_k) - d_P(v_{j(i+1)},v_{j(i)})\\
& \leq |v_iv_{j(i)}| +  d_P(v_{j(i)},v_k).
\end{split}
\end{equation*}
The lemma thus follows.
\qed
\end{proof}

Our algorithm for this configuration works as follows.
We first compute the two indices $i_1$ and $i_2$. If $i_2>i_1$, then we do not keep any solution for this case. Otherwise, by the monotonicity properties of $j(i)$ and $k(i)$ in Lemma~\ref{obser:70}, we can compute $j(i)$ and $k(i)$ for all $i\in [i_2,i_1]$ in $O(n)$ time by a linear scan on $P$. The details are omitted.
Then, for each $i\in [i_2,i_1]$, if $k(i)\neq 0$ and $d_P(v_1,v_i)<d_P(v_i,v_{k(i)})$ (this makes sure that the center is in $P(v_i,v_{k(i)})$), then we have a candidate solution $(i,j(i))$ with radius $r(i)=d_P(v_1,v_{k(i)})/2$. By our definition of $j(i)$ and $k(i)$, the solution is feasible. Finally, among all the at most $n$ candidate solutions, we keep the one with the smallest radius as our solution for this case. The algorithm runs in $O(n)$ time.

\subsubsection{Case 2: $b^*=n$.}

This case is symmetric to Case 1 ($a^*=1$), so we omit the details.

\subsubsection{Case 3: Both $a^*$ and $b^*$ are in $[i^*,j^*]$.}

Observe that since $a^*\neq 1$ and $b^*\neq n$,
%$c^*\not\in P(v_1,v_{i^*})\cup P(v_{j^*},v_n)$,
$a^*$ cannot be $i^*$ and $b^*$ cannot be $j^*$. Hence, both $a^*$ and $b^*$ are in $[i^*+1,j^*-1]$.
%Hence, neither $v_1$ nor $v_n$ is a farthest vertex of $c^*$.
As in Case 1, depending on whether $c^*$ is in $e(i^*,j^*)$ or $P(v_{i^*},v_{j^*})$, there are two subcases.

\subsubsection{Case 3.1. $c^*\in e(i^*,j^*)$.}

More precisely, $c^*$ is in the interior of $e(i^*,j^*)$, which implies that $e(i^*,j^*)$ is in $\pi^*$. It is not difficult to see that $b^*=a^*+1$ (e.g., see Fig.~\ref{fig:config31}). We make an assumption on $[i^*,j^*]$ that there is no smaller interval $[i,j]\subset [i^*,j^*]$ such that $(i,j)$ is also an optimal solution with the same configuration as $(i^*,j^*)$ (since otherwise we could instead consider $(i,j)$ as  $(i^*,j^*)$).

\begin{figure}[t]
\begin{minipage}[t]{\textwidth}
\begin{center}
\includegraphics[height=0.7in]{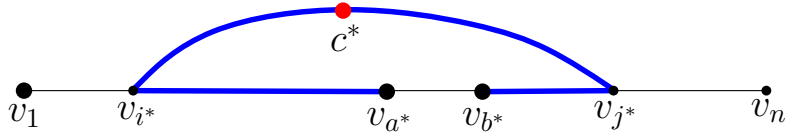}
\caption{\footnotesize  Illustrating the configuration for Case 3.1, where $a^*,b^*\in [i^*,j^*]$ and $c^*\in e(v_{i^*},v_{j^*})$. The thick (blue) path is $\pi^*$.}
\label{fig:config31}
\end{center}
\end{minipage}
\vspace{-0.15in}
\end{figure}

We also assume that none of the previously discussed cases happens. This implies that neither $v_1$ nor $v_n$ is a farthest vertex of $c^*$ in $G(i^*,j^*)$. To see this, suppose to the contrary that $v_1$ is also a farthest vertex. Then, if we consider $v_1$ and $v_{b^*}$ as two farthest vertices stated in Observation~\ref{obser:10}, then the configuration becomes Case 1.2.1.1 (shown in the bottom example for Fig.~\ref{fig:config}), which incurs contradiction. Similarly, $v_n$ is not a farthest vertex as well.
Since neither $v_1$ nor $v_n$ is a farthest vertex of $c^*$, it can be verified that $d_P(v_1,v_{i^*})<  d_P(v_{i^*},v_{a^*})$ and $d_P(v_{j^*},v_n)< d_P(v_{b^*},v_{j^*})$.

\begin{lemma}\label{lem:90}
$i^*$ is the largest index $i\in [1,a^*]$ such that $d_P(v_1,v_i)< d_P(v_i,v_{a^*})$.
$j^*$ is the smallest index $j\in [b^*,n]$ such that $d_P(v_j,v_n)< d_P(v_{b^*},v_j)$.
\end{lemma}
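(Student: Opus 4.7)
The plan is to prove the statement about $i^*$; the statement about $j^*$ follows by the symmetric argument that exchanges the roles of $v_1$ and $v_n$ (and of $a^*$ and $b^*$). The paragraph immediately preceding the lemma already supplies $d_P(v_1,v_{i^*}) < d_P(v_{i^*},v_{a^*})$, so $i^*$ itself is a valid candidate; what remains is to rule out any strictly larger index in $(i^*,a^*]$ satisfying the same inequality.

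The strategy is proof by contradiction against the minimality of the interval $[i^*,j^*]$: supposing some $i'\in(i^*,a^*]$ satisfies $d_P(v_1,v_{i'}) < d_P(v_{i'},v_{a^*})$, I would exhibit an alternative optimal solution on the strictly smaller interval $[i',j^*]$ with the same configuration as Case~3.1. The candidate center is the point $c$ on $e(v_{i'},v_{j^*})$ defined by $|cv_{i'}| = r^* - d_P(v_{i'},v_{a^*})$. Positivity of $|cv_{i'}|$ follows from
\[
r^* \;=\; |c^*v_{i^*}| + d_P(v_{i^*},v_{a^*}) \;>\; d_P(v_{i^*},v_{i'}) + d_P(v_{i'},v_{a^*}) \;\ge\; d_P(v_{i'},v_{a^*}),
\]
since $c^*$ is in the interior of $e(v_{i^*},v_{j^*})$ and $i^* < i' \le a^*$.

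Feasibility $d_{G(i',j^*)}(c,v_k)\le r^*$ I would verify in four ranges. For $k\in[1,i']$, routing through $v_{i'}$ gives at most $|cv_{i'}| + d_P(v_{i'},v_k) \le |cv_{i'}| + d_P(v_{i'},v_1) < |cv_{i'}|+d_P(v_{i'},v_{a^*}) = r^*$ by the hypothesis on $i'$. For $k\in[i',a^*]$, the same routing yields $|cv_{i'}|+d_P(v_{i'},v_k)\le r^*$ since $d_P(v_{i'},v_k)\le d_P(v_{i'},v_{a^*})$. For $k\in[a^*+1,j^*]$, routing through $v_{j^*}$ gives $|cv_{j^*}|+d_P(v_{j^*},v_k)\le|cv_{j^*}|+d_P(v_{j^*},v_{b^*})$; using the identities $|cv_{i'}|+|cv_{j^*}|=|v_{i'}v_{j^*}|$ and $2r^* = |v_{i^*}v_{j^*}| + d_P(v_{i^*},v_{a^*}) + d_P(v_{j^*},v_{b^*})$, this reduces to the single estimate $|v_{i'}v_{j^*}|+d_P(v_{i'},v_{a^*})\le|v_{i^*}v_{j^*}|+d_P(v_{i^*},v_{a^*})$, which follows from the triangle inequality $|v_{i'}v_{j^*}|\le|v_{i^*}v_{j^*}|+|v_{i^*}v_{i'}|$ together with $|v_{i^*}v_{i'}|\le d_P(v_{i^*},v_{i'})$ and the path-additivity $d_P(v_{i^*},v_{a^*}) = d_P(v_{i^*},v_{i'}) + d_P(v_{i'},v_{a^*})$. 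For $k\in[j^*+1,n]$, the bound $|cv_{j^*}|+d_P(v_{j^*},v_k)\le|cv_{j^*}|+d_P(v_{j^*},v_n)$ combined with $d_P(v_{j^*},v_n) < d_P(v_{b^*},v_{j^*})$ (noted just before the lemma) yields a strict inequality with $r^*$ using the same reduction.

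The main obstacle will be the boundary case in which $r^* > |v_{i'}v_{j^*}| + d_P(v_{i'},v_{a^*})$, which would force the candidate $c$ to lie outside $e(v_{i'},v_{j^*})$. In that regime I would instead take $c = v_{j^*}$: the same four-range analysis (with strict inequalities throughout, using $d_P(v_{j^*},v_n) < d_P(v_{b^*},v_{j^*}) \le r^*$) shows $(i',j^*)$ is feasible with center on $P(v_{j^*},v_n)$, which is a configuration of Case~0 and therefore contradicts the standing assumption that none of the earlier cases admits an optimum. Either way the contradiction completes the argument, and the claim about $j^*$ is obtained by the same reasoning with left and right reversed.
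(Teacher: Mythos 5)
Your proof is correct and follows essentially the same route as the paper's: a contradiction against the minimality of $[i^*,j^*]$, obtained by relocating the center onto the new edge of the shrunken graph and checking feasibility over the same four vertex ranges via the triangle inequality, with a degenerate sub-case where the candidate center snaps to a vertex. The only (inessential) differences are that you place the new center by fixing its distance to $v_{a^*}$ at $r^*$ while the paper fixes its distance to $v_{j^*}$ at $|c^*v_{j^*}|$ (so your overflow case lands at $v_{j^*}$ and is dispatched through the Case~0 assumption, whereas the paper's lands at $v_{i^*+1}$ and yields strict inequalities directly), and that you take an arbitrary offending index $i'$ rather than just $i^*+1$.
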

\begin{proof}
We only prove the first part of the lemma, as the proof for the second part is analogous.
Assume to the contrary that this is not true. Then, due to $d_P(v_1,v_{i^*})<  d_P(v_{i^*},v_{a^*})$, it must hold that
	$d_P(v_1,v_{i^*+1})< d_P(v_{i^*+1},v_{a^*})$.

Let $c$ be a point on $e(v_{i^*},v_{i^*+1})\cup e(v_{i^*+1},v_{j^*})$
	with distance $|c^*v_{j^*}|$ from $v_{j^*}$ (e.g., see Fig.~\ref{fig:newgraph50}). Such a point $c$ must exist
	since $|v_{i^*}v_{i^*+1}|+|v_{i^*+1}v_{j^*}|\geq
	|v_{i^*}v_{j^*}|$. Depending on whether $c\in e(v_{i^*+1},v_{j^*})$,
	there are two cases.

\begin{figure}[t]
\begin{minipage}[t]{0.49\textwidth}
\begin{center}
\includegraphics[height=0.65in]{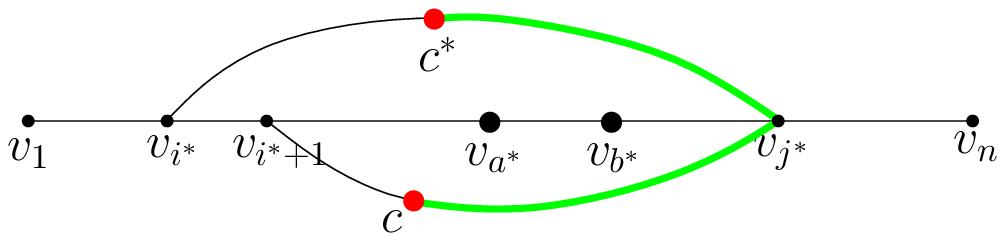}
\caption{\footnotesize  Illustrating the proof of Lemma~\ref{lem:90}: The two (green) thick curves from $v_{j^*}$ have the same length, where $c\in e(v_{i^*+1},v_{j^*})$.}
\label{fig:newgraph50}
\end{center}
\end{minipage}
\hspace{0.05in}
\begin{minipage}[t]{0.49\textwidth}
\begin{center}
\includegraphics[height=0.65in]{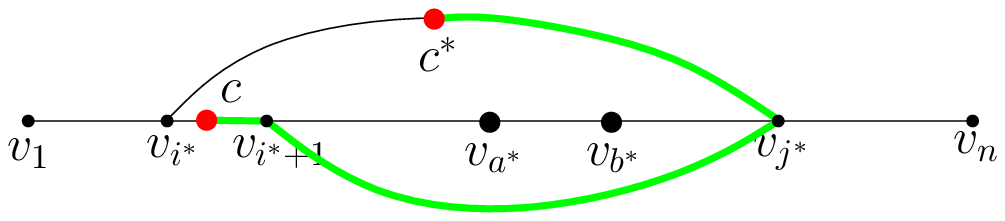}
\caption{\footnotesize  Illustrating the proof of Lemma~\ref{lem:90}: The two (green) thick curves from $v_{j^*}$ have the same length, where $c\in e(v_{i^*},v_{i^*+1})$.}
\label{fig:newgraph60}
\end{center}
\end{minipage}
\vspace{-0.15in}
\end{figure}

If $c\in e(v_{i^*+1},v_{j^*})$ (e.g., see Fig.~\ref{fig:newgraph50}), then we claim that the distances from
	$c$ to all vertices in $G(i^*+1,j^*)$ are at most
	$r^*$. Indeed, for each $j\in [b^*,j^*]$, $d_{G(i^*+1,j^*)}(c,v_j)\leq
	|cv_{j^*}|+d_P(v_{j^*},v_j)\leq
	|cv_{j^*}|+d_P(v_{j^*},v_{b^*})=|c^*v_{j^*}|+d_P(v_{j^*},v_{b^*})=r^*$.
	For each $j\in [j^*+1,n]$, $d_{G(i^*+1,j^*)}(c,v_j)\leq |cv_{j^*}|+d_P(v_{j^*},v_j)\leq |cv_{j^*}|+d_P(v_{j^*},v_n)<|cv_{j^*}|+d_P(v_{j^*},v_{b^*})=|c^*v_{j^*}|+d_P(v_{j^*},v_{b^*})=r^*$.
For each $j\in [i^*+1,a^*]$, $d_{G(i^*+1,j^*)}(c,v_j)\leq
	|cv_{i^*+1}|+d_P(v_{i^*+1},v_j)\leq |cv_{i^*+1}|+d_P(v_{i^*+1},v_{a^*})$. In a similar way as
	Equation~\eqref{equ:30}, we can show that
	$|cv_{i^*+1}|+d_P(v_{i^*+1},v_{a^*})\leq
	|c^*v_{i^*}|+d_P(v_{i^*},v_{a^*})=r^*$. For each $j\in [1,i^*]$,
	due to $d_P(v_1,v_{i^*+1})< d_P(v_{i^*+1},v_{a^*})$,
	$d_{G(i^*+1,j^*)}(c,v_j)\leq |cv_{i^*+1}|+d_P(v_{i^*+1},v_j)\leq
	|cv_{i^*+1}|+d_P(v_{i^*+1},v_1)<
	|cv_{i^*+1}|+d_P(v_{i^*+1},v_{a^*})\leq r^*$.
The claim is thus proved. However, the claim implies that
	$(i^*+1,j^*)$ is also an optimal solution with the same
	configuration as $(i^*,j^*)$. As $[i^*+1,j^*]\subset [i^*,j^*]$, we obtain
	contradiction with our assumption on $[i^*,j^*]$: there is no smaller interval $[i,j]\subset [i^*,j^*]$ such that $(i,j)$ is also an optimal solution with the same configuration as $(i^*,j^*)$.

If $c\not\in e(v_{i^*+1},v_{j^*})$ (e.g., see Fig.~\ref{fig:newgraph60}), then $c$ is on
	$e(v_{i^*},v_{i^*+1})\setminus\{v_{i^*+1}\}$. Let $c'=v_{i^*+1}$.
	We claim that the distances from $c'$ to all
	vertices in $G(i^*+1,j^*)$ are strictly smaller than $r^*$,
	which incurs contradiction. Indeed, for each $j\in [b^*,n]$,
	$d_{G(i^*+1,j^*)}(c',v_j)\leq |c'v_{j^*}|+d_P(v_j,v_{j^*})\leq
	|c'v_{j^*}|+d_P(v_{b^*},v_{j^*})<|cc'|+|c'v_{i^*+1}|+|v_{i^*+1}v_{j^*}|+d_P(v_{j^*},v_{b^*})=r^*$.
	Since $|c^*v_{i^*}|+d_P(v_{i^*},v_{i^*+1})+d_P(v_{i^*+1},v_{a^*})=r^*$,
	we obtain that $d_P(c',v_{a^*})<r^*$.
	Hence, for each $j\in [1,a^*]$, since $d_P(v_1,v_{i^*+1})<
	d_P(v_{i^*+1},v_{a^*})$, we obtain $d_{G(i^*+1,j^*)}(c',v_j)\leq
	d_P(c',v_{a^*})<r^*$.
	\qed
\end{proof}

Based on Lemma~\ref{lem:90}, our algorithm works as follows.
For each interval $[k,k+1]$ with $k\in [2,n-2]$ (since $a^*>1$ and $b^*<n$, we do not need to consider the case where $k=1$ or $k+1=n$), define $i(k)$ as the largest index $i\in [1,k]$ such that $d_P(v_1,v_i)<d_P(v_i,v_k)$, and define $j(k)$ as the smallest index $j\in [k+1,n]$ such that $d_P(v_j,v_n)<d_P(v_{k+1},v_j)$. It can be verified that
for any $k\in [2,n-3]$, $i(k)\leq i(k+1)$ and $j(k)\leq j(k+1)$.
Thus, we can easily compute $i(k)$ and $j(k)$ for all $k\in [2,n-2]$ in $O(n)$ time. Then, for each $k\in [2,n-2]$, let $r(i)=(d_P(v_{i(k)},v_k)+|v_{i(k)}v_{j(k)}|+d_P(v_{k+1},v_{j(k)}))/2$, and if $r(i)>d_P(v_{i(k)},v_k)$ and $r(i)>d_P(v_{k+1},v_{j(k)})$ (this makes sure that the center is on the edge $e(v_{i(k)},v_{j(k)})$), then we have a candidate solution $(i(k),j(k+1))$ with $r(i)$ as the radius. By the definitions of $i(k)$ and $j(k+1)$, the solution is feasible. Finally, among the at most $n$ candidate solutions, we keep the one with the smallest radius as our solution for this case. The total time of the algorithm is $O(n)$.

\subsubsection{Case 3.2. $c^*\in P(v_{i^*},v_{j^*})$.}
More precisely, $c^*$ is in the interior of $P(v_{i^*},v_{j^*})$.
We first have the following observation.

\begin{figure}[t]
\begin{minipage}[t]{\textwidth}
\begin{center}
\includegraphics[height=0.7in]{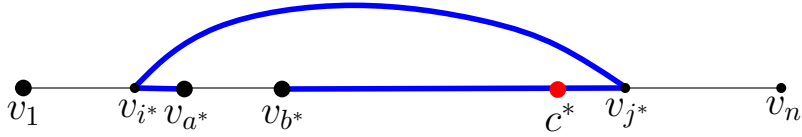}
\caption{\footnotesize  Illustrating the configuration for Case 3.2, where $a^*,b^*\in [i^*,j^*]$ and $c^*\in P(v_{i^*},v_{j^*})$. The thick (blue) path is $\pi^*$.}
\label{fig:config32}
\end{center}
\end{minipage}
\vspace{-0.15in}
\end{figure}

\begin{lemma}\label{obser:80}
$\pi^*$ must contain $e(v_{i^*},v_{j^*})$; $b^*=a^*+1$;
 $v_{a^*}$ and $v_{b^*}$ are on the same side of $c^*$ (e.g., see Fig.~\ref{fig:config32}).
\end{lemma}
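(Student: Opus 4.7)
The plan is to prove the three assertions in the order: first that $\pi^*$ contains $e(v_{i^*},v_{j^*})$, then that $v_{a^*}$ and $v_{b^*}$ lie on the same side of $c^*$, and finally that $b^* = a^* + 1$. The first two emerge from a single case analysis on how many of $\pi_{a^*}, \pi_{b^*}$ traverse the new edge; I will rule out the zero-edge and two-edge cases, leaving only the one-edge case, in which the geometry forces $v_{a^*}$ and $v_{b^*}$ onto the same side of $c^*$.

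In the zero-edge case, $\pi^* = P(v_{a^*},v_{b^*})$ with $c^*$ its midpoint on $P$ and $r^* = d_P(v_{a^*},v_{b^*})/2$. Since Case~3.2 is reached only after no earlier configuration yields an optimum, neither $v_1$ nor $v_n$ is farthest from $c^*$, so $d_{G(i^*,j^*)}(c^*,v_1), d_{G(i^*,j^*)}(c^*,v_n) < r^*$. The direct $P$-path from $c^*$ to $v_1$ passes through $v_{a^*}$ and already has length $r^* + d_P(v_{a^*},v_1) > r^*$, so the shortest path must use the edge: $d_P(c^*,v_{j^*}) + |v_{i^*}v_{j^*}| + d_P(v_{i^*},v_1) < r^*$, and symmetrically $d_P(c^*,v_{i^*}) + |v_{i^*}v_{j^*}| + d_P(v_{j^*},v_n) < r^*$. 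Summing these, using $d_P(c^*,v_{i^*}) + d_P(c^*,v_{j^*}) = d_P(v_{i^*},v_{j^*})$ and the strict inequality $d_P(v_{a^*},v_{b^*}) < d_P(v_{i^*},v_{j^*})$ (since $[a^*,b^*] \subsetneq [i^*,j^*]$), yields $2|v_{i^*}v_{j^*}| + d_P(v_{i^*},v_1) + d_P(v_{j^*},v_n) < 0$, contradicting $|v_{i^*}v_{j^*}| > 0$.

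In the two-edge case, both paths must enter the edge from opposite endpoints (otherwise $c^*$ fails to be the middle of their concatenation), and the ordering $a^* < b^*$ forces $\pi_{a^*}$ to traverse the edge via $v_{j^*}$ (placing $v_{a^*}$ left of $c^*$) and $\pi_{b^*}$ via $v_{i^*}$ (placing $v_{b^*}$ right of $c^*$). Writing out $|\pi_{a^*}| = |\pi_{b^*}| = r^*$ and solving $\pi_{a^*}$'s equation yields $d_P(c^*,v_{a^*}) = d_P(v_{i^*},v_{j^*}) + |v_{i^*}v_{j^*}| - r^*$, so the direct-path bound $d_P(c^*,v_{a^*}) \geq r^*$ gives $d_P(v_{i^*},v_{j^*}) + |v_{i^*}v_{j^*}| \geq 2r^*$. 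Summing the two length equations gives $2r^* = 2d_P(v_{i^*},v_{j^*}) + 2|v_{i^*}v_{j^*}| - d_P(v_{a^*},v_{b^*})$; combined with the previous bound and $d_P(v_{a^*},v_{b^*}) < d_P(v_{i^*},v_{j^*})$, this forces $|v_{i^*}v_{j^*}| < 0$. Hence exactly one of $\pi_{a^*}, \pi_{b^*}$ uses the edge; a short check in this remaining case (the edge-using path's endpoint must lie on the side of $c^*$ opposite to the path's initial direction, while the direct path exits $c^*$ the opposite way by the midpoint condition) places $v_{a^*}$ and $v_{b^*}$ on the same side of $c^*$.

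For the last assertion, WLOG both vertices lie left of $c^*$ with $a^* < b^*$; then $\pi_{b^*}$ is direct with $d_P(c^*,v_{b^*}) = r^*$, and $\pi_{a^*}$ uses the edge via $v_{j^*}$ with $d_P(c^*,v_{j^*}) + |v_{i^*}v_{j^*}| + d_P(v_{i^*},v_{a^*}) = r^*$. Supposing $b^* > a^* + 1$ for contradiction, $v_{a^*+1}$ lies strictly between $v_{a^*}$ and $v_{b^*}$, and all three candidate routes from $c^*$ to $v_{a^*+1}$ strictly exceed $r^*$: the direct $P$-path by $d_P(v_{b^*},v_{a^*+1}) > 0$, the edge path through $v_{j^*}$ by $d_P(v_{a^*},v_{a^*+1}) > 0$, and the edge path through $v_{i^*}$ by $d_P(v_{i^*},v_{j^*}) + |v_{i^*}v_{j^*}| + d_P(v_{b^*},v_{a^*+1}) > 0$. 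This contradicts $d_{G(i^*,j^*)}(c^*,v_{a^*+1}) \leq r^*$. The main obstacle will be the two-edge subcase, which unlike the zero-edge case requires combining both length equations with the direct-path upper bound to expose the $|v_{i^*}v_{j^*}| > 0$ contradiction.
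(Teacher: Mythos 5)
Your proof is correct and follows the same skeleton as the paper's: rule out the case where $\pi^*$ avoids $e(v_{i^*},v_{j^*})$ by showing $v_1$ (and $v_n$) would then be too far from $c^*$, deduce same-sidedness, and then get adjacency. The difference is one of rigor rather than strategy. The paper's proof is terse --- it writes ``one can verify'' for the key contradiction and simply asserts the same-side and adjacency claims --- whereas you fill in every step, and in particular you explicitly dispose of the subcase in which \emph{both} $\pi_{a^*}$ and $\pi_{b^*}$ traverse the new edge, a case the paper silently skips over (it jumps from ``$\pi^*$ contains the edge'' to ``same side,'' which presupposes exactly one path uses it). Your algebraic summation arguments (adding the two length equations and extracting $|v_{i^*}v_{j^*}|<0$) are heavier than necessary --- e.g., in the zero-edge case it suffices to note that both routes from $c^*$ to $v_1$ individually exceed $r^*$ since each passes through a point at distance $r^*$ and then continues a positive amount, and in the two-edge case $|\pi_{a^*}|\ge d_P(c^*,v_{b^*})+|v_{i^*}v_{j^*}|>r^*$ already --- but they are valid. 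Your three-route check for $b^*=a^*+1$ correctly supplies the justification the paper omits (one of the three listed routes is not simple, but including it is harmless).
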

\begin{proof}
Assume to the contrary $\pi^*$ does not contain $e(v_{i^*},v_{j^*})$. Then, $c^*$ is between $v_{a^*}$ and $v_{b^*}$, and $\pi^*=P(v_{a^*},v_{b^*})$, e.g., see Fig.~\ref{fig:obser10}. Recall that $a^*>1$, $b^*<n$, and both $a^*$ and  $b^*$ are in $[i^*+1,j^*-1]$. Since $d_P(c^*,v_{a^*})=d_P(c^*,v_{b^*})=r^*$, one can verify that the distance from $c^*$ to $v_1$ (or $v_n$) in $G(i^*,j^*)$ must be larger than $r^*$, which incurs contradiction.
\begin{figure}[t]
\begin{minipage}[t]{\textwidth}
\begin{center}
\includegraphics[height=0.7in]{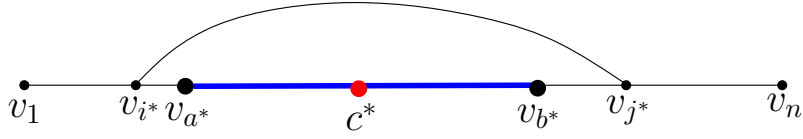}
\caption{\footnotesize  Illustrating the proof of Lemma~\ref{obser:80}. The thick (blue) path is $\pi^*$.}
\label{fig:obser10}
\end{center}
\end{minipage}
\vspace{-0.15in}
\end{figure}
%Since $v_1$ is not a farthest point of $c^*$, $d_{G(i^*,j^*)}(v_{a^*},c^*)=r^*>d_G(v_1,c^*)$. This implies that the shortest path from $c^*$ to $v_1$ contains $e(v_{i^*},v_{j^*})$, and in particular, $v_{j^*}$.
%This further implies that the shortest path from $c^*$ to $v_n$ is $P(c^*,v_n)$. Hence, $d_G(c^*,v_n)=d_P(c^*,v_n)>d_P(c^*,b^*)=r^*$, incurring contradiction.

Since $\pi^*$ contains $e(v_{i^*},v_{j^*})$, $v_{a^*}$ and $v_{b^*}$ must be on the same side of $c^*$, in which case $v_{a^*}$ and $v_{b^*}$ must be two adjacent vertices, i.e., $b^*=a^*+1$.
\qed
\end{proof}

In the following, we only discuss the case where $c^*$ is to the right
of $v_{a^*}$ and $v_{b^*}$ (e.g., see Fig.~\ref{fig:config32}),  and the algorithm for the other case
is symmetric. We make an assumption on $[i^*,j^*]$ that there is no smaller interval $[i,j]\subset [i^*,j^*]$ such that $(i,j)$ is also an optimal solution with the same configuration as $(i^*,j^*)$.
We again assume that none of the previously discussed cases happens.
The following lemma is literally the same as Lemma~\ref{lem:90} although the proof
is different.

\begin{lemma}\label{lem:100}
$i^*$ is the largest index $i\in [1,a^*]$ such that $d_P(v_1,v_i)< d_P(v_i,v_{a^*})$.
$j^*$ is the smallest index $j\in [b^*,n]$ such that $d_P(v_j,v_n)< d_P(v_j,v_{b^*})$.
\end{lemma}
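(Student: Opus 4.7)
The plan is to derive both parts by contradiction, clashing either with the minimality of $[i^*,j^*]$ or with the assumption that no previously discussed configuration has an optimal solution. Before starting, I would spell out the configuration explicitly: since $v_{a^*}$ and $v_{b^*}=v_{a^*+1}$ both lie left of $c^*$ on $P$ and both have distance $r^*$ from $c^*$ in $G(i^*,j^*)$, while the direct $P$-distance to $v_{a^*}$ strictly exceeds the direct $P$-distance to $v_{b^*}$, exactly one of the two shortest paths uses the new edge: $\pi_{b^*}=P(c^*,v_{b^*})$ and $\pi_{a^*}=P(c^*,v_{j^*})\cup e(v_{j^*},v_{i^*})\cup P(v_{i^*},v_{a^*})$. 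This gives the twin identities $r^*=d_P(c^*,v_{b^*})=d_P(c^*,v_{j^*})+|v_{j^*}v_{i^*}|+d_P(v_{i^*},v_{a^*})$, which drive every subsequent estimate.

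For the $i^*$ claim, the inequality $d_P(v_1,v_{i^*})<d_P(v_{i^*},v_{a^*})$ falls out because $v_1$ is not a farthest vertex (otherwise Case~1 or Case~2 would apply), so $d_{G(i^*,j^*)}(c^*,v_1)<r^*$; the direct $P$-route has length $d_P(c^*,v_1)=r^*+d_P(v_{b^*},v_1)>r^*$, forcing the shortest path to $v_1$ to go through the new edge, and subtracting from the identity for $r^*$ gives the inequality. For maximality, assuming some $i\in(i^*,a^*]$ also satisfies it, the monotonicities ($d_P(v_1,v_i)$ increasing, $d_P(v_i,v_{a^*})$ decreasing on $[1,a^*]$) propagate the violation down to $i=i^*+1$, and force $a^*\geq i^*+2$ so the configuration bookkeeping still works. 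I would then claim that $(i^*+1,j^*)$ with center $c^*$ is optimal with the same configuration: distances from $c^*$ to vertices on its right-hand side are unchanged from $G(i^*,j^*)$, the ``detour through the new edge'' bound $d_P(c^*,v_{j^*})+|v_{j^*}v_{i^*+1}|+d_P(v_{i^*+1},v_{a^*})\leq r^*$ follows by combining the triangle inequality $|v_{j^*}v_{i^*+1}|\leq |v_{j^*}v_{i^*}|+d_P(v_{i^*},v_{i^*+1})$ with $d_P(v_{i^*+1},v_{a^*})=d_P(v_{i^*},v_{a^*})-d_P(v_{i^*},v_{i^*+1})$, and the hypothesis $d_P(v_1,v_{i^*+1})<d_P(v_{i^*+1},v_{a^*})$ extends this to all vertices left of $v_{i^*+1}$.

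The $j^*$ claim is handled analogously, though not fully symmetrically, since $\pi_{b^*}$ and $\pi_{a^*}$ play different roles. The inequality $d_P(v_{j^*},v_n)<d_P(v_{j^*},v_{b^*})$ follows from $d_P(c^*,v_n)<r^*=d_P(c^*,v_{b^*})$, using that $v_n$ is not a farthest vertex and that the direct $P$-route to $v_n$ is optimal. For minimality, supposing $d_P(v_{j^*-1},v_n)<d_P(v_{j^*-1},v_{b^*})$, I first note that the hypothesis itself forces $b^*\leq j^*-2$ (otherwise the right-hand side would vanish), and then construct an optimal center for $G(i^*,j^*-1)$. The main obstacle is that $c^*$ may lie in the interior of the edge $e(v_{j^*-1},v_{j^*})$, in which case $c^*$ is not even a point of $G(i^*,j^*-1)$; mirroring the case analysis in Lemma~\ref{lem:75}, I would split on whether $|v_{i^*}v_{j^*-1}|\geq |c^*v_{j^*}|+|v_{i^*}v_{j^*}|$, using $c=v_{j^*-1}$ in one subcase and a carefully placed point on $e(v_{i^*},v_{j^*-1})$ in the other, with all bounds driven by triangle inequalities plus the $j^*$-hypothesis and the $i^*$-inequality from part~(a). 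When $c^*$ already lies in $P(v_{i^*},v_{j^*-1})$, the simpler choice $c=c^*$ suffices. The resulting optimal solution $(i^*,j^*-1)$ then either has the same configuration as $(i^*,j^*)$ (contradicting minimality of $[i^*,j^*]$) or places the center at $v_{j^*-1}$, which falls into Case~0 (contradicting the assumption on previously discussed configurations).
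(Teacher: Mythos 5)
Your proposal is correct, and for the first statement (on $i^*$) it follows the paper's own route exactly: derive $d_P(v_1,v_{i^*})<d_P(v_{i^*},v_{a^*})$ from $v_1$ not being a farthest vertex, then shift the left endpoint to $i^*+1$ and reuse the Lemma~\ref{lem:90}-style estimates to contradict the minimality of $[i^*,j^*]$. For the second statement (on $j^*$) you take a genuinely different route in the minimality step. The paper first proves, as a standalone fact, that $c^*$ must lie in the interior of $e(v_{j^*-1},v_{j^*})$, and then introduces the balanced point $c$ on $C(i^*,j^*-1)\setminus e(v_{a^*},v_{b^*})$ equidistant from $v_{a^*}$ and $v_{b^*}$ at distance $r=(|C(i^*,j^*-1)|-|v_{a^*}v_{b^*}|)/2\leq r^*$, finishing with a three-way case analysis on whether $c$ falls in $P(v_{i^*},v_{a^*})$, in the interior of the new edge, or in $P(v_{b^*},v_{j^*-1})$. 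You instead transplant the construction from Lemma~\ref{lem:75}: keep $c=c^*$ when $c^*\in P(v_{b^*},v_{j^*-1})$, and otherwise split on $|v_{i^*}v_{j^*-1}|$ versus $|c^*v_{j^*}|+|v_{i^*}v_{j^*}|$, placing the candidate center either at $v_{j^*-1}$ or on $e(v_{i^*},v_{j^*-1})$ at distance $|c^*v_{j^*}|+|v_{j^*}v_{i^*}|$ from $v_{i^*}$. I checked the resulting bounds (using the part-one inequality for $j\in[1,a^*]$, the identity $r^*=d_P(c^*,v_{b^*})$ for $j\in[b^*,j^*-1]$, and the contradiction hypothesis $d_P(v_{j^*-1},v_n)<d_P(v_{j^*-1},v_{b^*})$ for $j\in[j^*,n]$) and they all go through, with the final contradiction landing on Case 0, Case 3.1, or the interval-minimality assumption, just as in the paper. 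Your version buys uniformity with the proof of Lemma~\ref{lem:75}, at the cost of losing the clean ``balanced point on the cycle'' picture that the paper's construction shares with the query algorithm of Section~\ref{sec:query}; both are equally rigorous.
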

\begin{proof}
Unlike Lemma~\ref{lem:90}, the proofs here for the two parts of the
	lemma are different.
We begin with the first statement for $i^*$.

Since $v_1$ is not a farthest vertex of $c^*$ in $G(i^*,j^*)$, we claim that $d_P(v_1,v_{i^*})< d_P(v_{i^*},v_{a^*})$.
Indeed, since $c^*$ is to the right of $b^*$ and $\pi^*$ contains $e(v_{i^*},v_{j^*})$, $d_{G(i^*,j^*)}(c^*,v_{a^*})=d_P(c^*,v_{j^*})+|v_{j^*}v_{i^*}|+d_P(v_{i^*},v_{a^*})=r^*$ and $d_{G(i^*,j^*)}(c^*,v_1)=d_P(c^*,v_{j^*})+|v_{j^*}v_{i^*}|+d_P(v_{i^*},v_{1})$. Since $v_1$ is not a farthest point of $c^*$, $d_{G(i^*,j^*)}(c^*,v_1)<r^*=d_{G(i^*,j^*)}(c^*,v_{a^*})$, which leads to $d_P(v_1,v_{i^*})< d_P(v_{i^*},v_{a^*})$.

Assume to the contrary that $i^*$ is not the largest such index as stated in the lemma. Then, we must have $d_P(v_1,v_{i^*+1})< d_P(v_{i^*+1},v_{a^*})$. We claim that the distances from $c^*$ to all vertices in $G(i^*+1,j^*)$ are at most $r^*$. This can be proved by the similar argument as that for Lemma~\ref{lem:90}, so we omit the details. The claim implies that $(i^*+1,j^*)$ is also an optimal solution with the same configuration as $(i^*,j^*)$. Since $[i^*+1,j^*]\subset [i^*,j^*]$, this contradicts with our assumption on $[i^*,j^*]$.

We proceed to prove the second statement of the lemma for $j^*$.

Since $v_n$ is not a farthest vertex of $c^*$ in $G(i^*,j^*)$, we claim that $d_P(v_{j^*},v_n)< d_P(v_{j^*},v_{b^*})$. Indeed, since $c^*$ is to the right of $v_{b^*}$ and $\pi^*$ contains $e(v_{i^*},v_{j^*})$, $d_{G(i^*,j^*)}(c^*,v_n)=d_P(c^*,v_n)$. Because $v_n$ is not a farthest point of $c^*$ in $G(i^*,j^*)$, $d_{G(i^*,j^*)}(c^*,v_n)<r^*$. Thus, $d_P(c^*,v_n)<r^*$.
Since $v_{j^*}\in P(c^*,v_{n})$, $d_P(v_{j^*},v_n)\leq  d_P(c^*,v_n)<r^* = d_P(c^*,v_{b^*})\leq d_P(v_{j^*},v_{b^*})$.

%In the following, we show that $j^*$ is the smallest such index. The proof techniques are slightly different.

Next, we show that $c^*$ must be in the interior of $e(v_{j^*-1},v_{j^*})$. Assume to the contrary this is not true. Then,  since $c^*\neq v_{j^*}$, $c^*$ is in $P(v_{b^*,},v_{j^*-1})$ (e.g., see Fig.~\ref{fig:newgraph70}).
%Consider the graph $G=P\cup e(v_{i^*},v_{j^*-1})$.
We claim that the distances from $c^*$ to all vertices in $G(i^*,j^*-1)$ are at most $r^*$. Indeed, for each vertex $v_j$ to the right of $c^*$, $d_{G(i^*,j^*-1)}(c^*,v_j)\leq d_P(c^*,v_j)\leq d_P(c^*,v_n)<r^*$. For each vertex $v_j$ to the left of $c^*$ but to the right of $v_{b^*}$, $d_{G(i^*,j^*-1)}(c^*,v_j)\leq d_P(c^*,v_j)\leq d_P(c^*,v_{b^*})=r^*$. For each $j\in [1,a^*]$, since $d_P(v_1,v_{i^*})< d_P(v_{i^*},v_{a^*})$ (which was proved above), we have
\begin{equation*}
\begin{split}
d_{G(i^*,j^*-1)}(c^*,v_j) &\leq d_P(c^*,v_{j^*-1})+|v_{j^*-1}v_{i^*}|+d_P(v_{i^*},v_{j})\\
& \leq d_P(c^*,v_{j^*-1})+|v_{j^*-1}v_{i^*}|+d_P(v_{i^*},v_{a^*})\\
& \leq d_P(c^*,v_{j^*})+|v_{i^*}v_{j^*}|+d_P(v_{i^*},v_{a^*}) = r^*.
\end{split}
\end{equation*}
This proves the claim. The claim implies that $(i^*,j^*-1)$ is also an optimal solution with the same configuration as $(i^*,j^*)$. Since $[i^*,j^*-1]\subset [i^*,j^*]$, this contradicts with our assumption on $[i^*,j^*]$.

\begin{figure}[t]
\begin{minipage}[t]{\textwidth}
\begin{center}
\includegraphics[height=0.7in]{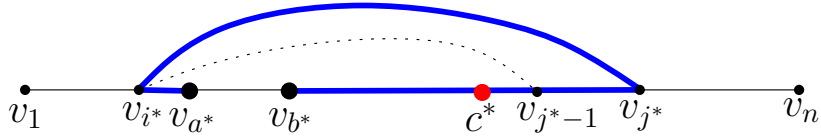}
\caption{\footnotesize  Illustrating the proof for Lemma~\ref{lem:100}: $c^*$ is to the left of $v_{j^*-1}$.}
\label{fig:newgraph70}
\end{center}
\end{minipage}
\end{figure}

We proceed to show that $j^*$ is the smallest such index as stated in the lemma. Assume to the contrary that this is not true. Then, since $d_P(v_{j^*},v_n)< d_P(v_{j^*},v_{b^*})$, which is proved above,  $d_P(v_{j^*-1},v_n)< d_P(v_{j^*-1},v_{b^*})$ must hold.
%Since $c^*$ is in the interior of $e(j^*-1,j^*)$, $c^*$ is strictly to the right of $v_{j^*-1}$.
For a cycle $C(i,j)$, let $|C(i,j)|$ denote its total length.
Let $r=(|C(i^*,j^*-1)|-|v_{a^*}v_{b^*}|)/2$. Note that $r^*=(|C(i^*,j^*)|-|v_{a^*}v_{b^*}|)/2$. Due to the triangle inequality, we have $|C(i^*,j^*-1)|\leq |C(i^*,j^*)|$. Hence, $r\leq r^*$. Let $c$ be the point on $C(i^*,j^*-1)\setminus e(v_{a^*},v_{b^*})$ whose distances from $v_{a^*}$ and $v_{b^*}$ are both equal to $r$ (e.g., see Fig.~\ref{fig:new80}). Note that $c$ may be in $P(v_{i^*},v_{a^*})$, $e(v_{i^*},v_{j^*-1})\setminus\{v_{i^*},v_{j^*-1}\}$, or $P(v_{b^*},v_{j^*-1})$. In the following, we show that none of the three cases can happen.

If $c\in P(v_{i^*},v_{a^*})$, then let $c'=v_{i^*}$ and we claim that the distances from $c'$ to all vertices in $G(i^*,j^*-1)$ are smaller than $r^*$, which would incur contradiction. Indeed, since $d_P(c^*,v_{j^*})+|v_{j^*}v_{i^*}|+d_P(v_{i^*},v_{a^*})=r^*$ and $d_P(c^*,v_{j^*})>0$ (because $c^*\neq v_{j^*}$), we have $r=d_P(c,v_{a^*})\leq d_P(v_{i^*},v_{a^*})<r^*$. Hence, for each $j\in [i^*,a^*]$, $d_{G(i^*,j^*-1)}(c',v_j)\leq d_P(c',v_j)\leq d_P(c',v_{a^*})<r^*$. Since $d_P(v_1,v_{i^*})<d_P(v_{i^*},v_{a^*})$, for each $j\in [1,i^*]$, $d_{G(i^*,j^*-1)}(c',v_j)\leq d_P(c',v_j)\leq d_P(c',v_{1})< d_P(c',v_{a^*})<r^*$.
For each $j\in [b^*,j^*-1]$, $d_{G(i^*,j^*-1)}(c',v_j)\leq |v_{i^*}v_{j^*-1}|+d_P(v_{j^*-1},v_j)\leq d_P(c,v_{i^*})+|v_{i^*}v_{j^*-1}|+d_P(v_{j^*-1},v_{j})\leq d_P(c,v_{i^*})+|v_{i^*}v_{j^*-1}|+d_P(v_{j^*-1},v_{b^*})=r<r^*$. Since $d_P(v_{j^*-1},v_{b^*})>d_P(v_{j^*-1},v_n)$, for each $j\in [j^*,n]$, $d_{G(i^*,j^*-1)}(c',v_j)\leq |v_{i^*}v_{j^*-1}|+d_P(v_{j^*-1},v_j)\leq d_P(c,v_{i^*})+|v_{i^*}v_{j^*-1}|+d_P(v_{j^*-1},v_j)\leq
d_P(c,v_{i^*})+|v_{i^*}v_{j^*-1}|+d_P(v_{j^*-1},v_n)<  d_P(c,v_{i^*})+|v_{i^*}v_{j^*-1}|+d_P(v_{j^*-1},v_{b^*})=r<r^*$. The above claim is thus proved.

\begin{figure}[t]
\begin{minipage}[t]{\textwidth}
\begin{center}
\includegraphics[height=0.9in]{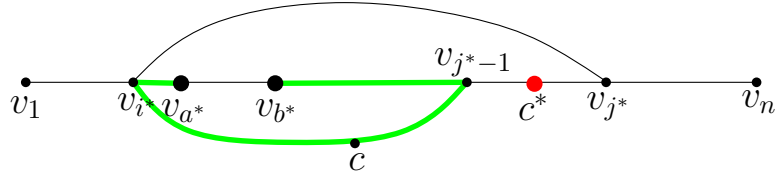}
\caption{\footnotesize  Illustrating the proof for Lemma~\ref{lem:100}: The green curve from $c$ to $v_{a^*}$ and the one from $c$ to $v_{b^*}$ have the same length.}
\label{fig:new80}
\end{center}
\end{minipage}
\vspace{-0.15in}
\end{figure}

If $c\in e(v_{i^*},v_{j^*-1})\setminus\{v_{i^*},v_{j^*-1}\}$ (e.g., see Fig.~\ref{fig:new80}), then we claim that the distances from $c$ to all vertices in $G(i^*,j^*-1)$ are at most $r^*$. Indeed, for each $j\in [1,a^*]$, $d_{G(i^*,j^*-1)}(c,v_j)\leq |cv_{i^*}|+d_P(v_{i^*},v_j)\leq |cv_{i^*}|+d_P(v_{i^*},v_{a^*})=r\leq r^*$. Similarly, for each $j\in [b^*,n]$, $d_{G(i^*,j^*-1)}(c,v_j)\leq |cv_{j^*-1}|+d_P(v_{j^*-1},v_j)\leq |cv_{j^*-1}|+d_P(v_{j^*-1},v_{b^*})=r\leq r^*$. The claim thus follows. The claim implies that $(i^*,j^*-1)$ is  an optimal solution conforming with the configuration of Case 3.1 (i.e., the one shown in Fig.~\ref{fig:config31}), which contradicts with our assumption that none of the previously discussed cases happens.

If $c\in P(v_{b^*},v_{j^*-1})$, then let $c'=v_{j^*-1}$. We claim that the distances from $c'$ to all vertices in $G(i^*,j^*-1)$ are smaller than $r^*$, which would incur contradiction. Indeed, as $c^*$ is in the interior of $e(j^*-1,j^*)$, $r= d_P(v_{b^*},c)\leq d_P(v_{b^*},c')<d_P(v_{b^*},c^*)=r^*$. For each $j\in [1,a^*]$, $d_{G(i^*,j^*-1)}(c',v_j)\leq |v_{j^*-1}v_{i^*}|+d_P(v_{i^*},v_j)\leq d_P(c,v_{j^*-1})+|v_{j^*-1}v_{i^*}|+d_P(v_{i^*},v_j) \leq
d_P(c,v_{j^*-1})+|v_{j^*-1}v_{i^*}|+d_P(v_{i^*},v_{a^*}) = r < r^*$.
For each $j\in [b^*,n]$, since $d_P(v_{j^*-1},v_n)< d_P(v_{j^*-1},v_{b^*})$, $d_{G(i^*,j^*-1)}(c',v_j)\leq d_P(c',v_j)\leq d_P(c',v_{b^*})< d_P(c^*,v_{b^*}) = r^*$. The claim thus follows.
\qed
\end{proof}

Based on Lemma~\ref{lem:100}, our algorithm for this configuration works as follows.
We define $i(k)$ and $j(k)$ for each $k\in [2,n-2]$ in the same way as those for Case 3.1, and their values have already been computed in Case 3.1. Then, for each $k\in [2,n-2]$, let $r(i)=(d_P(v_k,v_{i(k)})+|v_{i(k)}v_{j(k)}|+d_P(j(k),v_{k+1}))/2$, and if $r(i)< d_P(v_{k+1},v_{j(k)})$ (this makes sure that the center is on $P(v_{k+1},v_{j(k)}$), then we keep $(i(k),j(k))$ as a candidate solution with $r(i)$ as the radius. The definitions of $i(k)$ and $j(k)$ guarantee that it is a feasible solution. Finally, among the at most $n$ candidate solutions, we keep the one with the smallest radius as the solution for this configuration. The total time of the algorithm is $O(n)$.

\paragraph{Remark.} The above gives the algorithm for Case 3.2 when $c^*$ is to the right of $v_{b^*}$. If $c^*$ is to the left of $v_{a^*}$, then we also use the above same values $i(k)$, $j(k)$, and $r(i)$. We keep the candidate solution only if $r(i)< d_P(v_{i(k)},v_{k})$ (this makes sure that the center is on $P(v_{i(k)},v_{k}$). In fact, the can unify our algorithms for Case 3.1 and Case 3.2 to obtain an algorithm for Case 3, as follows. We compute the same values $i(k)$, $j(k)$, and $r(i)$ as before. Then, for each  $k\in [2,n-2]$, we keep $(i(k),j(k))$ as a candidate solution with $r(i)$ as the radius. Finally, among all at most $n$ candidate solutions, we keep the one with the smallest radius for Case 3.

\paragraph{Summary.}
The above provides a linear time algorithm for computing a candidate solution $(i,j)$ (along with a radius $r$ and a corresponding center $c$) for each configuration so that if there is an optimal solution that has the same configuration then $(i,j)$ is also an optimal solution with $c$ as the center and $r=r^*$. On the other hand, each such solution is feasible in the sense that the distances from $c$ to all vertices in $G(i,j)$ are at most $r$. Given an input instance, since we do not know which configuration has an optimal solution, we use the above algorithm to compute a constant number of candidate solutions, and among them, we return the one with the smallest radius. The correctness follows our discussions above. The running time of the algorithm is $O(n)$.

\begin{theorem}
The ROAP problem is solvable in linear time.
\end{theorem}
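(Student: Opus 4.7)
The plan is to prove the theorem by assembling the per-configuration algorithms already developed in Sections~\ref{sec:algo}. I would first observe that by Observation~\ref{obser:10}, any optimal solution $(i^*,j^*)$ comes with a center $c^*$ and two farthest vertices $v_{a^*},v_{b^*}$ whose shortest-path union has $c^*$ at its midpoint. Based on where $c^*$ lies (on $P(v_1,v_{i^*})$, on $P(v_{j^*},v_n)$, on $e(v_{i^*},v_{j^*})$, or on $P(v_{i^*},v_{j^*})$) and on whether $a^*=1$, $b^*=n$, or $a^*,b^*\in[i^*,j^*]$, the optimal configuration falls into one of a constant number of cases enumerated as Case 0, 1.1, 1.2.1.1, 1.2.1.2, 1.2.2, 2, 3.1, and 3.2. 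Since these cases exhaust all possibilities, at least one of them must contain an optimal solution.

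Next, I would invoke, for each configuration, the algorithm described in the corresponding subsection. For Case~0 this uses Lemma~\ref{lem:10}; for each of the remaining cases it uses the characterization lemma (Lemmas~\ref{lem:60}, \ref{lem:70}, \ref{lem:75}, \ref{lem:80}, \ref{lem:90}, \ref{lem:100}) together with the corresponding monotonicity observations to compute the requisite index pairs $k(i)$ and $j(i)$ (or $i(k)$ and $j(k)$) by a single linear scan on $P$. I would then remark that each algorithm returns a pair $(i,j)$ with an explicit center $c$ and radius $r$ such that (a)~$r$ is an upper bound on the distance from $c$ to every vertex of $G(i,j)$, so the solution is feasible, and (b)~if the optimum happens to realize that configuration, then $r=r^*$ and $(i,j)$ is optimal.

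Finally, I would note that there is a constant number of configurations, each handled in $O(n)$ time, so the overall running time is $O(n)$. Among the constant number of candidate solutions produced, the algorithm returns the one with the smallest radius; by feasibility no returned candidate has radius below $r^*$, and by exhaustiveness at least one candidate has radius exactly $r^*$, so the returned solution is optimal. Finally, $\Omega(n)$ is a trivial lower bound since each input vertex must be read, so the linear running time is optimal.

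The main obstacle, which has already been dispatched in the preceding lemmas, is the correctness argument for the individual configurations, especially the characterization lemmas that localize $i^*$, $j^*$, $a^*$, and $b^*$: each requires a delicate perturbation argument showing that if the claimed characterization failed, one could either shift the center slightly, shift $j^*$ by one (contradicting the minimality/maximality assumption on $j^*$), or exhibit an alternative optimal solution falling into a previously handled case (contradicting the assumption that no earlier configuration is optimal). Given those lemmas, the theorem itself reduces to the bookkeeping outlined above.
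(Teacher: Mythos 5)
Your proposal is correct and follows essentially the same route as the paper: it aggregates the per-configuration candidate solutions (Case 0 via Lemma~\ref{lem:10}, the remaining cases via Lemmas~\ref{lem:60}--\ref{lem:100} and the associated monotonicity-based linear scans), and concludes via the feasibility of every candidate plus the exhaustiveness of the constant number of configurations that the minimum-radius candidate is optimal and computed in $O(n)$ total time. This matches the paper's summary argument, with your added (correct) remark on the trivial $\Omega(n)$ lower bound.
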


\section{The Query Algorithm}
\label{sec:query}

As a by-product of our techniques, we present an $O(\log n)$ time algorithm to compute the radius and a center of the graph $G(i,j)=P\cup e(v_i,v_j)$ for any query pair of indices $(i,j)$, after $O(n)$ time preprocessing. The result may be interesting in its own right.

We perform the linear time preprocessing as in Section~\ref{sec:pre} so that $d_P(v_i,v_j)$ can be computed in $O(1)$ time for any $(i,j)$.

In addition, we need to perform preprocessing to answer the following {\em range-maxima} queries. Given any pair $(i,j)$, find the interval $[k,k+1]$ such that the length $|v_kv_{k+1}|$ is the largest among all $k\in [i,j-1]$. The query can be answered in $O(\log n)$ time by using a binary search tree $T$ as follows. $T$ has $n-1$ leaves that correspond to the intervals $[k,k+1]$ for $k=1,2,\ldots, n-1$ respectively. The root of $T$ stores the interval $[k,k+1]$ with the largest $|v_kv_{k+1}|$ for all $k\in [1,n-1]$. The left subtree of $T$ is built with respect to $k=1,2,\ldots,\lfloor n/2\rfloor$ recursively, and the right subtree is built with respect to $k=\lfloor n/2\rfloor+1,\ldots,n-1$ recursively. $T$ can be built in $O(n)$ time in a bottom-up fashion. With $T$, each range-maxima query can be answered in $O(\log n)$ time in a standard way (e.g., like queries in segment or interval trees).

\paragraph{Remark.} Another more efficient but rather complicated method is to use range maxima data structure~\cite{ref:BenderTh00,ref:HarelFa84}, and each query can be answered in $O(1)$ time and the preprocessing time is still $O(n)$. However, since $O(\log n)$ time query is sufficient for our purpose, the above binary tree method, which is quite simple, is preferable.
\medskip

The above is our preprocessing algorithm, which runs in $O(n)$ time.
In the sequel, we present our query algorithm. Let $(i,j)$ be a query with $i\leq j$. Let $G=G(i,j)$. Denote by $c$ a center of $G$ and $r$ the radius. Note that Observation~\ref{obser:10} is still applicable (replacing $i^*$, $j^*$, $c^*$, $r^*$ by $i$, $j$, $c$, $r$, respectively). Let $a$ and $b$ respectively be $a^*$ and $b^*$ stated in Observation~\ref{obser:10}, and let $\pi$ be the union of the two paths $\pi_{a}$ and $\pi_{b}$ corresponding to $\pi_{a^*}$ and $\pi_{b^*}$ in Observation~\ref{obser:10}.
Without loss of generality, we assume that $a\leq b$.

Our query algorithm works as follows. Depending on $a$, $b$, and $\pi$, there are several possible configurations as discussed in Section~\ref{sec:algo}. For each configuration, we will compute in $O(\log n)$ time a candidate solution (i.e., a radius and a center) such that if that configuration happens then the candidate solution is an optimal solution. On the other hand, each solution is feasible in the sense that the distances from the candidate center to all vertices in $G(i,j)$ is no more than the candidate radius. After the candidate solutions for all (a constant number of) configurations are computed, we return the solution with the smallest radius. The details are given below.

If $c$ is on $P(v_1,v_i)$, then $d_P(v_1,v_i)\geq \alpha(i,j)$ and $r=(d_P(v_1,v_i)+\alpha(i,j))/2$. Correspondingly, for this configuration, our algorithm works as follows.

We first compute $d_P(v_1,v_i)$ and $\alpha(i,j)$. For $d_P(v_1,v_i)$, it can be easily computed in $O(1)$ time. For $\alpha(i,j)$, recall that $\alpha(i,j)=\max\{\beta(i,j),\gamma(i,j)\}$, $\beta(i,j)=\max_{k\in
[i,j]}d_{C(i,j)}(v_{i},v_k)$, and $\gamma(i,j)=|v_iv_j|+d_P(v_j,v_n)$. Clearly, $\gamma(i,j)$ can be obtained in $O(1)$ time. For $\beta(i,j)$, we can compute it in $O(\log n)$ time by binary search. Indeed, let $q$ be the point on $C(i,j)$ such that $d_{C(i,j)}(v_i,q)=(d_P(v_i,v_j)+|v_iv_j|)/2$, i.e., half of the total length of the cycle $C(i,j)$. Since $|v_iv_j|\leq d_P(v_i,v_j)$, $q$ must be on $P(v_i,v_j)$. Note that $q$ can be found in $O(\log n)$ time. Then, it can be verified the following is true. If $q$ is at a vertex of $P$, then $\beta(i,j)=(d_P(v_i,v_j)+|v_iv_j|)/2$. Otherwise, suppose $q$ is in the interior of the edge $e(v_k,v_{k+1})$ for some $k\in [i,j-1]$ ($k$ can be determined in the above binary search procedure for computing $q$); then $\beta(i,j)=\max\{d_P(v_i,v_k),|v_iv_j|+d_P(v_{k+1},v_j)\}$, which can be computed in $O(1)$ time.

After $d_P(v_1,v_i)$ and $\alpha(i,j)$ are computed, if $d_P(v_1,v_i)\geq \alpha(i,j)$, then we report $r=(d_P(v_1,v_i)+\alpha(i,j))/2$ and the point on $P(v_1,v_i)$ whose distance from $v_1$ is equal to $r$ as the center $c$. Note that $c$ can be found by binary search in $O(\log n)$ time. This finishes our algorithm for the case where $c$ is on $P(v_1,v_i)$. The algorithm runs in $O(\log n)$ time.

If $c$ is on $P(v_j,v_n)$, then we use a symmetric algorithm and we omit the details.

It remains to consider the configuration where $c\in C(i,j)\setminus\{v_i,v_j\}$. In the following, we consider the cases corresponding to those in Section~\ref{sec:algo}. As in Section~\ref{sec:algo}, whenever we discuss a configuration, we assume that none of the previously discussed configurations has an optimal solution.

\subsubsection{Case 1: $a=1$.}
In this case, depending on whether $b=n$ or $b\in [i,j]$, there are two subcases.

\subsubsection{Case 1.1: $b=n$.}
If $\pi$ does not contain $e(i,j)$, then $\pi$ is the entire path $P$. Correspondingly, the radius of our candidate solution is $P(v_1,v_n)/2$ and the center can be computed by binary search in $O(\log n)$ time.

If $\pi$ contains $e(i,j)$, since $c\in C(i,j)\setminus\{v_i,v_j\}$ and $c\in \pi$, $c$ must be in the interior of $e(i,j)$. Thus,  $r=(d_P(v_1,v_i)+|v_iv_j|+d_P(v_j,n))/2$ and $c$ is the point on $e(i,j)$ such that $d_P(v_1,v_i)+|v_ic|=r$.
In addition, it must hold that $d_{G}(c,v_k)\leq r$ for all $k\in [i+1,j-1]$. Hence, if $i'$ is the largest index such that $d_P(v_i,v_{i'})\leq d_P(v_1,v_i)$ and if $j'$ is the smallest index such that $d_P(v_j,v_{j'})\leq d_P(v_j,v_n)$, then it must hold that $j'\leq i'+1$.

Correspondingly, our algorithm works as follows. We compute $r'=(d_P(v_1,v_i)+|v_iv_j|+d_P(v_j,n))/2$. We also compute the two indices $i'$ and $j'$ as defined above, which can be done in $O(\log n)$ time. If $r'> d_P(v_1,v_i)$, $r'> d_P(v_j,v_n)$, and $j'\leq i'+1$, then we keep $r'$ as a candidate radius and $c'$ as the candidate center, where $c'$ is the point on $e(i,j)$ with $d_P(v_1,v_i)+|v_ic'|=r'$.
One can verify that our solution is feasible.

\subsubsection{Case 1.2: $b\in [i,j]$.}
Depending on whether $\pi$ contains $e(v_i,v_j)$, there are two cases.

\subsubsection{Case 1.2.1: $e(v_i,v_j)\in \pi$.}

Depending on whether $c$ is on $e(v_i,v_j)$, there are further two subcases.

\subsubsection{Case 1.2.1.1: $c\in e(v_i,v_j)$ (e.g., similar to Fig.~\ref{fig:config20}).}
More precisely, $c$ is in the interior of
$e(v_i,v_j)$.
Define $i'$ as the smallest index such that $d_P(v_1,v_i)<d_P(v_i,v_{i'})$. Since $v_1$ and $v_b$ are two farthest vertices of $c$ and $v_b\in P(v_i,v_{j})$, it can be verified that $b=i'$ (e.g., by a similar argument as for Lemma~\ref{lem:70}(2)) and $d_P(v_{i'},v_j)\geq d_P(v_j,v_n)$.
Further, $r=(d_P(v_1,v_i)+|v_iv_j|+d_P(v_{i'},v_j))/2$.

Correspondingly, our algorithm works as follows. We first compute $i'$
by binary search in $O(\log n)$ time. If such an index $i'$ does
not exist or if $i'\geq j$, then we stop the algorithm.
Otherwise, we compute $r'=(d_P(v_1,v_i)+|v_iv_j|+d_P(v_{i'},v_j))/2$.
If $d_P(v_1,v_i)<r'$ and $d_P(v_{i'},v_j)<r'$ (this makes sure that the
center is in the interior of $e(v_i,v_j)$) and
$d_P(v_{i'},v_j)\geq d_P(v_j,v_n)$, then we keep $r'$ as a candidate
radius for this case (the center can be computed on $e(v_i,v_j)$ accordingly).
One can verify that our solution is feasible.

\subsubsection{Case 1.2.1.2: $c\not\in e(v_i,v_j)$.}

In this case, $c$ is in $P(v_b,v_j)\setminus\{v_j\}$ (e.g., similar to the bottom example in Fig.~\ref{fig:config}). Also, $r=(d_P(v_1,v_i)+|v_iv_j|+d_P(v_{j},v_b))/2$ and $d_P(v_j,v_n)\leq d_P(v_1,v_i)+|v_iv_j|$.
Define the index $i'$ in the same way as in the above Case 1.2.1.1. Then, we also have $b=i'$ (by a similar argument as for Lemma~\ref{lem:75}(2)).

Correspondingly, our algorithm works as follows. We first check whether $d_P(v_j,v_n)\leq d_P(v_1,v_i)+|v_iv_j|$ is true. If not, then we stop our algorithm. Otherwise, we compute the
index $i'$ as defined in the above Case 1.2.1.1.
If such an index $i'$ does not exist or if $i'\geq j$, then we stop the algorithm.
Otherwise, we compute $r'=(d_P(v_1,v_i)+|v_iv_j|+d_P(v_j,v_{i'}))/2$.
If $d_P(v_{i'},v_j)>r'$, then we keep $r'$ as a
candidate radius (the candidate center can be computed accordingly on $P(v_{i'},v_j)$).
One can verify that our solution is feasible.

\subsubsection{Case 1.2.2: $e(v_i,v_j)\not\in \pi$.}

In this case, $c\in P(v_i,v_b)$, $\pi=P(v_1,c)\cup P(c,v_b)$ (similar to Fig.~\ref{fig:config122}), and
$r=d_P(v_1,v_b)/2$.
%Since none of the previously discussed configurations has an optimal solution and $v_1$ and $v_b$ are two farthest vertices of $c$,
Let $i'$ be largest index in $[i,j]$ such that $d_P(v_1,v_i)< |v_iv_j|+d_P(v_{i'},v_j)$. One can verify that $b=i'$ (by a similar argument as for Lemma~\ref{lem:80}(2)).
Since $v_b$ is a farthest vertex and $d_G(c,v_b)=d_P(c,v_b)$,
for any $k\in [b+1,n]$, $d_G(c,v_{k})=d_P(c,v_i)+|v_iv_j|+d_P(v_j,v_{k})$.
In particular, we have
$d_G(c,v_{n})=d_P(c,v_i)+|v_iv_j|+d_P(v_j,v_{n})$.
Since $v_1$ is a farthest vertex and
$d_G(c,v_1)=d_P(c,v_1)=d_P(c,v_i)+d_P(v_i,v_1)$,
we can obtain $d_P(v_1,v_i)\geq |v_iv_j|+d_P(v_j,v_n)$.

Correspondingly, our algorithm works as follows. We first check whether $d_P(v_1,v_i)\geq |v_iv_j|+d_P(v_j,v_n)$. If not, we stop the algorithm. Otherwise, we compute the index $i'$, which can be done in $O(\log n)$ time by binary search. If no such index exists, then we stop the algorithm. Otherwise, we compute $r'=d_P(v_1,v_{i'})/2$. If $r'>d_P(v_1,v_i)$, then we keep $r'$ as a candidate radius (the  center can be determined correspondingly on $P(v_i,v_{i'})$).
One can verify that our solution is feasible.

\subsubsection{Case 2: $b=n$.}

This case is symmetric to the above Case 1 and we omit the details.

\subsubsection{Case 3: $a,b\in [i,j]$.}

Depending on whether $c\in e(v_i,v_j)$, there are two cases.

\subsubsection{Case 3.1: $c\in e(v_i,v_j)$.}

In this case, as argued in Section~\ref{sec:algo}, $b=a+1$ (similar to the example in Fig.~\ref{fig:config31}). Let $|C(i,j)|$ denote the total length of the cycle $C(i,j)$. We have $r=(d_P(v_i,v_a)+|v_iv_j|+d_P(v_b,v_j))/2=(|C(i,j)|-d_P(v_a,v_b))/2$.

Let $i'$ be the smallest index such that $d_P(v_1,v_i)\leq d_P(v_i,v_{i'})$ and
let $j'$ be the largest index such that $d_P(v_{j'},v_j)\geq d_P(v_j,v_{n})$.
Let $i''$ be the smallest index such that $|v_jv_i|+d_P(v_i,v_{i''})+d_P(v_{i''},v_{i''+1})/2>|C(i,j)|/2$ (intuitively, if $p$ is the point on $C(i,j)$ whose distance from $v_j$ is $|C(i,j)|/2$, then if we traverse $C(i,j)$ from $v_j$ in counterclockwise order, after we pass $p$, the mid point of $d_P(v_{i''},v_{i''+1})$ will be encountered first among the mid points of $P(v_k,v_{k+1})$ for all $k\in [i,j-1]$; e.g., see Fig.~\ref{fig:midpoint}).
Symmetrically, let $j''$ be the largest index such that $|v_iv_j|+d_P(v_j,v_{j''-1})+d_P(v_{j''-1},v_{j''})/2>|C(i,j)|/2$.
Let $a'=\max\{i',i''\}$ and $b'=\min\{j',j''\}$.
We have the following lemma.

\begin{figure}[t]
\begin{minipage}[t]{\textwidth}
\begin{center}
\includegraphics[height=1.7in]{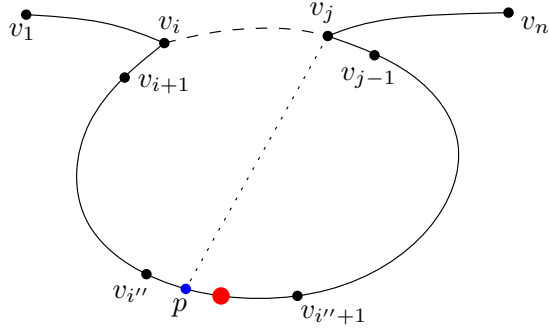}
\caption{\footnotesize  Illustrating the definition of $i''$: the dashed curve is $e(v_i,v_j)$, $p$ is the point whose distance from $v_j$ is equal to $|C(i,j)|/2$, and the (red) large point between $p$ and $v_{i''+1}$ is the mid point of $P(v_{i''},v_{i''+1})$.}
\label{fig:midpoint}
\end{center}
\end{minipage}
\vspace{-0.15in}
\end{figure}

\begin{lemma}
$a'\leq a< b\leq b'$.
$a$ is the index $k\in [a',b'-1]$ such that $d_P(v_k,v_{k+1})$ is the largest.
\end{lemma}
\begin{proof}
Clearly, $d_G(c,v_1)=|cv_i|+d_P(v_i,v_1)$ and $d_G(c,v_a)=|cv_i|+d_P(v_i,v_a)$. Since $d_G(c,v_1)\leq d_G(c,v_a)=r$, $d_P(v_1,v_i)\leq d_P(v_i,v_a)$. The definition of $i'$ implies that $a\geq i'$. By a similar argument we can show $b\leq j'$.

On the other hand, since $c$ is in the interior of $e(v_i,v_j)$, $|v_iv_j|> |cv_i|$ and
$|cv_i|+d_P(v_i,v_{a}) = r = (|C(i,j)|-d_P(v_a,v_{a+1}))/2$. Hence,
$|cv_i|+d_P(v_i,v_{a})+d_P(v_{a},v_{a+1})/2= |C(i,j)|/2$. Also, since $|v_iv_j|> |cv_i|$, we obtain
$|v_jv_i|+d_P(v_i,v_{a})+d_P(v_{a},v_{a+1})> |C(i,j)|/2$.
According to the definition of $i''$, we have $a\geq i''$. By a similar argument we can show $b\leq j''$.

The above proves that $a'\leq a<b\leq b'$.

For the second statement of the lemma,
assume to the contrary that $d_P(v_a,v_{a+1})<d_P(v_k,v_{k+1})$ for some $k\neq a$ and $k\in [a',b'-1]$.
Then, let $c'$ be the point on $C(i,j)$ whose distance from the mid point of $P(v_k,v_{k+1})$ is $|C(i,j)|/2$. Let $r'=(|C(i,j)|-d_P(v_k,v_{k+1}))/2$, which is smaller than $r$ due to $d_P(v_a,v_{a+1})<d_P(v_k,v_{k+1})$. Since $k\in [a',b'-1]$, one can verify that $c'$ is in the interior of $e(v_i,v_j)$, $|c'v_i|+d_P(v_i,v_1)\leq r'$, and $|c'v_j|+d_P(v_j,v_n)\leq r'$. Therefore, the distances from $c'$ to all vertices in $G(i,j)$ are smaller than $r$, which contradicts with that $r$ is the radius of $G(i,j)$. \qed
\end{proof}

Correspondingly, our algorithm works as follows. We first compute $a'$ and $b'$, which can be done by binary search in $O(\log n)$ time. If one of $a'$ and $b'$ does not exist, or $a'\geq b'$, then we stop the algorithm. Otherwise, we compute the index $k\in [a',b'-1]$ such that $d_P(v_k,v_{k+1})$ is the largest, which can be done in $O(\log n)$ time by our range-maxima data structure using the query range $(a',b')$. Then, we compute $r'=(|C(i,j)|-d_P(v_k,v_{k+1}))/2$ as the radius for our candidate solution (the center can be determined on $e(v_i,v_j)$ accordingly).
One can verify that our solution is feasible.

\subsubsection{Case 3.2: $c\not\in e(v_i,v_j)$.}

In this case, $c$ is in the interior of $P(v_i,v_j)$. Similar to the argument in Section~\ref{sec:algo}, $b=a+1$, and $c$ is either to the right of $v_b$ or to the left of $v_a$. We only discuss the former case since the latter case is similar. In this case (similar to the example in Fig.~\ref{fig:config32}), for each $k\in [1,a]$, the shortest path from $c$ to $v_k$ in $G$ contains $e(v_i,v_j)$. For each each $k\in [b,n]$, the shortest path from $c$ to $v_k$ is $P(c,v_k)$. Since $v_a$ is a farthest vertex, we have $d_P(v_1,v_i)\leq d_P(v_i,v_a)$ and $d_P(v_j,v_n)\leq |v_jv_i|+d_P(v_i,v_a)$.

Let $i'$ be the smallest index such that $d_P(v_1,v_i)\leq d_P(v_i,v_{i'})$.
Let $i''$ be the smallest index such that $d_P(v_j,v_n)\leq |v_jv_i|+d_P(v_i,v_{i''})$.
Let $b'$ be the largest index such that $d_P(v_j,v_{b'})+d_P(v_{b'},v_{b'-1})/2>|C(i,j)|/2$.
Let $a'=\max\{i',i''\}$. We have the following lemma.

\begin{lemma}
$a'\leq a< b\leq b'$.
$a$ is the index $k\in [a',b'-1]$ such that $d_P(v_k,v_{k+1})$ is the largest.
\end{lemma}
\begin{proof}
%Note that $d_G(c,v_1)=d_P(c,v_j)+|v_iv_j|+d_P(v_1,v_i)$,
%	$d_G(c,v_a)=d_P(c,v_j)+|v_jv_i|+d_P(v_a,v_i)$, and
%	$d_G(c,v_n)=d_P(c,v_n)=d_P(c,v_j)+d_P(v_j,v_n)$.
	Since $d_P(v_1,v_i)\leq d_P(v_i,v_a)$ and $d_P(v_j,v_n)\leq |v_iv_j|+d_P(v_i,v_a)$,
we obtain that $a\geq i'$ and $a\geq i''$, and thus $a\geq a'$.

	On the other hand, since $c$ is in $P(v_b,v_j)\setminus\{v_j\}$, $d_P(v_j,v_b)>d_P(c,v_b)=r=(|C(i,j)|-d_P(v_b,v_{b-1}))/2$. Hence,
	$d_P(v_{j},v_b)+d_P(v_{b},v_{b-1})/2 >|C(i,j)|/2$.
	Therefore, $b\leq b'$.

The above proves that $a'\leq a<b\leq b'$.

For the second statement of the lemma, assume to the contrary that $d_P(v_a,v_{a+1})<d_P(v_k,v_{k+1})$ for some $k\neq a$ and $k\in [a',b'-1]$.
Then, let $c'$ be the point in $C(i,j)$ whose distance from the mid point of $P(v_k,v_{k+1})$ is $|C(i,j)|/2$. Let	$r'=(|C(i,j)|-d_P(v_k,v_{k+1}))/2$, which is smaller than $r$ due to $d_P(v_a,v_{a+1})<d_P(v_k,v_{k+1})$.
	Since $k\in [a',b'-1]$, one can verify that $c'$ is in
	$P(v_{k+1},v_j)\setminus\{v_j\}$ and $d_P(v_1,v_i)+|v_iv_j|+d_P(v_j,c)\leq r'$ and $d_P(c',v_n)\leq r'$. Therefore, the distances from $c'$ to all vertices in $G(i,j)$ are
	smaller than $r$, which incurs contradiction. \qed
\end{proof}

Correspondingly, our algorithm works as follows. We first compute $a'$
and $b'$  in $O(\log n)$ time by binary search. If one of $a'$ and $b'$ does not exist, or $a'\geq b'$, then
we stop the algorithm. Otherwise, we compute the index $k\in
[a',b'-1]$ such that $d_P(v_k,v_{k+1})$ is the largest, which can be
done in $O(\log n)$ time by using a range-maxima query as in Case 3.1. Finally, we compute $r'=(|C(i,j)|-d_P(v_k,v_{k+1}))/2$ as the radius for our candidate solution (the center can be determined on $P(v_{k+1},v_j)$ accordingly).
One can verify that our solution is feasible.

\paragraph{Summary.} The above gives our algorithms for all
possible configurations. For each configuration, we compute a candidate solution (a radius $r'$ and a center $c'$) in $O(\log n)$ time, such that if the configuration has an optimal solution then our solution is also optimal. On the other hand, each candidate solution is feasible in the sense that the distances from $c'$ to all vertices in $G(i,j)$ are at most $r'$. Among all (a constant number of) candidate solutions, we return the one with the smallest radius. The running time is $O(\log n)$.
%Thus, we have the following theorem.

\begin{theorem}
	After $O(n)$ time preprocessing, given any query index pair
	$(i,j)$, we can compute the radius and the center of the augmented
	graph $P\cup\{e(v_i,v_j)\}$ in $O(\log n)$ time.
\end{theorem}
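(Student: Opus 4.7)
The plan is to mirror the case analysis developed in Section~\ref{sec:algo} for the optimization problem, but now applied to a single fixed query pair $(i,j)$, so that each case contributes $O(\log n)$ query time rather than $O(n)$ amortized work. First I would set up the two preprocessing structures: the standard prefix-sum table over the edge lengths of $P$ so that $d_P(v_s,v_t)$ is available in $O(1)$ time, and the balanced binary tree described in the range-maxima paragraph so that, given any interval $[s,t]$, the edge $e(v_k,v_{k+1})$ with maximum length among $k\in[s,t]$ can be found in $O(\log n)$ time. Both structures are built in $O(n)$ time.

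Next I would fix the query $(i,j)$, let $G=G(i,j)$ with center $c$, radius $r$, and farthest vertices $v_a,v_b$ as in Observation~\ref{obser:10}, and branch on the same configurations enumerated for the optimization algorithm: whether $c$ lies on $P(v_1,v_i)$, on $P(v_j,v_n)$, or on $C(i,j)\setminus\{v_i,v_j\}$; whether $a=1$, $b=n$, or both lie in $[i,j]$; and whether $e(v_i,v_j)\in\pi$ and whether $c\in e(v_i,v_j)$. For each of the constant number of resulting cases, I would produce a closed-form candidate $(r',c')$ from data computable in $O(\log n)$ time. The two basic subroutines are: (a) computing $\alpha(i,j)=\max\{\beta(i,j),\gamma(i,j)\}$, where $\gamma$ is a formula and $\beta$ reduces to finding the antipode of $v_i$ on the cycle $C(i,j)$ by a single binary search; and (b) locating, by binary search on the prefix sums of $d_P$, the index $i'$ (resp.\ $j'$) that separates regions where some linear inequality in $d_P$-values flips sign, e.g.\ the smallest $k$ with $d_P(v_1,v_i)<d_P(v_i,v_k)$.

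For each configuration the needed indices are exactly those identified in the proof sketches given just above the statement (e.g.\ the indices $i',j',i'',j''$ and their joins $a'=\max\{i',i''\}$, $b'=\min\{j',j''\}$ in Cases 3.1 and 3.2), together with a single range-maxima query on the interval $[a',b'-1]$ to locate the widest edge that must separate $v_a$ from $v_b$ on the cycle. The candidate radius is then a simple arithmetic expression in $d_P$-values and $|v_iv_j|$, and the candidate center is pinpointed on the designated sub-arc in $O(\log n)$ time by one more binary search over prefix sums. After computing candidates for all cases I would return the feasible one with smallest radius; correctness follows because the actual optimum falls into one of the enumerated configurations and the corresponding candidate matches it, exactly as argued in the optimization algorithm.

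The main obstacle is not algorithmic novelty but bookkeeping: I must re-examine each configuration to confirm that every quantity needed to identify the center can indeed be extracted by $O(1)$ arithmetic plus $O(1)$ binary searches and a single range-maxima query, and that the feasibility test (distances from $c'$ to all vertices bounded by $r'$) is implicitly enforced by the defining inequalities of the indices $i',j',i'',j''$. The only non-routine subroutine is the computation of $\beta(i,j)$, for which I must verify that binary searching for the cycle antipode of $v_i$ yields $\beta(i,j)$ via $\max\{d_P(v_i,v_k),|v_iv_j|+d_P(v_{k+1},v_j)\}$ when the antipode sits inside edge $e(v_k,v_{k+1})$, a fact following directly from the unimodal shape of $d_{C(i,j)}(v_i,\cdot)$ already exploited in Lemma~\ref{lem:30}.
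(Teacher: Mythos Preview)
Your proposal is correct and follows essentially the same approach as the paper: the same $O(n)$ preprocessing (prefix sums plus a range-maxima tree), the same per-query case analysis mirroring Section~\ref{sec:algo}, and the same $O(\log n)$ subroutines (binary searches for the separating indices, the cycle-antipode computation for $\beta(i,j)$, and a range-maxima query for Cases~3.1/3.2). Even the formula you give for $\beta(i,j)$ when the antipode lies in $e(v_k,v_{k+1})$ and the definitions of $a',b'$ coincide with the paper's treatment.
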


%\section*{Acknowledgment}

%The authors would like to thank...
%

%%%%%%%%%%%%
\bibliographystyle{plain}
\bibliography{reference}

%%%%%%%%%%%%%%%%%%%%%%%%%%%%%%%%%%%%%%%%%%%%%%%%%%%%%%%%%%%%%
%% APPENDICES
%%%%%%%%%%%%%%%%%%%%%%%%%%%%%%%%%%%%%%%%%%%%%%%%%%%%%%%%%%%%%
%\appendix

%\newpage
%\appendix
%\section*{APPENDIX}

%\vspace{0.1in}
%\noindent
%{\bf Lemma \ref{lem:10}.}
%{\em
%For any $i\in [1,n-1]$, $I_i(\alpha)\geq I_{i+1}(\alpha)$,
%$I_i(\beta)\geq I_{i+1}(\beta)$,
%$I_i(\gamma)\leq I_{i+1}(\gamma)$, and $I_i(\delta)\leq I_{i+1}(\delta)$  (e.g., see Fig.~\ref{fig:shift}).
%}
%\vspace{0.06in}

%\begin{figure}[t]
%\begin{minipage}[t]{\textwidth}
%\begin{center}
%\includegraphics[height=1.2in]{computeIbeta.eps}
%\caption{\footnotesize Illustrating the path (the dotted curve) from $v_n$ to $v_{k-1}$
%using the edge $e(i,j)$.}
%\label{fig:computeIbeta}
%\end{center}
%\end{minipage}
%\vspace{-0.15in}
%\end{figure}

\end{document}